\let\E\undefined
\let\E\undefined
\pgfplotsset{compat=1.16}
\theoremstyle{definition}
\newtheorem{theorem}{Theorem}[section]
\newtheorem{lemma}[theorem]{Lemma}
\newtheorem{proposition}[theorem]{Proposition}
\newtheorem{definition}[theorem]{Definition}
\newtheorem{corollary}[theorem]{Corollary}
\newtheorem{conjecture}[theorem]{Conjecture}
\DeclareMathOperator*{\E}{\mathbb{E}}
\newcommand{\Maj}{\textsc{Maj}}
\renewcommand{\epsilon}{\varepsilon}
\newcommand{\newauthor}[3]{
    \newcounter{#1comment}
    \setcounter{#1comment}{1}
    \expandafter\newcommand\csname #1comment\endcsname[1]{%
            \par\noindent
            \todo[inline, size = \small, backgroundcolor = {#3}, caption = {}]{
                \arabic{#1comment}:
                {##1} --~\textbf{#2}
            }
            \addtocounter{#1comment}{1}
    }
    \expandafter\newcommand\csname #1changed\endcsname[1]{%
        \ifdraft
            \colorbox{#3}{
                ##1
            }%
        \else
            ##1%
        \fi
    }
}
\title{Sampling and Certifying Symmetric Functions}
\newif\ifanonymous
\author[1]{Yuval Filmus}
\author[2]{Itai Leigh}
\author[3]{Artur Riazanov}
\author[4]{Dmitry Sokolov}
\affil[1]{Technion --- Israel Institute of Technology, yuvalfi@cs.technion.ac.il}
\affil[2]{Tel-Aviv University; University of Copenhagen; itai.leigh@mail.huji.ac.il}
\affil[3]{EPFL, tunyash@gmail.com}
\affil[4]{EPFL, sokolov.dmt@gmail.com}
\date{May 2022}
\begin{document}

\maketitle
\begin{abstract}
    A circuit $\mathcal{C}$ samples a distribution $\bm{X}$ with an error $\epsilon$ if the statistical
    distance between the output of $\mathcal{C}$ on the uniform input and $\bm{X}$ is $\epsilon$. We
    study the hardness of sampling a uniform distribution over the set of $n$-bit strings of Hamming
    weight $k$ denoted by $\bm{U}^n_k$ for \emph{decision forests}, i.e.\ every output bit is computed as
    a decision tree of the inputs. For every $k$ there is an $O(\log n)$-depth decision forest sampling
    $\bm{U}^n_k$ with an inverse-polynomial error \cite{Viola12,Czu15}. We show that for every $\epsilon
    > 0$ there exists $\tau$ such that for decision depth $\tau \log (n/k) / \log \log (n/k)$, the error
    for sampling $\bm{U}_k^n$ is at least $1-\epsilon$.  Our result is based on the recent robust
    sunflower lemma \cite{ALWZ21,Rao19}.

    Our second result is about matching a set of $n$-bit strings with the image of a $d$-\emph{local}
    circuit, i.e.\ such that each output bit depends on at most $d$ input bits. We study the set of all
    $n$-bit strings whose Hamming weight is at least $n/2$. We improve the previously known locality
    lower bound from $\Omega(\log^* n)$ \cite{BDKMSSTV13} to $\Omega(\sqrt{\log n})$, leaving only a
    quartic gap from the best upper bound of $O(\log^2 n)$. 
\end{abstract}

\section{Introduction}
Studying the hardness of sampling has been proposed in the paper \cite{Viola12}, which spurred an active
line of research ~\cite{Viola12Turing,LV11,BIL12,Viola14,Viola16,Viola20,GW20,Vio21,BDFIKS22,CGZ22}. The
basic setting is the following: we are given an infinite supply of independent uniform random bits as
input, and our goal is to design a circuit with multiple output bits whose output is close in statistical
distance to a given target distribution. Sampling circuit constructions have been applied in
cryptography~\cite{IN96,BIVW16} and algorithms~\cite{Hag91}. Sampling hardness results\footnote{That is,
    showing that any circuit from a certain class produces a distribution that is far from the target.}
have been instrumental in inspiring and improving two-source extractor
constructions~\cite{Viola14,CZ16,Coh16} (see \cite{Viola20} for an extended discussion), and have yielded
lower bounds on succinct data structures~\cite{Viola12,Viola20,Vio21}.

Sampling hardness results are more challenging than computational ones. For example, while it is known since
Smolensky's classical work~\cite{Smo87} that parity requires an exponential number of gates to compute by
an $\AC^0[3]$ circuit, no hard distributions are known for the circuit class $\AC^0[p]$ for any $p$, and
while $\AC^0$ requires exponentially many gates to compute parity~\cite{Has87}, a random vector
with parity $0$ can be sampled by an $\NC^0$ circuit.  
Moreover, this distribution can be sampled by a
$2$-local circuit, in the sense that each output bit depends only on two input bits \cite{Bab87,Kil88}.
A very simple mapping achieves this: 
\( (x_1, \dots, x_n) \mapsto (x_1, x_1 \oplus x_2, x_2 \oplus x_3, \dots, x_{n-1} \oplus x_{n}, x_n). \)
A more general and striking fact is that $\AC^0$ can sample random permutations and consequently all
distributions of form $(\bm{X}, f(\bm{X}))$ where $\bm{X}$ is uniform over $\{0,1\}^n$ and $f$ is
\emph{symmetric} i.e.\ its value depends only on the Hamming weight of the input~\cite{Viola12}. 

We conjecture that the power of $\NC^0$ in regard to sampling symmetric distributions is in essence
limited to the parity example above. Observe that a function is computable by an $\NC^0$ circuit
if and only if it is \emph{$O(1)$-local} i.e.\ each of its output bits depends on at most a constant number of input bits. For simplicity, we focus only on uniform distributions with symmetric support: let $\bm{U}^n_S$ be the uniform distribution over strings in $\{0,1\}^n$ with Hamming weight in the set $S \subseteq \{0, \dots, n\}$.
\begin{conjecture}
    \label{cnj:main}
     For every $d \in \mathbb{N} ,\epsilon \in (0,1)$ for all large enough $n$, if
    $\bm{X}$ is samplable by a $d$-local function and is $\epsilon$-close to $\bm{U}^n_S$ for some $S
    \subseteq \{0, \dots, n\}$, then $\bm{X}$ is $O(\epsilon)$-close to $\bm{U}^n_T$, where $T$
    is one of the following: $\{0\}, \{n\}, \{0,n\}, \{0,2,4,\ldots\}, \{1,3,5,\ldots\}, [n]$.
\end{conjecture}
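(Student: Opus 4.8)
The plan is to analyze the \emph{dependency structure} of a $d$-local map $f\colon\{0,1\}^m\to\{0,1\}^n$ that samples $\bm{X}$. After the standard normalization (only inputs feeding some output matter, so $m\le dn$; each output $Y_j$ is a Boolean function of a neighborhood $N(j)$ with $|N(j)|\le d$, hence $\Pr[Y_j=1]$ is an integer multiple of $2^{-d}$), I would greedily extract a maximal family of output coordinates $j_1,\dots,j_t$ with pairwise disjoint neighborhoods, set $C=N(j_1)\cup\cdots\cup N(j_t)$ (so $|C|\le dt$), and observe that every remaining output coordinate touches $C$, while $Y_{j_1},\dots,Y_{j_t}$ are mutually independent with biases $p_i\in\{0,2^{-d},2\cdot 2^{-d},\dots,1\}$. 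This gives a dichotomy according to whether $t$ is large, and I would set it inside an induction on $d$, with base case $d=1$, where every output is a literal of a single input or a constant and the claim is immediate (the only possible $S$ being $\{0\},\{n\},\{0,n\},\{0,\dots,n\}$).

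\emph{Large disjoint family ($t\ge\delta n$, with $\delta=\delta(d)$ small).} At least $\delta n$ of the biases $p_i$ lie in $[2^{-d},1-2^{-d}]$, so the Hamming weight $|\bm{X}|$ is a sum of these independent non-degenerate bits plus an independent remainder; it has standard deviation $\Omega(\sqrt n)$ and, by a Littlewood--Offord / local-limit estimate, its distribution is smooth on a window of length $\Omega(\sqrt n)$ up to a possible restriction of its residue. Since $|\bm{X}|$ must mimic the weight of $\bm{U}^n_S$, the set $S$ is forced, after removing an $O(\epsilon)$ fraction of its mass, to be an interval, all of $\{0,\dots,n\}$, or an arithmetic progression; a finer Fourier / equidistribution analysis of $|\bm{X}|$ modulo small integers — the only obstruction of this kind a $d$-local map can manufacture being the XOR-of-two-fair-inputs gadget, which pins the residue modulo $2$ — rules out every modulus other than $2$, and combining this with the fact that the common single-bit marginal of $\bm{U}^n_S$ must lie on the grid $\{c\cdot 2^{-d}\}$, together with a comparison of pairwise correlations, leaves exactly $S\in\{\{0,\dots,n\},\{0,2,4,\dots\},\{1,3,5,\dots\}\}$.

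\emph{Small disjoint family ($t\le\delta n$).} Then $|C|\le d\delta n$ and \emph{every} output coordinate touches $C$; conditioning on the bits in $C$, for each assignment $a$ the fiber $(\bm{X}\mid C=a)$ is sampled by a $(d-1)$-local map, so the inductive hypothesis applies fiber-by-fiber: any fiber not $O(\epsilon/\Pr[a])$-close to a trivial symmetric distribution must carry negligible mass (otherwise one reruns the large-family analysis inside it), and $\bm{U}^n_S$ is $\epsilon$-close to the resulting mixture of trivial-family distributions. I would then argue that such a mixture is close to a symmetric $\bm{U}^n_S$ only when $S$ is itself trivial: each near-trivial fiber has its weight supported on at most two points or on a single parity class, so for the mixture to agree coordinatewise with a genuinely symmetric distribution one is driven back to $S\in\{\{0\},\{n\},\{0,n\},\{0,2,4,\dots\},\{1,3,5,\dots\},\{0,\dots,n\}\}$.

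The step I expect to be the real obstacle is closing the induction in the small-family regime: although each fiber is $(d-1)$-local, the family of fibers is exponentially large and the core $C$ can entangle the non-core inputs across many outputs, so it is not at all clear that a mixture of fibers individually close to trivial symmetric distributions is itself forced to be near $\bm{U}^n_T$ for a trivial $T$ — a priori a fiber could be far from \emph{every} $\bm{U}^n_{S'}$ yet contribute non-negligibly, and even when all fibers are near-trivial the bookkeeping of which mixtures stay symmetric is delicate. Overcoming this seems to require either a genuine structure theorem for arbitrary distributions samplable by $d$-local maps, or a robust-sunflower-style argument (in the spirit of the lemma of \cite{ALWZ21,Rao19} underlying our first result) that extracts a large sunflower of output coordinates and freezes all their cores \emph{simultaneously}, lowering the locality uniformly rather than fiber-by-fiber. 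That even the special case $d=2$, $S=\{n/2\}$ — ruling out every $2$-local sampler of the majority slice — already seems to demand essentially this full machinery is exactly why we state the general result only as a conjecture.
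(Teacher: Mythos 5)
This statement is \Cref{cnj:main}, a \emph{conjecture}: the paper contains no proof of it, and only establishes the special case where $\max_{x\in S}x=o(n)$ (equivalently $\min_{x\in S}x=n-o(n)$), via \Cref{thm:main} and the robust-sunflower machinery behind \Cref{thm:U1-main}. So there is no proof of the paper's to compare yours against; the only question is whether your sketch settles the general statement, and it does not --- as you yourself concede in your final paragraph.

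The gaps are exactly the ones the paper lists as open. In the large-disjoint-family branch, the anti-concentration of $|\bm{X}|$ plus the claim that ``the only modular obstruction a $d$-local map can manufacture is the XOR gadget, pinning the residue mod $2$'' is asserted, not proved: ruling out samplers for weights divisible by $a>2$ is precisely the open problem the paper raises around $M_a$ and in the Further Research list, and no Fourier/equidistribution argument for $d$-local maps is known that excludes those moduli. Likewise, nothing in that branch excludes $S$ being a nontrivial interval or a single linear slice $\{\alpha n\}$; the paper states that even a locality lower bound for linear Hamming weight in the constant-error regime is open, so your reduction of the large-$t$ case to ``$S$ must be an AP or $[n]$'' is not available. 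The grid observation (biases are multiples of $2^{-d}$) also buys nothing, since $\bm{X}$ only matches $\bm{U}^n_S$ up to a constant $\epsilon$, so its marginals are not forced onto any grid. In the small-family branch the induction does not close, for the reason the paper's warm-up already makes explicit: conditioning on the $|C|\le d\delta n$ core bits produces up to $2^{|C|}$ fibers, and a per-fiber guarantee of the form ``$O(\epsilon/\Pr[a])$-close or negligible mass'' is vacuous as soon as $\Pr[a]\le\epsilon$; the mixture argument (Viola's Corollary~18, used in the paper's $\log\log n$ warm-up) only survives this blow-up when the per-fiber distance bound is $1-\epsilon'$ with $\epsilon'$ exponentially small in $|C|$, which an $O(\epsilon)$-closeness inductive hypothesis cannot supply. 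So the proposal is a program whose two branches each presuppose statements at least as hard as the conjecture's open cases; the honest status is the one the paper gives: only the sublinear-weight portion of \Cref{cnj:main} is proved.
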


Our results on sampling slices (namely, \Cref{thm:main}) imply this conjecture for all sets $S$ which
only contain small values, in the sense that $\max_{x \in S} x = o(n)$. 

\paragraph{Quantum separations.} A stronger version of \Cref{cnj:main} would identify the family of sets
$S$ such that every $\NC^0$-samplable distribution is $1-o(1)$-far from $\bm{U}_S^n$. There exists a set
$S$ for which this implies a separation between $\NC^0$ and $\QNC^0$ for sampling. This is due to the
recent partial separation in \cite{BPP23}: they show that there exists a symmetric function $f$ such that
$(\bm{X}, f(\bm{X}))$ for uniform $\bm{X} \sim \{0,1\}^n$ can be sampled by a $\QNC^0$ circuit. 
%
Observe, however, that if an $\NC^0$-samplable distribution $\bm{Y}$ is at distance $\eta$ from $(\bm{X},
f(\bm{X}))$, then the first $n$ bits of $\bm{Y}$ are $(1/2+\eta+o(1))$-close to the uniform distribution
over $f^{-1}(0)$, due to the fact that the function $f$ used in~\cite{BPP23} is almost balanced (in the
sense that $|f^{-1}(0)| = (1+o(1)) \cdot |f^{-1}(1)|$). Now, if the uniform distribution over $f^{-1}(0)$
is not $\NC^0$-samplable within the distance $1-\Omega(1)$, we get the separation. \Cref{cnj:main}
implies a weaker lower bound for the distance: a constant instead of a function approaching $1$, but most
of the known lower bounds for distribution sampling have very strong distance guarantees. 

\paragraph{Local certificates.} One interesting class of sets which is also not covered by our and prior
results is $M_a = \{x \in \{0, \dots, n\} \mid x \bmod a = 0\}$, where $a$ is a constant. However, the
simple construction for sampling parity-$0$ vectors can be adapted to \emph{match the support} of any
$\bm{U}_{M_a}^n$, i.e.\ generate a (not necessarily uniform) distribution whose support is exactly the
set of strings with Hamming weight in $M_a$. This construction was given in
\cite[Proposition~3.1]{BDKMSSTV13}, where it was presented as a \emph{proof system}. The idea is to
interpret the input bits as a \emph{certificate} that the output is in the target language (in this case,
all $n$-bits strings whose Hamming weight is divisible by $a$). This connection motivates our study of
locality in the context of proof systems. We drastically simplify and improve the locality lower bound
of~\cite{KLMS16} for the language of $n$-bit strings whose Hamming weight is at least $n/2$ (in other
words, $1$-inputs of the majority function).

\subsection{Sampling Slices}
 The $k$-\emph{slice} is the set of all $n$-bit strings of Hamming weight $k$. We denote the uniform distribution over the $k$-slice by $\bm{U}_k^n$. A simple computation (see \Cref{sec:moreover-of-thm-main}) shows that $\bm{U}_S^n$ and $\bm{U}_{\max S}^n$ are close in statistical distance whenever $\max S = o(n)$. This means that in order to show the hardness of sampling from symmetric distributions over sublinear-Hamming-weight strings it is sufficient to study $\bm{U}_k^n$ for $k=o(n)$. 

Although in the context of \Cref{cnj:main} it is sufficient to lower bound the locality of a sampler, in
the context of slices the more natural complexity measure is \emph{decision depth}. A function $f\colon
\{0,1\}^m \to \{0,1\}^n$ is computable by a \emph{decision forest} of depth $d$ if every output bit of
$f$ can be computed by a decision tree of depth $d$, i.e.\ with at most $d$ \emph{adaptive} queries to
the input (in contrast, a $d$-local function is computed by $d$ \emph{non-adaptive}
queries). Viola~\cite[Lemma~6.4]{Viola12} shows that a decision depth $d$ sampler for $\bm{U}_k^n$ can be
obtained from a depth-$d$ \emph{switching network}. Czumaj~\cite[Theorem~3.7]{Czu15} proves the existence
of such switching networks with $d=O(\log n)$. The following theorem shows that for $k=o(n)$, this
construction is almost tight.

\begin{restatable}{theorem}{MainThm}
    \label{thm:main}   

    Suppose that $\bm{U}_k^n$ can be sampled with decision depth $d$ and error $\eta$ in variation
    distance.
    \begin{enumerate}
        \item \label{item:sublinear} For every $\epsilon > 0$ there exists a constant $\tau$ such that $d \le \tau \log (n/k) / \log\log (n/k)$ implies $\eta \ge 1 - \epsilon$.
        \item \label{item:subpoly} There exists a constant $\tau$ for which the following holds. For every $\epsilon \in (0,1)$, if $k \in [\log_2 n,
            2^{\log^{1-\epsilon} n}]$ and $d \le \tau \log^{\epsilon} n$, then $\eta = 1-n^{-\Omega(k)}$. The same bound holds for $k \in [1, \log_2 n)$ and $d \le \tau \log^{\epsilon} n / \log \log n$.
    \end{enumerate}
    Moreover, \Cref{item:sublinear} holds for any $\bm{U}_S^n$ with $\max_{x \in S} x = k$ or
    $\min_{x \in S} x = n-k$.
\end{restatable}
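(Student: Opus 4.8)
The plan is to prove the contrapositive: assuming that the output distribution $\bm{X}$ of a depth-$d$ decision forest $f = (T_1, \dots, T_n)$ on uniform input is $\eta$-close to $\bm{U}^n_k$, with $d$ below the stated threshold, I want $\eta \ge 1 - \epsilon$, and the sharper conclusions of \Cref{item:subpoly} under its hypotheses. Since $\bm{U}^n_k$ is uniform over all $\binom{n}{k}$ weight-$k$ strings, it suffices to produce a set $A$ of weight-$k$ strings with $|A| \ge (1-\epsilon)\binom{n}{k}$ on which $\bm{X}$ places mass at most $\epsilon$: the test set $A$ then forces $\eta \ge 1-2\epsilon$. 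I obtain $A$ from a dichotomy. Either (i) $\Pr_u[\,|f(u)| = k\,] \le \epsilon$, in which case $A = \binom{[n]}{k}$ works; or (ii) with probability $\ge 1-\epsilon$ the string $f(u)$ lies inside a subcube of $\{0,1\}^n$ of dimension $O(k\log(n/k))$ (most output bits frozen to constants), in which case $A$ is $\binom{[n]}{k}$ with the few weight-$k$ strings of that subcube removed. So the task is to show that every depth-$d$ forest below the threshold falls into case (i) or case (ii).

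This is where the robust sunflower lemma~\cite{ALWZ21,Rao19} enters. Each output bit $b_i$ is computed by a depth-$\le d$ decision tree, hence depends on a set $V_i$ of at most $2^d$ variables and on any input is certified by a width-$\le d$ partial assignment (its root-to-leaf path). I first move to a subcube by repeatedly fixing an input variable currently read by many trees, each time to the value that keeps $\Pr_u[\,|f(u)| = k \mid \text{restriction}\,]$ largest (one of the two values never decreases it, by averaging); this terminates because each fixing deletes many tree--variable incidences from a pool of size at most $n 2^d$. On the resulting subcube, the $1$-certificates of the surviving non-constant output bits form a large family of width-$\le d$ sets, and the robust-sunflower threshold for width $d$ is $(\log d)^{O(d)}$, which for $d \le \tau \log(n/k)/\log\log(n/k)$ is only $(n/k)^{o(1)}$; thus for a sufficiently small constant $\tau = \tau(\epsilon)$ the number of surviving output bits exceeds this threshold with room to spare, and we may extract a large robust sunflower. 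Installing its core as a further restriction and invoking robustness, we reach a subcube $Q$ on which a family $I$ of output bits are simultaneously non-constant and depend on pairwise-disjoint free variables, hence are mutually independent on $Q$. If no large such $I$ can be extracted, it is because the successive restrictions froze almost all output bits, which is exactly case (ii).

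Given $I$ on $Q$, write $|f(u)| = \sum_{i \in I} b_i(u) + \sum_{i \notin I} b_i(u)$, an independent sum on $Q$. Each $b_i$ with $i \in I$ is a non-constant depth-$\le d$ tree, so $\Pr[b_i = 1] \in [2^{-d}, 1-2^{-d}]$ and $\operatorname{Var}\!\big(\sum_{i \in I} b_i\big) \ge |I| 2^{-d-1}$; by standard anti-concentration for sums of independent non-constant $\{0,1\}$ variables, every atom of $|f(u)|$ on $Q$ is then $O(2^{d/2}/\sqrt{|I|})$, which is $< \epsilon$ once $|I| \gg 2^d/\epsilon^2$ — contradicting, after averaging over the fixed coordinates, that we are not already in case (i). For \Cref{item:subpoly} the dichotomy is made quantitative: in the anti-concentration branch, each $b_i$ ($i \in I$) is $1$ with probability $\ge 2^{-d} = n^{-o(1)}$ and $|I|$ is a large fraction of $n$, so $\sum_{i \in I} b_i$ has mean $\gg k$ and the Chernoff lower tail gives $\Pr_u[\,|f(u)| = k\,] \le e^{-\Omega(|I| 2^{-d})} = n^{-\Omega(k)}$; and an $o(k\log(n/k))$-dimensional subcube contains an $n^{-\Omega(k)}$ fraction of the weight-$k$ strings, handling the trap branch. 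The "moreover" clause reduces to the slice case: for $\max_{x \in S} x = k = o(n)$ the distribution $\bm{U}^n_S$ is statistically close to $\bm{U}^n_k$ (as noted before the theorem), $\min_{x\in S} x = n - k$ follows by complementing every string, and $k = \Omega(n)$ makes the claimed bound $O_\epsilon(1)$ and is handled by hand. \textbf{The main obstacle} is exactly the sunflower step: the naive substitutes — classic Erd\H{o}s--Rado sunflowers, or colouring the variable-conflict graph — only make a $\approx 2^{-d}$ fraction of the output bits independent, for which $2^{d/2}/\sqrt{|I|}$ fails to beat $\epsilon$ when $k$ is small; producing a near-linear-sized family of output bits that are independent \emph{and still non-constant after fixing only the sunflower core} (in particular, controlling how many bits each core-restriction trivializes) is the crux of the argument.
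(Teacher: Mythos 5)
There is a genuine gap, and it sits exactly where you flag ``the crux'': the step that turns a robust sunflower over the width-$\le d$ $1$-certificates into a large family $I$ of output bits that are \emph{mutually independent} on the subcube $Q$. A depth-$d$ tree $b_i$ depends on up to $2^d$ variables, while the sunflower only controls the chosen certificate sets of size $\le d$. Making those certificates disjoint outside the kernel does nothing to decouple the trees themselves, which may share arbitrarily many variables outside their certificates; so the output bits in your family are not independent, the variance lower bound $\mathrm{Var}(\sum_{i\in I} b_i)\ge |I|2^{-d-1}$ does not follow, and the Littlewood--Offord-style atom bound $O(2^{d/2}/\sqrt{|I|})$ collapses. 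If instead you run the sunflower argument on the full dependency sets $V_i$ (size up to $2^d$), the sunflower threshold becomes $d^{O(2^d)}$ and you are back to the $\Omega(\log\log n)$ warm-up regime, not $\Omega(\log(n/k)/\log\log(n/k))$. Secondary steps are also not in place -- the greedy variable-fixing phase (why it terminates with few fixings and why failure to extract $I$ forces ``most output bits frozen'' into a single low-dimensional subcube) and the ``averaging over the fixed coordinates'' back to case~(i) are asserted rather than proved -- but the independence step is the one that cannot be repaired as stated.

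It is worth contrasting this with how the paper sidesteps the issue: it never needs independence of output bits, and it never needs anti-concentration of $|f(u)|$. It first reduces to $k=1$ (\Cref{thm:U1-main}), respectively to the marginal of the first $\Theta(n/k)$ output bits (\Cref{thm:general-sunflowers}), identifies ``good'' output bits $i$ with $\Pr[\bm{X}=e_i]\ge 1/n^2$ (resp.\ $\Pr[\bm{X}_i=1\wedge|\bm{X}|\le t]\ge 1/n^2$), picks for each the likeliest DNF term, finds one assignment $\rho$ agreeing with many of these terms, and applies the robust sunflower lemma to those width-$\le d$ term supports only. The punchline is monotonicity, not independence: conditioned on the kernel agreeing with $\rho$, the robust-sunflower property says some \emph{other} chosen term is satisfied with probability $1-2\epsilon$ (\Cref{lem:robust-sunflower-petal-removing}, \Cref{lem:k-petals-in-robust-sunflower}), and conditioning further on a fixed petal agreeing with $\rho$ can only help by FKG (\Cref{thm:FKG}); this forces another output bit to $1$ and contradicts the goodness of the petal's bit. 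Your route (a direct dichotomy for the $k$-slice plus anti-concentration) is genuinely different in structure and would be attractive if the independent family existed, but producing it is an unsolved obstacle in your write-up, whereas the paper's term-level sunflower-plus-FKG argument avoids the need for it entirely.
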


The first key observation in our proof of \Cref{thm:main} is that it is sufficient to prove it for $k=1$,
namely:
\begin{restatable}{theorem}{MainThmUone}
    \label{thm:U1-main}
    There exists a constant $\tau > 0$ such that any distribution sampled with decision depth $\tau \log
    n/\log \log n$ is $(1-n^{-\Omega(1)})$-far from $\bm{U}^n_1$.
\end{restatable}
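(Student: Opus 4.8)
The plan is to prove the statement directly for $\bm{U}^n_1$. Write the sampled distribution as $\bm{X}=(f_1(\bm{y}),\dots,f_n(\bm{y}))$, where $\bm{y}$ is a uniformly random input and each $f_j$ is computed by a decision tree $T_j$ of depth $d\le\tau\log n/\log\log n$; let $e_j\in\{0,1\}^n$ be the weight-one string whose single $1$ sits in coordinate $j$, and put $q_j=\Pr[\bm{X}=e_j]$. Since $\bm{U}^n_1$ is uniform on $\{e_1,\dots,e_n\}$, the variation distance equals $1-\sum_{j=1}^n\min(q_j,1/n)$, so it suffices to show $\sum_j\min(q_j,1/n)\le 2n^{-1/2}$. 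Splitting the sum at $n^{-3/2}$, this reduces to proving that the set of \emph{heavy} coordinates $H=\{\,j:q_j\ge n^{-3/2}\,\}$ satisfies $|H|\le\sqrt n$: then $\sum_{j\in H}\min(q_j,1/n)\le|H|/n\le n^{-1/2}$ and $\sum_{j\notin H}\min(q_j,1/n)\le n\cdot n^{-3/2}=n^{-1/2}$.

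Fix a heavy $j$. The event $\{\bm{X}=e_j\}$ lies inside $f_j^{-1}(1)$, which is the disjoint union of the at most $2^d$ subcubes $[\rho]$ indexed by the root-to-one-leaf paths $\rho$ of $T_j$; here $\rho$ is a partial assignment with $|\rho|\le d$ and $[\rho]$ is the set of inputs consistent with it. By averaging, choose one such $\rho_j$ with $\Pr[\bm{y}\in[\rho_j]\text{ and }\bm{X}=e_j]\ge q_j/2^d$. On $[\rho_j]$ we have $f_j\equiv1$, and since the last probability is positive there is an input in $[\rho_j]$ on which every other $f_i$ vanishes, so $[\rho_j]\not\subseteq f_i^{-1}(1)$ for all $i\ne j$; in particular $\rho_j$ cannot be a one-leaf of any $T_i$ with $i\ne j$. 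Consequently $j\mapsto\rho_j$ is injective on $H$, and $\mathcal{F}=\{\rho_j:j\in H\}$ is a family of $|H|$ distinct partial assignments, each of size at most $d$.

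Assume for contradiction that $|H|>\sqrt n$. Viewing each partial assignment as its set of $(\text{variable},\text{value})$ pairs, apply the robust sunflower lemma of \cite{ALWZ21,Rao19} to $\mathcal{F}$. Because the sets have size at most $d$ with $d\le\tau\log n/\log\log n$, the number of sets that forces a robust sunflower of the required spread is $(\operatorname{polylog} n)^{d}=n^{O(\tau)}$, which is below $\sqrt n$ once $\tau$ is a small enough absolute constant; hence the lemma yields a common core $\sigma$ (a partial assignment of size $\le d$) and a sub-family $\{\rho_j\}_{j\in J}$ with $J\subseteq H$, all extending $\sigma$, whose petals $\{\rho_j\setminus\sigma\}_{j\in J}$ are highly spread and still number $n^{\Omega(1)}$ (passing to the core costs only a factor $(\operatorname{polylog} n)^{d}=n^{O(\tau)}$). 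The covering consequence of the robust sunflower lemma now applies to this competition: for every $j\in J$, conditioned on $\bm{y}\in[\rho_j]$, a uniform $\bm{y}$ extends some other $\rho_i$ with $i\in J\setminus\{j\}$ and $\rho_i$ consistent with $\rho_j$, with probability at least $1-n^{-2}$ — and whenever it does, $f_i(\bm{y})=1$ for some $i\ne j$, so $\bm{X}\ne e_j$. Hence $\Pr[\bm{y}\in[\rho_j]\text{ and }\bm{X}=e_j]\le n^{-2}$; combined with $q_j/2^d\le\Pr[\bm{y}\in[\rho_j]\text{ and }\bm{X}=e_j]$ this gives $q_j\le 2^d\cdot n^{-2}=n^{-2+o(1)}<n^{-3/2}$, contradicting $j\in J\subseteq H$. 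Therefore $|H|\le\sqrt n$, and $d_{\mathrm{TV}}(\bm{X},\bm{U}^n_1)\ge 1-2n^{-1/2}=1-n^{-\Omega(1)}$.

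The hard part is the robust sunflower step. One needs the lemma in the form relevant here — a large family of width-$d$ subcubes, probed by a \emph{uniformly random} input rather than by a sparse random set — and, crucially, with a covering bound strong enough that the input ``loses the competition'' with probability $1-n^{-\Omega(1)}$ rather than just $1-o(1)$; one must also keep the petal family large ($n^{\Omega(1)}$) after extracting the core. Tracking the interplay between the spread parameter, the covering error, and the sunflower-free threshold is exactly what pins the decision depth at $\Theta(\log n/\log\log n)$, since $(\operatorname{polylog} n)^{\Theta(\log n/\log\log n)}=n^{\Theta(1)}$. By comparison, the bookkeeping in the other steps — injectivity of $j\mapsto\rho_j$, and the fact that conditioning on $[\rho_j]$ destroys only an $o(1)$ fraction of the spread family — is routine.
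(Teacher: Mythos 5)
Your overall scaffolding (heavy coordinates, choosing for each heavy $j$ a likely accepting leaf $\rho_j$, injectivity of $j\mapsto\rho_j$, then a robust-sunflower argument showing heavy coordinates cannot be heavy) matches the paper's strategy, but the step you yourself flag as the ``hard part'' is a genuine gap, and it is exactly where the paper needs two ideas you omit. First, \Cref{thm:improved-sunflower} is a statement about set systems probed by a \emph{product} random set $\bm{R}$; the family $\{\rho_j\}$ of partial assignments, viewed as sets of (variable,value) pairs and probed by a uniformly random assignment, is not in this setting (the pairs $(v,0)$ and $(v,1)$ are perfectly anti-correlated), so the covering guarantee does not apply as stated. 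Second, and more fatally, you invoke the covering property \emph{conditioned on the whole petal} $[\rho_j]$ rather than on the core $\sigma$. That strengthening is not a consequence of the sunflower property when petals may conflict: e.g.\ if $\rho_1$ sets $y_v=0$ while every other petal in the extracted sub-family requires $y_v=1$ (with $v$ outside the core, which is possible since $(v,0)\ne(v,1)$ as universe elements), then conditioned on $[\rho_1]$ the probability that some \emph{other} petal is satisfied is $0$, not $1-n^{-2}$, and your conclusion $q_1\le 2^d n^{-2}$ collapses. Nothing in your construction prevents the chosen leaves of different heavy coordinates from assigning conflicting values.

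The paper closes exactly this hole with a monotonization/averaging step that you skip: since each chosen term is satisfied with probability $2^{-|\rho_j|}\ge 2^{-d}$, there is a single global assignment $\rho$ agreeing with the chosen assignments of at least a $2^{-d}$ fraction of the heavy (``good'') coordinates; one then keeps only those coordinates, so all surviving terms are sub-assignments of one $\rho$. The sunflower is then taken over the \emph{variable sets} $N_i^{\max}$, and the random object is the agreement set $\bm{U}=\{k:\bm{Y}_k=\rho_k\}$, which \emph{is} a product random set of density $1/2$ — this fixes the first issue. Because all events ``$\bm{U}\supseteq N_i^{\max}$'' are now monotone in a common product space, conditioning on the full petal instead of the kernel can only increase the coverage probability by the FKG inequality (\Cref{thm:FKG}) — this fixes the second issue; and excluding the petal itself from the competition is handled by \Cref{lem:robust-sunflower-petal-removing}, which you also gloss over. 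The $2^{-d}$ loss from the averaging step is affordable and is the reason the paper's threshold ($|T|\ge d(4B\log(d/\epsilon))^d$ versus $|G|2^{-d}$) looks the way it does. Without these ingredients your argument does not go through; with them it becomes essentially the proof of \Cref{thm:tildelog-U1-general}.
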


To see why this implies \Cref{item:sublinear}, observe that the marginal distribution of the first $n/k$
bits of $\bm{U}_k^n$ is $(1-1/e+o(1))$-close to $\bm{U}_1^{n/k}$. \Cref{thm:U1-main} then implies the
distance lower bound $1 / e - o(1)$ for sampling $\bm{U}_k^n$ with depth $\tau \log (n/k) / \log\log
(n/k)$. The $1-\epsilon$ distance lower bound for every $\epsilon$ is achieved by generalizing \Cref{thm:U1-main} so it applies to distributions of the form ``first $\Theta(n/k)$ bits of $\bm{U}_k^n$'' directly.

\Cref{item:subpoly} is implied by another reduction from \Cref{thm:U1-main}. Suppose we have a depth-$d$
sampler for $\bm{U}_k^n$. Using this sampler, we can construct a depth-$kd$ sampler for
$\bm{U}_1^{\binom{n}{k}}$ as follows. Identify each output bit with a unique subset of $[n]$ of size $k$,
and assign to it the conjunction of the corresponding $k$ bits in the output of the sampler for
$\bm{U}_k^n$. It is easy to see that the resulting distribution has the same distance to
$\bm{U}_1^{\binom{n}{k}}$ as the initial sampler has to $\bm{U}_k^n$. 

\subsubsection{Technique for Proving \texorpdfstring{\Cref{thm:U1-main}}{Theorem \ref{thm:U1-main}}}

The locality of a sampler is the maximum number of input bits that an output bit depends on. The locality
is always bounded from above by the decision depth. As a warm-up, let us discuss an $\Omega(\log \log n)$
locality lower bound for sampling $\bm{U}_1^n$ which is close in spirit to \cite[Theorem~3]{Viola20}.

The main idea in the locality lower bound is a \emph{hitting set} versus \emph{independent set}
dichotomy: for a $d$-local source, it is easy to see that either there are $\tau 2^d$ independent output
bits, or there is a hitting set of input bits of size $\tau d 2^d$ such that every output bit depends on
one of the bits in this set. In the former case, we can show that it is very likely that at least two of
the independent bits evaluate to $1$, since for each output bit its probability to be $1$ is at least
$2^{-d}$.\footnote{There is a caveat that some bits might be identically zero, but it is not a real
    issue, since there cannot be too many of them.} In the latter case, by fixing the hitting set bits in
every possible way, we observe that our source is a mixture (convex combination) of $2^{\tau d 2^d}$ many
$(d-1)$-local sources. If we show that $(d-1)$-local sources must be $(1 - \epsilon)$-far from the target
distribution, then our source is $(1 - \epsilon 2^{\tau d 2^d})$-far by
\cite[Corollary~18]{Viola20}. Picking $d = \delta \log \log n$ for small enough $\delta$ yields a
$1-o(1)$ lower bound on the distance to the target distribution.

\medskip

In order to improve this lower bound from $\log\log n$ to $\log n / \log\log n$, we introduce several new
ideas. 

\emph{Monotonization.} We observe that it is in some sense sufficient to deal with sources where each bit
is a \emph{monotone term} in the input bits. The intuitive reason is that the expected number of output
bits evaluating to $1$ is $2^{-d} \cdot n$\footnote{Again, this is not always true, since there are
    identically zero outputs.}, so there exists an assignment where that many bits evaluate to $1$. By
focusing on those bits and replacing them with terms corresponding to the satisfying assignments, we show
that it is likely that at least two of these terms evaluate to $1$.

\emph{Sunflowers.} Let us pretend that all of the output bits are monotone terms. For $i \in [n]$, let $N_i$ be the set of inputs mentioned by the term of the $i$th output bit. We find a sunflower
$\mathcal{S} \subseteq \{N_1, \dots, N_n\}$, i.e.\ there exists a kernel $K$ such that the intersection
of any pair of sets in $\mathcal{S}$ is $K$. If the kernel is fixed to $1$, the output bits in the
sunflower become independent, so if the sunflower is large enough, it is likely that at least two of them
evaluate to $1$. For some small enough $d = \Omega(\sqrt{\log n})$, a large sunflower always exists among
any $n$ sets. Moreover, we can cover all but $o(n)$ output bits with sunflowers. Then we have the
following dichotomy: either all kernels evaluate to $0$, so the source is likely to be identically zero,
or at least one kernel evaluates to $1$, which makes it very likely  that at least two output bits from
the corresponding sunflower evaluate to $1$. 

Using \emph{robust sunflowers} instead of classical sunflowers, we obtain a lower bound of
$\Omega(\log n / \log \log n)$ on the decision depth.


\subsection{Local Proof Systems}
Local proof systems, introduced in \cite{BDKMSSTV13} and further studied in \cite{KLMS16}, are defined as
follows: a local proof system for a language $L$ is an $\NC^0$-circuit family $C_n$ such that $L \cap
\{0, 1\}^n$ is exactly the set of all possible outputs of $C_n$. A language $L$ has a \emph{$d$-local}
proof system if for each $n$ there exists a $d$-local function whose image is $L \cap \{0, 1\}^n$. In
relation to the sampling, it can be viewed as follows: the ``sampler'' needs to match the support of the
given distribution exactly, but we do not care about matching the actual probabilities. 

The hardness landscape in this setting is different from the setting of sampling distributions. The most
notable difference is that the language of strings with Hamming weight divisible by $p$ always has an
$O(1)$-local proof system, while we conjecture that sampling from the uniform distribution over this
language requires a super-constant locality, unless $p = 1$ or $p = 2$.

Our main contribution is in improving the locality lower bound for another symmetric language:
$\Maj^{-1}(1) \coloneqq \{x \in \{0, 1\}^n \mid |x| \ge n / 2\}$. The previous best locality lower bound
for proof systems for this language was $\Omega(\log^* n)$ \cite{KLMS16}, with a very complex proof which
we expose in \Cref{sec:majority-log-star}. We simplify this proof and improve the locality lower bound to
$\Omega(\sqrt{\log n})$, which is only polynomially smaller than the current best upper bound of
$O(\log^2 n)$. The key idea is to consider proof systems with bounded locality of both input and output
bits. For such proof systems it is easy to derive very strong requirements on locality. These can then be
used to count the number of input-output bit pairs where the input affects the output, which yields the
output locality lower bounds.

\subsection{Switching Networks}
The technique behind \Cref{thm:main} breaks down for linear slices $\bm{U}_{\alpha n}^n$. Does it mean there is a low-depth sampler for such distributions? 
The only construction of a sampler we have is based on switching networks \cite[Theorem~3.7]{Czu15}.

A switching network of depth $d$ is a layered graph with $d$ layers and $n$ nodes in each layer. The edges of the switching network do not cross between layers, and in each layer, the edges form a partial matching. A switching network defines a process over permutations of $[n]$: we start with the identity permutation, and then, for each layer, we toss a coin for each edge in the matching of this layer, and if the coin comes up heads, we transpose the endpoints of the edge.

Currently, the best upper bound on the depth of a switching network which produces a distribution close to the uniform distribution over all permutations is $O(\log^2 n)$ \cite{Czu15}. The situation is better if the switching network only needs to shuffle sequences of zeroes and ones: the input is now $1^k 0^{n-k}$. \cite[Theorem~3.7]{Czu15} gives a $O(\log n)$ depth upper bound for this case (the construction is randomized), \cite{GT14} gives an \emph{explicit} lower bound for generating $\bm{U}_k^n$ for $k \le \sqrt{n}$. 

It is almost immediate that a switching network that samples $\bm{U}_k^n$ within a non-trivial distance must have depth $\Omega(\log (n/k))$: each input bit of a network of depth $d$ has at most $2^d$ potential positions that it can take in the output, so if $d=o(\log (n/k))$, the switching network produces a distribution where only $o(n)$ bits have a non-zero probability to have value $1$, which is $(1-o(1))$-far from $\bm{U}_k^n$.

In \Cref{sec:switching} we show that switching networks that produce a distribution close to $\bm{U}_{\alpha n}^n$ must have depth $\Omega(\log \log n)$. 
We use the following properties of samplers that are constructed from switching networks of depth $d$: the first is that each input bit of such a sampler affects at most $2^d$ output bits, the second is that the error is one-sided, i.e.\ such a sampler never outputs a string outside the domain of $\bm{U}_{\alpha n}^n$. The second property highlights the similarity with local certificates, and indeed our lower bound proof uses similar ideas.

\subsection{Further Research}

Our results on sampling slices with decision forests taken together with results of Viola \cite{Viola12} are summarized in \Cref{fig:results-summary}.
\begin{figure}[htbp]
    \centering
    \begin{tikzpicture}
        \begin{axis}[
            width = \textwidth,
            height = 20cm,
            domain = -4:0.7,
            ymin = -0.1,
            ymax = 0.5,
            samples = 100,
            axis lines = none,
            enlargelimits = false
            ]
            \def\cutone{-2}
            \def\cuttwo{-1}
            \def\cutzero{-2.9}
            \addplot [draw=black!50,fill=red!30,domain=\cuttwo:0.3] {1/sqrt(2*pi)*exp(-x^2/2)/5}
            \closedcycle;
            \addplot [draw=black!50,fill=orange!30,domain=-4:\cuttwo] {1/sqrt(2*pi)*exp(-x^2/2)/5}
            \closedcycle;
            \addplot [draw=black!50,fill=yellow!40,domain=-4:\cutone] {1/sqrt(2*pi)*exp(-x^2/2)/5}
            \closedcycle;
            \node[rounded corners,left=1mm,fill=white,fill opacity=0.7,text opacity = 1] at  (axis
            cs:\cutone,0.015) {$k\le 2^{\log^{1-\epsilon} n}$} ;
            \node[rounded corners,left=1mm,fill=white,fill opacity=0.7,text opacity = 1] at  (axis
            cs:\cuttwo,0.015) {$k=o(n)$};
            \node[left=1mm,fill=white,fill opacity=0.7, rounded corners,text opacity = 1] at (axis
            cs:0.3,0.015) {$k=\Theta(n)$};
            \def\rowone{-0.012}
            \def\rowtwo{-0.031}
            \def\rowthree{-0.051}
            \node[rounded corners,black,left=1mm,fill=white,opacity=0.9] at  (axis cs:\cutzero,\rowone)
            {\Cref{thm:main}} ;
            \node[rounded corners,black,left=1mm,fill=white,opacity=0.9] at  (axis cs:\cutone,\rowone)
            {\small $d = \Tilde{\Omega}(\log^{\epsilon} n)$} ;
            \node[rounded corners,black,left=1mm,fill=white,opacity=0.9] at  (axis cs:\cuttwo,\rowone)
            {\small $d = \Tilde{\Omega}(\log(n/k))$} ;
            \node[rounded corners,black,left=1mm,fill=white,opacity=0.9] at  (axis cs:\cutone,\rowtwo)
            {\small $\Delta = 1 - n^{-\Omega(k)}$} ;
            \node[rounded corners,black,left=1mm,fill=white,opacity=0.9] at  (axis cs:\cuttwo,\rowtwo)
            {\small $\Delta = 1 - \delta$} ;
            \draw[densely dotted] (axis cs:\cutone,0) -- (axis cs:\cutone,\rowthree-0.012);
            \draw[densely dotted] (axis cs:\cuttwo,0) -- (axis cs:\cuttwo,\rowthree-0.012);
            \draw[densely dotted] (axis cs:\cutzero,0) -- (axis cs:\cutzero,\rowthree-0.012);
            \draw[densely dotted] (axis cs:-4,\rowthree+0.012) -- (axis cs:0.3,\rowthree+0.012);
            \draw[densely dotted] (axis cs:-4,\rowthree-0.012) -- (axis cs:0.3,\rowthree-0.012);

            \node[rounded corners,black,left=1mm,fill=white,opacity=0.9] at  (axis cs:\cutzero,\rowthree)
            {\cite[Thm~1.6]{Viola12}};
            \node[rounded corners,black,left=1mm,fill=white,opacity=0.9] at  (axis cs:0.3,\rowthree)
            {$\Delta = 2^{-O(d)} - O(1/n)$};
        \end{axis}
    \end{tikzpicture}
    \caption{The table above depicts the implications of \Cref{thm:main} and \cite[Theorem~1.6]{Viola12}
        for sampling $\bm{U}_k^n$ for different $k$. The plot above it illustrates the size of the corresponding set of bitstrings in the boolean cube.}
    \label{fig:results-summary}
\end{figure}
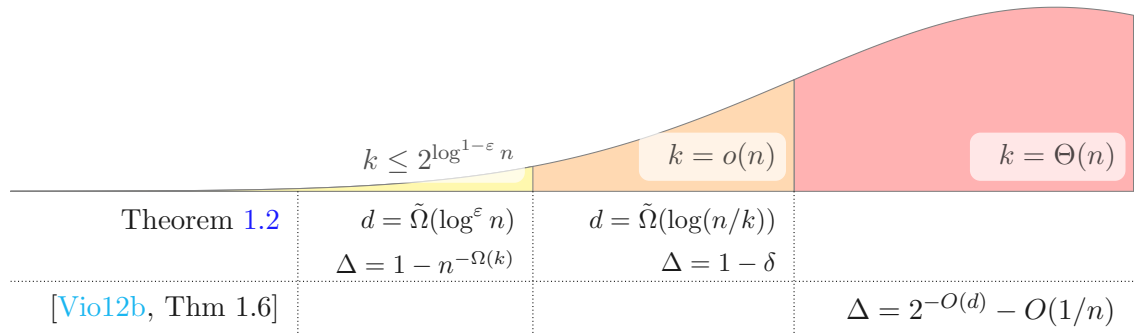

Here are some important challenges that are left open:
\begin{itemize}
    \item Give any non-trivial decision depth (or even locality!) lower bound for linear Hamming weight in the constant-error regime. 

\item Tighten up the decision depth lower bound for $\bm{U}_1^n$ to $\Omega(\log n)$. This would immediately yield tight (up to a constant) decision depth lower bounds for all polynomial $k$. 
\item  Give any locality lower bound for $(\bm{X}, f(\bm{X}))$, where $\bm{X} \sim \{0,1\}^n$ and $f = [(x_1 + \dots + x_n) \bmod p \ge p/2] \oplus x_1 \oplus \dots \oplus x_n$, as this would separate quantum and classical $\NC^0$ circuits for sampling, improving on the partial separation of~\cite{BPP23}. 
\item Find any non-trivial lower bounds for sampling a uniform vector with Hamming weight divisible by $k$, for any $k > 2$. This is likely to give insights on the $\QNC^0$ versus $\NC^0$ separation for sampling.
\item Determine the optimal depth of a switching network which samples a uniform permutation or a uniform vector in a slice.
\end{itemize}

\section{Notation and Tools}
We use boldface letters for random variables, e.g.\ $\bm{a}, \bm{A}, \bm{b}, \bm{B}$. We write $\bm{a}
\sim A$ to say that $\bm{a}$ is distributed according to a distribution $A$, or if $A$ is a set,
according to the uniform distribution over it. For $x \in \{0,1\}^n$, we denote its Hamming weight by
$w(x) = |x| \coloneqq \{i \in [n] \mid x_i = 1\}$. We denote $e_i = 0^{i-1}10^{n-i} \in \{0,1\}^n$. For a
string $x\in\{0,1\}^n$ and a set $T \subseteq [n]$, we write $x_T \coloneqq (x_i)_{i\in T} \in
\{0,1\}^T$. We write $\bm{U}^n_k$ for the uniform distribution of Hamming weight $k$ vectors.  

For two distributions $S$ and $T$ over the same domain $\mathcal{X}$, the \emph{statistical distance} is
defined as
\[ \Delta(S,T) \coloneqq \max_{A \subseteq \mathcal{X}} \left|\Pr_{\bm{x} \sim S}[\bm{x} \in A] -
  \Pr_{\bm{x} \sim T}[\bm{x} \in A]\right| = \frac12 \sum_{a \in \mathcal{X}} \left|\Pr_{\bm{x} \sim
      S}[\bm{x} = a] - \Pr_{\bm{x} \sim T}[\bm{x} = a]\right|.
\]
The statistical distance between two random variables is the statistical distance between their
distributions. We say that the distribution $S$ is \emph{$\epsilon$-close} from the distribution $T$ if
$\Delta(S,T) < \epsilon$. Otherwise, the distributions are \emph{$\epsilon$-far}.

A function $f\colon \{0,1\}^m \to \{0,1\}^n$ is \emph{$d$-local} if each of its output bits depends only
on $d$ input bits. A function $f\colon \{0,1\}^m \to \{0,1\}^n$ has \emph{decision depth} $d$ if each of
its output bits can be computed as a depth-$d$ decision tree, i.e.\ decided with at most $d$ adaptive
input bit queries. If a function is $d$-local, then it has decision depth at most $d$. 

We are going to use a very simple form of the FKG inequality:
\begin{theorem}[\cite{Harris60,FKG71}]
    \label{thm:FKG}
    Suppose that $X$ is a product distribution over $\{0,1\}^n$ (that is, $\Pr[X = x] = \prod_{i \in [n]}
    \Pr[X_i = x_i]$). Let $A, B \subseteq \{0,1\}^n$ be two monotone events (if $x \in A$ and $x_i \leq
    y_i$ for all $i \in [n]$ then $y \in A$, and similarly for $B$). Then
    \[
        \Pr_{\bm{x} \sim X}[\bm{x} \in A \cap B] \ge
        \Pr_{\bm{x} \sim X}[\bm{x} \in A] \cdot \Pr_{\bm{x} \sim X}[\bm{x} \in B]. \qedhere
    \]
\end{theorem}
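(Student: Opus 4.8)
The plan is to prove the functional (correlation) form of the inequality and then read off the event version. Concretely, I would show that for any two \emph{monotone} functions $f, g\colon \{0,1\}^n \to \mathbb{R}$ and any product distribution $X$ on $\{0,1\}^n$,
\[
\E_{\bm{x}\sim X}[f(\bm{x})g(\bm{x})] \ge \E_{\bm{x}\sim X}[f(\bm{x})]\cdot \E_{\bm{x}\sim X}[g(\bm{x})].
\]
The statement of the theorem is exactly the special case $f = \mathbf{1}_A$, $g = \mathbf{1}_B$: these indicator functions are monotone precisely because $A$ and $B$ are monotone events, and $\E[\mathbf{1}_A \mathbf{1}_B] = \Pr_{\bm{x}\sim X}[\bm{x}\in A\cap B]$, $\E[\mathbf{1}_A] = \Pr_{\bm{x}\sim X}[\bm{x}\in A]$, and likewise for $B$.

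I would prove the functional form by induction on $n$. For $n = 1$, writing $p = \Pr[\bm{x}_1 = 1]$, a one-line computation gives
\[
\E[fg] - \E[f]\,\E[g] = p(1-p)\bigl(f(1)-f(0)\bigr)\bigl(g(1)-g(0)\bigr) \ge 0,
\]
since both factors are nonnegative by monotonicity. For $n \ge 2$, I would condition on the last coordinate: write $\bm{x} = (\bm{x}', \bm{x}_n)$, where $\bm{x}'$ follows the product distribution $X'$ on the first $n-1$ coordinates and is independent of $\bm{x}_n$, and set $F(x') \coloneqq \E_{\bm{x}_n}[f(x', \bm{x}_n)]$ and $G(x') \coloneqq \E_{\bm{x}_n}[g(x', \bm{x}_n)]$. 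The $n=1$ case, applied to the single variable $\bm{x}_n$ with $x'$ fixed, gives $\E_{\bm{x}_n}[f(x',\bm{x}_n)g(x',\bm{x}_n)] \ge F(x')G(x')$ for every $x'$. Averaging this inequality over $\bm{x}' \sim X'$ and then applying the induction hypothesis to the monotone functions $F, G$ on $\{0,1\}^{n-1}$ yields
\[
\E[fg] = \E_{\bm{x}'}\!\bigl[\E_{\bm{x}_n}[f(\bm{x}',\bm{x}_n)g(\bm{x}',\bm{x}_n)]\bigr] \ge \E_{\bm{x}'}[F\cdot G] \ge \E_{\bm{x}'}[F]\cdot \E_{\bm{x}'}[G] = \E[f]\cdot\E[g].
\]

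This is the classical Harris argument, so I do not expect a genuine obstacle; the only points requiring (minor) care are that marginalizing out one coordinate of a product distribution leaves a product distribution on the remaining coordinates, and that averaging a monotone function over one coordinate produces a function still monotone in the other coordinates — both immediate from the definitions, but both are what make the induction step go through. Alternatively one could simply invoke \cite{Harris60,FKG71}, but the inductive proof above is short enough to include inline.
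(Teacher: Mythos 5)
Your proof is correct: it is the classical Harris induction argument, the base case computation $\E[fg]-\E[f]\E[g]=p(1-p)(f(1)-f(0))(g(1)-g(0))\ge 0$ is right, and you correctly flag the two points that make the induction step work (marginalizing a product distribution over one coordinate leaves a product distribution, and averaging a monotone function over one coordinate preserves monotonicity in the rest). The paper itself gives no proof of this statement — it is quoted as a known result with citations to Harris and FKG — so there is nothing to compare against beyond noting that your self-contained inductive argument is the standard proof of exactly the cited inequality and would be a valid inline substitute for the citation.
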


\subsection{Robust Sunflowers}
In this section, we discuss robust sunflowers.


\begin{definition}[Robust sunflower]
    Let $0 < \alpha, \beta < 1$ be parameters, let $\mathcal{F}$ be a set system, and let $K \coloneqq
    \bigcap_{S \in \mathcal{F}} S$ be the intersection of all sets in $\mathcal{F}$, which we refer to as
    the \emph{kernel}. The family $\mathcal{F}$ is an \emph{$(\alpha, \beta)$-robust sunflower} if
    \begin{enumerate}
        \item $K \not\in \mathcal{F}$;
        \item $\Pr_{\bm{R}} \left[\exists S \in \mathcal{F}\colon S \subseteq \bm{R} \cup K\right] \ge 1
            - \beta$, where each element of the universe appears in $\bm{R}$ with probability $\alpha$
            independently.

            We can write this condition in the equivalent form $\Pr_{\bf{R}}[\exists S \in
            \mathcal{F}\colon \mathbf{R} \supseteq S \mid \mathbf{R} \supseteq K] \ge 1 - \beta$.
    \end{enumerate}
    A set system is called \emph{$(\alpha, \beta)$-satisfying} if it is an $(\alpha, \beta)$-robust
    sunflower with an empty kernel. 
\end{definition}

Large enough set systems always contain a robust sunflower, as proved by Rossman~\cite{Rossman14} and
improved by later authors. 

\begin{theorem}[\cite{ALWZ21, BCW21, Rao19}]
    \label{thm:improved-sunflower}
    There exists a constant $B > 0$ such that the following holds for all $p, \epsilon \in (0, 1/2]$. Let
    $\mathcal{F}$ be a family of sets of size exactly $d$ such that $|\mathcal{F}| \ge (B \log
    (d/\epsilon)/p)^d$. Then $\mathcal{F}$ contains a $(p, \epsilon)$-robust sunflower.
\end{theorem}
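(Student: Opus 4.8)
The plan is to prove \Cref{thm:improved-sunflower} in two stages: a short combinatorial \emph{extraction} step reducing everything to a statement about \emph{spread} families, and then the probabilistic core. Call a family $\mathcal{H}$ \emph{$r$-spread} if $|\{H \in \mathcal{H} \colon T \subseteq H\}| \le r^{-|T|}|\mathcal{H}|$ for every nonempty $T$; note that an $r$-spread family with $r > 1$ has no element common to all of its members (and, taking $T$ to be a whole member, has size at least $r^{w}$ whenever it has a member of size $w$). Put $r \coloneqq B \log(d/\epsilon)/p$, so $r$ is an absolute-constant multiple of the quantity in the theorem and $r > 1$ for $B$ large, and assume $|\mathcal{F}| > r^{d}$ (arranged by inflating $B$). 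Among all sets $K$ contained in some member of $\mathcal{F}$, pick one maximizing $r^{|K|}\cdot|\mathcal{F}^{K}|$, where the \emph{link} is $\mathcal{F}^{K} \coloneqq \{S \setminus K \colon S \in \mathcal{F},\ K \subseteq S\}$; this maximum is at least $|\mathcal{F}|$ (attained at $K = \emptyset$). A one-line computation with the maximality shows that $\mathcal{F}^{K}$ is $r$-spread, and $|\mathcal{F}^{K}| \ge r^{-|K|}|\mathcal{F}| > r^{\,d-|K|}$; since $r > 1$ this is $> 1$, so $\mathcal{F}^{K}$ has at least two members, forcing $w \coloneqq d - |K| \ge 1$ (a family of $0$-sets has at most one member). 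Now set $\mathcal{F}^{\ast} \coloneqq \{S \in \mathcal{F} \colon K \subseteq S\}$. Its kernel is $K \cup \bigcap_{G \in \mathcal{F}^{K}} G = K$ (using that $\mathcal{F}^{K}$ is $r$-spread with $r>1$, hence has empty common intersection), and $K \notin \mathcal{F}^{\ast}$ because members of $\mathcal{F}^{\ast}$ have size $d > |K|$. Finally, for $S = K \cup G$ with $G = S \setminus K \in \mathcal{F}^{K}$ and $G \cap K = \emptyset$, the event $S \subseteq \bm{R} \cup K$ is the same as $G \subseteq \bm{R}$. Hence $\mathcal{F}^{\ast}$ is a $(p,\epsilon)$-robust sunflower as soon as we establish the following \textbf{core lemma}: a nonempty $r$-spread family $\mathcal{G}$ whose members have size at most $w \le d$ is $(p,\epsilon)$-satisfying, i.e.\ $\Pr_{\bm{R}}[\exists G \in \mathcal{G} \colon G \subseteq \bm{R}] \ge 1 - \epsilon$ for a $p$-random $\bm{R}$, whenever $r \ge B\log(d/\epsilon)/p$.

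To prove the core lemma I would run the iterative random-restriction argument of \cite{ALWZ21}, incorporating the sharpenings of \cite{Rao19, BCW21}. Write $\bm{R} = \bm{R}_{1} \cup \dots \cup \bm{R}_{\ell}$ with $\ell = \lceil \log_{2} w \rceil + 1$ independent $p'$-random sets, where $p' \coloneqq 1 - (1-p)^{1/\ell} \ge p/\ell$ (so the union is exactly $p$-random). Build families $\mathcal{G} = \mathcal{G}_{0}, \mathcal{G}_{1}, \dots, \mathcal{G}_{\ell}$: from $\mathcal{G}_{i-1}$, keep those $G$ with $|G \cap \bm{R}_{i}| \ge |G|/2$, delete $\bm{R}_{i}$ from each survivor, and pass to the inclusion-minimal residuals to get $\mathcal{G}_{i}$. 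The invariants are: every member of $\mathcal{G}_{i}$ has size at most $w/2^{\,i}$ (immediate from the keep-rule), $\mathcal{G}_{i}$ is nonempty, and $\mathcal{G}_{i}$ stays $\Omega(r)$-spread (the spread parameter is preserved up to an absolute constant over the whole run). Granting these, after $\ell$ rounds all members have size $0$, so $\mathcal{G}_{\ell} \ne \emptyset$ witnesses a $G \in \mathcal{G}$ with $G \subseteq \bm{R}_{1} \cup \dots \cup \bm{R}_{\ell} \subseteq \bm{R}$, which is exactly the conclusion. One allots a failure budget $\delta_{i}$ to round $i$ with $\sum_{i}\delta_{i} \le \epsilon$, and the only place the hypothesis on $r$ enters is in making each round succeed with probability $\ge 1 - \delta_{i}$; balancing this budget against the per-round spread requirement is precisely where the clean dependence $r = \Theta(\log(d/\epsilon)/p)$ — as opposed to the weaker bound of \cite{ALWZ21} — has to be extracted.

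The single nontrivial ingredient, and the step I expect to be the main obstacle, is the one-round statement driving the induction: \emph{if $\mathcal{H}$ is $\rho$-spread with members of size at most $w'$ and $\bm{W}$ is $p'$-random, then with probability $\ge 1 - \delta$ the refined family (keep $H$ with $|H \cap \bm{W}| \ge |H|/2$, remove $\bm{W}$, take minimal residuals) is nonempty and $\Omega(\rho)$-spread, provided $\rho \ge C\log(w'/\delta)/p'$.} Nonemptiness is an expectation/second-moment estimate: because $\mathcal{H}$ is $\rho$-spread it has many members, they are ``spread out'', and one shows (via a second-moment bound, or by writing the relevant events as intersections of monotone events and invoking \Cref{thm:FKG}) that with probability $\ge 1 - \delta/2$ some member has at least half of its elements inside $\bm{W}$. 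The spread condition on the refined family is the delicate half: for each potential over-popular core $T$, the bad event forces $\bm{W}$ to contain a prescribed ``large'' sub-chunk of each of many distinct members of $\mathcal{H}$ while avoiding $T$; spreadness of $\mathcal{H}$ caps how much weight such members can carry, and a union bound over $T$ — weighted geometrically by a $\rho^{-|T|}$-type factor, in the style of \cite{Rao19} — closes the estimate once $\rho p' = \Omega(\log(w'/\delta))$. Pinning down the exact refinement operator (the ``at least half covered'' threshold versus the fractional/cover variants of \cite{BCW21}) and making this last union bound close with the stated parameters is the crux; the rest is the bookkeeping sketched above.
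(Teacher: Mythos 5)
First, a framing remark: the paper does not prove \Cref{thm:improved-sunflower} at all --- it is imported from \cite{ALWZ21,Rao19,BCW21} --- so your proposal has to be judged against those published proofs. Your extraction stage is correct and is exactly the standard reduction: choosing $K$ to maximize $r^{|K|}\cdot|\mathcal{F}^K|$ does make the link $r$-spread, the size count forces $|K|<d$, the kernel of $\{S\in\mathcal{F}\colon K\subseteq S\}$ is indeed $K$, and you correctly translate condition (2) of the robust-sunflower definition into the assertion that the link is $(p,\epsilon)$-satisfying. That part I would accept as written.

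The gap is in the core lemma, which is where essentially all of the content of the theorem lives. Your iteration hinges on a one-round claim that the refined family (half-covered members, minus $\bm{W}$, inclusion-minimal residuals) is, with probability $1-\delta$, nonempty \emph{and} $\Omega(\rho)$-spread. You do not prove this, you flag it yourself as the crux, and it is not how the cited proofs go: ALWZ, Rao and BCW deliberately avoid having to re-establish spreadness of the random residual family --- the survivors are an exponentially atypical subpopulation and their residuals need not inherit spreadness relative to the (much smaller) survivor count, so this invariant is at best a substantial unproven claim. Instead, those proofs fix a uniformly random member $S$ of the \emph{original} family and show, by an encoding/counting (or entropy) argument that only ever invokes the spreadness of the original family, that with high probability some member's uncovered residual is contained in that of $S$ and has at most half its size; the iteration then runs along this chain of containments, with no spreadness maintained. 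Even the nonemptiness half of your one-round claim needs a genuine second-moment computation using spreadness, not \Cref{thm:FKG}. Moreover, even if one grants your one-round lemma verbatim, your bookkeeping does not yield the stated bound: with $\ell\approx\log_2 w$ rounds, $p'\approx p/\log w$, and a per-round requirement $\rho\ge C\log(w'/\delta_i)/p'$, you need $r=\Omega(\log(w)\cdot\log(w/\epsilon)/p)$, i.e.\ the older ALWZ-type bound, whereas \Cref{thm:improved-sunflower} asserts $r=O(\log(d/\epsilon)/p)$. Eliminating that extra $\log w$ factor (for instance by making the round-$i$ failure probability exponentially small in the current residual size so the error budget telescopes) is precisely the improvement of \cite{Rao19,BCW21}, and your sketch explicitly defers it. So as it stands the proposal establishes only the reduction to spread families; the probabilistic heart of the theorem is assumed rather than proved.
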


\begin{corollary}
    \label{cor:improved-sunflower-cor}
    There exists a constant $B > 0$ such that the following holds for all $p, \epsilon \in (0, 1/2]$. Let
    $\mathcal{F}$ be a family of non-empty sets of size \emph{at most} $d$ such that $|\mathcal{F}| \ge d
    \cdot (B \log (d/\epsilon)/p)^d$. Then $\mathcal{F}$ contains a $(p, \epsilon)$-robust sunflower.
\end{corollary}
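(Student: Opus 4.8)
The plan is to derive \Cref{cor:improved-sunflower-cor} from \Cref{thm:improved-sunflower} by a simple padding argument that turns a family of sets of size \emph{at most} $d$ into a family of sets of size \emph{exactly} $d$ over a slightly enlarged universe. First I would introduce a fresh set of ``dummy'' elements $D$, disjoint from the original universe $U$, with $|D| = d$. For each set $S \in \mathcal{F}$ with $|S| = s \le d$, I would pad it up to size exactly $d$ by adding $d - s$ of the dummy elements. The key point is that there is no canonical way to do the padding, so I would like to choose the padding so that many sets get the \emph{same} dummy elements appended: by pigeonhole, among the at least $d \cdot (B\log(d/\epsilon)/p)^d$ sets in $\mathcal{F}$, grouped according to their size $s \in \{1,\dots,d\}$, some size class $\mathcal{F}_s$ has at least $(B\log(d/\epsilon)/p)^d$ sets. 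Fix that class and pad every $S \in \mathcal{F}_s$ with the \emph{same} block $D' \subseteq D$ of $d - s$ dummies, obtaining a family $\mathcal{G}$ of $|\mathcal{F}_s| \ge (B\log(d/\epsilon)/p)^d$ sets, each of size exactly $d$, living in the universe $U \cup D'$.

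Next I would apply \Cref{thm:improved-sunflower} to $\mathcal{G}$ with the same parameters $p, \epsilon$, obtaining a $(p,\epsilon)$-robust subsunflower $\mathcal{H} \subseteq \mathcal{G}$ with kernel $K_{\mathcal{G}} = \bigcap_{T \in \mathcal{H}} T$. Since every set in $\mathcal{G}$ contains all of $D'$, the kernel $K_{\mathcal{G}}$ contains $D'$, and in fact $K_{\mathcal{G}} = K_0 \cup D'$ where $K_0 = \bigcap_{T \in \mathcal{H}} (T \setminus D') \subseteq U$ is the intersection of the corresponding original (unpadded) sets. Let $\mathcal{H}_0 := \{ T \setminus D' : T \in \mathcal{H} \} \subseteq \mathcal{F}_s \subseteq \mathcal{F}$ be the collection of original sets that got padded into $\mathcal{H}$; these are distinct since the padding was the same for all of them. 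I claim $\mathcal{H}_0$ is the desired $(p,\epsilon)$-robust sunflower inside $\mathcal{F}$.

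To verify the two conditions: for the kernel condition, $K_0 \notin \mathcal{H}_0$ because $K_{\mathcal{G}} \notin \mathcal{H}$ (robust sunflower property of $\mathcal{H}$) and the map $T \mapsto T \setminus D'$ is a bijection between $\mathcal{H} \cup \{K_{\mathcal{G}}\}$-type objects and their de-padded versions — more carefully, if $K_0 = S \setminus D'$ for some $S \in \mathcal{H}$ then $K_{\mathcal{G}} = K_0 \cup D' = S$, contradiction. (There is a degenerate edge case if $|\mathcal{H}_0| = 1$, i.e.\ the kernel equals the unique set; but a robust sunflower has at least two sets since condition 1 forbids $K \in \mathcal{F}$ while every set is a superset of $K$, so this cannot happen — I should double-check \Cref{thm:improved-sunflower} indeed yields $|\mathcal{H}| \ge 2$, which it does for the same reason.) For the random-restriction condition, sample $\bm{R} \subseteq U$ with each element included independently with probability $p$; extend it to $\bm{R}' := \bm{R} \cup D'$ over the universe $U \cup D'$ (equivalently, $\bm{R}$ together with a random $p$-subset of $D'$, then union with $D'$ — since $D' \subseteq \bm{R}' \cup K_{\mathcal{G}}$ always, the distribution of which sets of $\mathcal{H}$ are covered is unaffected by how we treat $D'$). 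Then $\exists T \in \mathcal{H}: T \subseteq \bm{R}' \cup K_{\mathcal{G}}$ iff $\exists (T\setminus D') \in \mathcal{H}_0 : (T \setminus D') \subseteq \bm{R} \cup K_0$, because $D' \subseteq K_{\mathcal{G}} \subseteq \bm{R}' \cup K_{\mathcal{G}}$ and $\bm{R}' \cup K_{\mathcal{G}} = (\bm{R} \cup K_0) \cup D'$ with $(\bm{R}\cup K_0) \cap D' $ irrelevant since $T \setminus D' \subseteq U$. Hence the probability is $\ge 1 - \epsilon$, as inherited from $\mathcal{H}$. This completes the derivation; the only real subtlety — and the step I would be most careful about — is making sure the de-padding bijection does not accidentally merge two distinct padded sets into one original set (it cannot, since the padding block $D'$ is identical across the class) and that the kernel decomposes cleanly as $K_0 \cup D'$ rather than something larger.

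Let me also note the (routine) bookkeeping that the hypothesis $|\mathcal{F}| \ge d \cdot (B\log(d/\epsilon)/p)^d$ is exactly what the pigeonhole over $d$ size classes needs to produce one class of size $\ge (B\log(d/\epsilon)/p)^d$; and the non-emptiness assumption on the sets in $\mathcal{F}$ is needed only to ensure every size class index $s$ lies in $\{1,\dots,d\}$ (so that we never try to ``pad'' the empty set, which would be harmless anyway but is cleaner to exclude) — in particular a set system of non-empty sets of size at most $d$ can still contain a robust sunflower with empty original kernel $K_0 = \emptyset$, matching the ``$(\alpha,\beta)$-satisfying'' terminology, and the argument above handles that case without change.
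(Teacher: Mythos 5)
Your proof is correct, and its first step (pigeonhole over the $d$ possible sizes to extract a size class of cardinality at least $(B\log(d/\epsilon)/p)^d$) is exactly the paper's first step. Where you diverge is in how you bridge the gap between ``uniform size $d_0\le d$'' and the hypothesis ``size exactly $d$'' of \Cref{thm:improved-sunflower}: the paper simply applies \Cref{thm:improved-sunflower} to the size-$d_0$ class with $d$ replaced by $d_0$, tacitly using that the available count $(B\log(d/\epsilon)/p)^d$ dominates the required threshold $(B\log(d_0/\epsilon)/p)^{d_0}$ (true because the base is at least $1$ when $p,\epsilon\le 1/2$ and $B$ is large enough), whereas you pad every set in the class with one common dummy block $D'$ to reach size exactly $d$, invoke the theorem at parameter $d$, and then de-pad. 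The padding route is longer but self-contained: it avoids the monotonicity-of-the-threshold observation entirely, at the cost of the (correctly executed) verification that de-padding preserves the robust-sunflower structure --- namely that the kernel splits as $K_0\cup D'$, that $K_0\notin\mathcal{H}_0$ because $K_0=S\setminus D'$ would force $K_{\mathcal{G}}=S\in\mathcal{H}$, and that the coverage event is insensitive to the $D'$-coordinates since $D'$ lies inside the kernel, so restricting the $p$-biased random set to the original universe changes nothing. Both arguments yield the statement with the same constant $B$, and your closing remark about the role of non-emptiness (only to keep the size classes indexed by $\{1,\dots,d\}$) matches how it is used in the paper's pigeonhole as well.
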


\begin{proof}
    Let $d_0 \in [d]$ be the most common size of sets in $\mathcal{F}$. Then the number of sets of size
    $d_0$ is at least $(B \log (d/\epsilon)/p)^d$, which allows us to apply \Cref{thm:improved-sunflower}
    to these sets.
\end{proof}


If we remove a single petal from a robust sunflower, then it remains a robust sunflower (with slightly worse parameters).

\begin{lemma}
\label{lem:robust-sunflower-petal-removing}
Suppose that $N_1, \dots, N_k \subseteq X$ is a $(p, \epsilon)$-robust sunflower with kernel $K$. Then for every $i$, the sets $N_1, \dots, N_{i-1}, N_{i+1}, \dots, N_k$ form a $(2p, 2\epsilon)$-robust sunflower with kernel $K$.
\end{lemma}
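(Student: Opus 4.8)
The plan is to check the two defining conditions of a $(2p,2\epsilon)$-robust sunflower for the family $\mathcal F':=\{N_j : j\neq i\}$, with $K$ continuing to serve as the kernel. Condition~(1), $K\notin\mathcal F'$, is immediate from $\mathcal F'\subseteq\mathcal F$ and $K\notin\mathcal F$; moreover $K$ is still the intersection $\bigcap_{S\in\mathcal F'}S$ whenever $\epsilon<1/4$ (otherwise some element $y$ would lie in every petal except $N_i$, and then on the event $y\notin\bm{R}$, which has probability $1-p\ge\tfrac12$, the only petal that can be contained in $\bm{R}\cup K$ is $N_i$ itself, forcing $\Pr[\exists S\in\mathcal F\colon S\subseteq\bm{R}\cup K]\le p+p(1-p)=1-(1-p)^2\le\tfrac34$, contradicting the hypothesis $\ge 1-\epsilon$). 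So the substance of the lemma is condition~(2): with $\bm{R}$ containing each element of the universe independently with probability $2p$, we must show $\Pr_{\bm{R}}[\exists S\in\mathcal F'\colon S\subseteq\bm{R}\cup K]\ge 1-2\epsilon$.

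Since $K\notin\mathcal F$ but $N_i\in\mathcal F$ we have $K\subsetneq N_i$, so fix some $x\in N_i\setminus K$. The key step is the observation that if $S\in\mathcal F$ and $S\subseteq(\bm{R}\setminus\{x\})\cup K$, then $S\neq N_i$ — because $x\in N_i$ while $x\notin(\bm{R}\setminus\{x\})\cup K$ (here we use $x\notin K$) — and hence $S\in\mathcal F'$ with $S\subseteq\bm{R}\cup K$. Therefore
\[
\Pr_{\bm{R}}\!\left[\exists S\in\mathcal F'\colon S\subseteq\bm{R}\cup K\right]\;\ge\;\Pr_{\bm{R}}\!\left[\exists S\in\mathcal F\colon S\subseteq(\bm{R}\setminus\{x\})\cup K\right],
\]
both probabilities taken over $\bm{R}$ at rate $2p$. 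Now $\bm{R}\setminus\{x\}$ never contains $x$ and contains every other element with probability $2p\ge p$, so its distribution stochastically dominates that of a rate-$p$ random set conditioned on not containing $x$; since $\mathbf{R}\mapsto[\exists S\in\mathcal F\colon S\subseteq\mathbf{R}\cup K]$ is a monotone (upward-closed) event, the right-hand side is at least $\Pr_{\bm{R}'}[\exists S\in\mathcal F\colon S\subseteq\bm{R}'\cup K\mid x\notin\bm{R}']$, where $\bm{R}'$ is at rate $p$.

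Finally, apply the $(p,\epsilon)$-robust-sunflower property of $\mathcal F$ and condition on whether $x\in\bm{R}'$: writing $h$ for the conditional probability just displayed, $1-\epsilon\le\Pr_{\bm{R}'}[\exists S\in\mathcal F\colon S\subseteq\bm{R}'\cup K]\le p\cdot 1+(1-p)h$, so $h\ge(1-\epsilon-p)/(1-p)$; and the inequality $(1-\epsilon-p)/(1-p)\ge 1-2\epsilon$ rearranges to $p\le\tfrac12$, which is our standing assumption. Chaining the three bounds proves condition~(2). I expect this last step to be the crux, and it also explains why the factor $2$ must enter \emph{both} parameters rather than leaving $p$ untouched: the naive estimate $\Pr_{\bm{R}}[\exists S\in\mathcal F'\colon S\subseteq\bm{R}\cup K]\ge\Pr_{\bm{R}}[\exists S\in\mathcal F\colon S\subseteq\bm{R}\cup K]-\Pr_{\bm{R}}[N_i\subseteq\bm{R}\cup K]$ only gives $\mathcal F'$ the parameters $(p,\epsilon+\Pr_{\bm{R}}[N_i\subseteq\bm{R}\cup K])$ at rate $p$, and the subtracted probability can be as large as $p$ when $|N_i\setminus K|=1$; to absorb it one raises the sampling rate to $2p$ and spends that slack to delete a coordinate of $N_i\setminus K$, which is affordable precisely because $p\le\tfrac12$ keeps the conditioning loss $1/(1-p)$ under control.
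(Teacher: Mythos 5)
Your proof is correct, but it takes a genuinely different route from the paper's. The paper realizes a rate-$2p$ set conditioned on containing $K$ as the union of two independent rate-$p$ ``colors'' $\bm{\tau}^1,\bm{\tau}^2$ that intersect only in $K$, applies the $(p,\epsilon)$-property to each color, and union-bounds: the two petals found must be distinct (a common petal would equal $K$), so one of them differs from $N_i$. You instead delete a single element $x\in N_i\setminus K$ from the rate-$2p$ set, note that any petal fitting inside $(\bm{R}\setminus\{x\})\cup K$ cannot be $N_i$, and transfer the rate-$p$ hypothesis via stochastic domination plus conditioning on $x\notin\bm{R}'$, closing with the algebra $(1-\epsilon-p)/(1-p)\ge 1-2\epsilon$ for $p\le\tfrac12$; all of these steps check out (and your preliminary remark that $K$ remains the intersection when $\epsilon<1/4$ is a point the paper silently skips, since only condition~(2) is ever used downstream). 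Both arguments implicitly assume $p\le 1/2$, which is forced anyway for $2p$ to be a sampling rate and is needed by the paper's own proof ($1-2p\ge 0$). What each approach buys: the paper's coloring argument shows the stronger simultaneous statement that two distinct petals appear with probability $\ge 1-2\epsilon$, and the same splitting with $2k$ colors is exactly what proves the later $k$-petal lemma (\Cref{lem:k-petals-in-robust-sunflower}); your argument is more elementary and local, and in fact proves more than you claim in your closing remark --- run at rate $p$ instead of $2p$, the identity $\Pr_p[\exists S\colon S\subseteq(\bm{R}\setminus\{x\})\cup K]=\Pr_p[\exists S\colon S\subseteq\bm{R}\cup K\mid x\notin\bm{R}]\ge 1-\epsilon/(1-p)$ already shows the reduced family is a $(p,\epsilon/(1-p))$-robust sunflower, so the doubling of $p$ in the statement is a convenience for the application in \Cref{thm:general-sunflowers} (where $1/(4t)$ must become $1/(2t)$), not a necessity as your last paragraph suggests.
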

    
\begin{proof}
    Let $\bm{\tau}$ be distributed over $[3]^{X \setminus K}$ such that $\Pr[\bm{\tau}_i = 1] =
    \Pr[\bm{\tau}_i = 2] = p$, $\Pr[\bm{\tau}_i = 3] = 1 - 2p$, and the coordinates of $\bm{\tau}$ are
    independent. For $\ell \in [3]$, let $\bm{\tau}^\ell = \{j \in X \mid \bm{\tau}_j = \ell \} \cup
    K$. The definition of $(p, \epsilon)$-robust sunflower implies that for every $\ell \in [2]$ we have
    $\Pr[\exists j \in [k]\colon \bm{\tau}^\ell \supseteq N_j] \ge 1 - \epsilon$. An application of the
    union bound implies that
    \[
        \Pr\left[\exists j \in [k]\colon \bm{\tau}^1 \supseteq N_j\land \exists j' \in [k]\colon
          \bm{\tau}^2 \supseteq N_{j'}\right] \ge 1 - 2\epsilon.
    \]
    If $j = j'$, then since $\bm{\tau}^1 \cap \bm{\tau}^2=K$, we have $N_j = K$, which is impossible by
    the definition of a sunflower. Thus $j \neq j'$ whenever the event happens. Let $\bm{R}$ be a
    distribution of subsets of $X$ where each element appears in $\bm{R}$ independently with probability
    $2p$. Then since $(\bm{R} | \bm{R} \supseteq K)$ has the same distribution as $\bm{\tau}^1 \cup
    \bm{\tau}^2$, we have
    \[
        \Pr_{\bm{R}}\left[\exists j \neq j' \in [k]\colon \bm{R} \supseteq N_j \land \bm{R} \supseteq
          N_{j'} \,\middle|\, \bm{R} \supseteq K\right] \ge 1 - 2\epsilon.
    \]
    In particular, for every $i \in [k]$ we have
    \[
        \Pr_{\bm{R}}\left[\exists j \neq i \in [k]\colon \bm{R} \supseteq N_j \,\middle|\, \bm{R}
          \supseteq K\right] \geq 1 - 2\epsilon,
    \]
    and so the sets $N_1, \dots, N_{i - 1},N_{i + 1}, N_k$ form a $(2p, 2\epsilon)$-robust sunflower with
    kernel $K$.
%
\end{proof}

Another lemma we use is very similar to the standard connection between robust sunflowers and the classical ones (see e.g.\ Lemma 1.6 in \cite{ALWZ21}):
\begin{lemma}
\label{lem:k-petals-in-robust-sunflower}
    Suppose that $N_1, \dots, N_m \subseteq X$ is a $(1/(2k), \epsilon)$-robust sunflower with a kernel
    $K$. Then
    \[
        \Pr_{\bm{R} \sim 2^X}\left[\exists I\in \binom{[m]}{k} \forall i \in I\colon \bm{R} \supseteq N_i
          \middle| \bm{R} \supseteq K\right] \ge 1 - \epsilon k.
    \]
\end{lemma}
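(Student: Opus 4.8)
The plan is to iterate the petal-removal lemma \Cref{lem:robust-sunflower-petal-removing} combined with a union-bound-style argument, in the spirit of the classical reduction from robust sunflowers to ordinary sunflowers. First I would pass to the conditional distribution $\bm{R}' = (\bm{R} \mid \bm{R} \supseteq K)$; equivalently, I may work on the universe $X \setminus K$ where each element is included independently with probability $1/(2k)$, and the question is whether $\bm{R}'$ contains $k$ of the petals $N_i$. I would run the following greedy process: repeatedly pick a petal $N_{i_j}$ with $N_{i_j} \subseteq \bm{R}'$; if such a petal exists, remove it from the family (so the remaining family is still a robust sunflower with the same kernel, with parameters degrading by a factor of $2$ per removal, by \Cref{lem:robust-sunflower-petal-removing}) and repeat. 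The claim is that with probability at least $1 - \epsilon k$ this process survives $k$ rounds, which gives $k$ distinct petals contained in $\bm{R}'$.

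The key step is the inductive bound on the failure probability. Let $q_j$ be the probability that after having already selected $j$ disjoint-from-the-point-of-view-of-selection petals, no further petal in the (reduced) family is contained in $\bm{R}'$. After $j$ successful removals, the surviving family is a $(2^j/(2k),\, 2^j\epsilon)$-robust sunflower with kernel $K$, as long as $2^j/(2k) \le 1/2$, i.e.\ $j \le \log_2 k$; in that regime the definition of robust sunflower directly gives that, conditioned on $\bm{R}' \supseteq K$, with probability at least $1 - 2^j\epsilon$ some surviving petal is contained in $\bm{R}'$. Summing the per-round failure probabilities $\sum_{j=0}^{k-1} 2^j \epsilon$ is too lossy, so instead I would be more careful: the process only needs $k$ rounds, but I can afford to use the crude parameter $p = 1/(2k)$ (which never increases beyond $1/2$ over $k-1$ halvings only if $k$ is small) — here the cleaner route is to observe that we never actually need to double the robustness parameter $k$ times. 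Concretely, at round $j$ the relevant sampling probability is still bounded by using the \emph{original} $\bm{R}'$ with parameter $1/(2k)$, and monotonicity lets us couple: a set sampled with probability $1/(2k)$ stochastically contains a set sampled with smaller probability, so the event "some petal of the $j$-th reduced family lies in $\bm{R}'$" has probability at least $1 - 2^{j}\epsilon$ only when the reduced family is genuinely a $(2^j/(2k), 2^j\epsilon)$ sunflower. To keep the $2^j$ from blowing up I would instead apply \Cref{lem:robust-sunflower-petal-removing} to remove petals one at a time but re-derive that the family of $m - j$ petals is a $(1/(2(k-j)), j\epsilon)$-robust sunflower — no wait, the lemma as stated doubles both parameters, so I would just track $q_j \le 2^j \epsilon$ and note $\sum_{j<k} 2^j\epsilon = (2^k-1)\epsilon$, which is the wrong bound.

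The honest approach, and the one I expect to work cleanly, is the \emph{union bound over the complementary events round by round with the right accounting}: define $\bm{R}_1, \dots, \bm{R}_k$ to be $k$ independent copies of a set that includes each element with probability $\approx 1/(2k)$ each, so that $\bm{R}_1 \cup \cdots \cup \bm{R}_k$ is stochastically dominated by $\bm{R}'$ (since $1 - (1-\frac{1}{2k})^k \le \frac12 \le$ the true inclusion probability after conditioning — one checks $1-(1-1/(2k))^k \le 1/2$, true). For each $t \in [k]$, the robust sunflower property (with parameter $1/(2k) \ge$ the per-copy inclusion probability, using monotonicity again) gives that $\bm{R}_t \cup K$ contains some petal with probability $\ge 1 - \epsilon$. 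On the event that this happens for every $t \in [k]$ — which has probability $\ge 1 - k\epsilon$ by the union bound — we obtain petals $N_{i_1} \subseteq \bm{R}_1 \cup K$, \ldots, $N_{i_k} \subseteq \bm{R}_k \cup K$; these need not be distinct, but if $i_s = i_t$ for $s \ne t$ then, intersecting, $N_{i_s} \subseteq (\bm{R}_s \cup K) \cap (\bm{R}_t \cup K) = K$ (as the $\bm{R}_t$ are disjoint up to coupling — here I would instead take them genuinely disjoint by partitioning a single sample, exactly as in the proof of \Cref{lem:robust-sunflower-petal-removing} with $[k+1]$-valued $\bm{\tau}$ instead of $[3]$-valued), contradicting $K \notin \mathcal{F}$. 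Hence the $k$ petals are distinct and all contained in $\bm{R}_1 \cup \cdots \cup \bm{R}_k \cup K \subseteq \bm{R}'$, giving the desired $1 - \epsilon k$ bound. The main obstacle is purely bookkeeping: choosing the right coupling (a single $[k+1]$-valued profile $\bm{\tau}$ on $X \setminus K$, with $\Pr[\bm{\tau}_i = \ell] = 1/(2k)$ for $\ell \in [k]$ and the remaining mass on $\ell = k+1$) so that the $k$ "color classes" union up to something dominated by $(\bm{R} \mid \bm{R} \supseteq K)$ while each class individually triggers the $(1/(2k), \epsilon)$-robust-sunflower guarantee, and then the disjointness of the color classes forces the $k$ chosen petals to be distinct.
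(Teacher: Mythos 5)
Your final argument---partitioning $X \setminus K$ into $k$ disjoint colour classes, each containing every element with probability $1/(2k)$, applying the $(1/(2k),\epsilon)$-robust-sunflower guarantee to each class with $K$ adjoined, union bounding over the $k$ classes, and using the pairwise disjointness of the classes together with $K \notin \mathcal{F}$ to force the $k$ selected petals to be distinct while their union (plus $K$) is distributed exactly as $(\bm{R} \mid \bm{R} \supseteq K)$---is correct and is essentially the paper's proof, which implements the same idea with a uniform $[2k]$-valued colouring and the first $k$ colour classes. The abandoned iterative petal-removal attempt (which you yourself note gives only $(2^k-1)\epsilon$) and the detour through independent copies and stochastic domination are unnecessary and should be cut, since the partition coupling you settle on needs neither.
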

\begin{proof}
    Let $\tau \sim [2k]^{X \setminus K}$, and $\tau^i \coloneqq \{j \in X \setminus K \mid \tau_j = i\}
    \cup K$ for $i \in [2k]$. Then $\bm{\tau}^1 \cup \dots \cup \bm{\tau}^k$ is distributed equivalently
    to $(\bm{R} \mid \bm{R} \supseteq K)$ where $\bm{R} \sim 2^X$. On the other hand, by the definition
    of the $(1/(2k), \epsilon)$-robust sunflower, for each $i \in [2k]$ we get
    \[
        \Pr[\exists j \in [m]\colon \bm{\tau}^i \supseteq N_j] \ge 1 - \epsilon.
    \]
    Since $\bm{\tau}^i \cap \bm{\tau}^{i'} = K$ for any $i \neq i' \in [2k]$, the lemma follows by the union bound over $i \in [k]$.
\end{proof}

\section{Sampling Uniform Hamming Weight \texorpdfstring{$k$}{k} Distributions}
In this section we prove the following results mentioned in the introduction, which we restate here for convenience.

\MainThm*

\MainThmUone*

We first prove \Cref{thm:U1-main}, in \Cref{sec:proof-of-U1main}. We then prove \Cref{item:subpoly} of \Cref{thm:main} in \Cref{sec:subpoly}, and \Cref{item:sublinear} of the \namecref{thm:main} in \Cref{sec:sublinear}. We prove the ``moreover'' part in \Cref{sec:moreover-of-thm-main}.


\subsection{Proof of \texorpdfstring{\Cref{thm:U1-main}}{Theorem \ref{thm:U1-main}}}
\label{sec:proof-of-U1main}
We prove a more general result which immediately yields \Cref{thm:U1-main}.

First let us sketch a proof of \Cref{thm:U1-main} for $d$-local functions. Suppose that $\Delta(\bm{U}_1^n,\bm{X}) \leq 1-\eta$. Call a coordinate $i$ \emph{good} if $\Pr[\bm{X} = e_i] \geq 1/n^2$. Since $\Pr[U_1 = e_i] = 1/n$, many coordinates are good: at least $\Omega(\eta n)$.

Let $\bm{Y} \sim \{0,1\}^m$ denote the random input bits. Each $X_i$ depends on some subset $N_i
\subseteq [m]$ of coordinates of size at most $d = \tau \log n/\log\log n$, say $\bm{X}_i =
f_i(\bm{Y}_{N_i})$. 

For each good coordinate $i$, we choose an assignment $\alpha_i \in f_i^{-1}(1)$ which maximizes the
conditional probability $\Pr[\bm{X} = e_i \mid \bm{Y}_{N_i} = \alpha_i]$, that is, the probability that
if $\bm{Y}_{N_i} = \alpha_i$ then all other output bits are $0$. This probability is at least $1 / (2^dn^2)
= \Omega(1 / n^3)$.

The assignments $\alpha_i$ do not necessarily agree with each other. However, a random assignment $\rho$
to $\bm{Y}$ agrees with at least $\Omega(\eta n/2^d) = \Omega(\eta n^{1 - o(1)})$ of them. Let $T$
consists of the domains of the assignments $\alpha_i$ which agree with $\rho$. These domains are distinct
since $N_i = N_j$ implies $\alpha_i = \alpha_j$ and hence
that $\Pr[\bm{X} = e_i \mid \bm{Y}_{N_i} = \alpha_i] = 0$. The choice of $d$
guarantees that $T$ supports a $(1/4,\epsilon)$-robust sunflower $\mathcal S$, for any $\epsilon$ which
is inverse-polynomial in $n$. Let $K$ be the kernel of $\mathcal S$.

If we remove any single petal $i$ from $\mathcal S$ then
by \Cref{lem:robust-sunflower-petal-removing}
the result is a $(1/2,2\epsilon)$-robust
sunflower, and so given that $\bm{Y}_K$ agrees with $\rho$, the probability that $\bm{X}_j = 1$ for some
$j \neq i$ is at least $1-2\epsilon$. If we replace the condition with ``$\bm{Y}_{N_i}$ agrees with $\rho$'' (and
so with $\alpha_i$),
then intuitively, the probability can only increase, and this can be formalized
using the FKG inequality (\Cref{thm:FKG}). By definition of $\alpha_i$, this means that $\Pr[\bm{X} =
e_i] \leq |f_i^{-1}(1)| 2 \epsilon \leq 2^{d+1} \epsilon$. Choosing $\epsilon = 1/2^{d+1}n^2$ shows that
$i$ is not good, and we reach a contradiction.

\medskip

We move on to prove the generalization of \Cref{thm:U1-main}.

\begin{theorem}
    \label{thm:tildelog-U1-general}
    Let $\bm{Y} \sim \{0,1\}^m$ be the input bits of the $n$-bit source $\bm{X}$. Suppose that every bit
    of $\bm{X}$ is computed as a DNF of bits of $\bm{Y}$ of size at most $s$ and width at most $d$. For
    every $\kappa \in \mathbb{R}$ there exists a constant $\tau$ such that for $d = \tau \log n / \log \log n$ and $s \le
    \kappa n^\kappa$, we have $\Delta(\bm{X}, \bm{U}_1^n) = 1 - \eta = 1 - n^{-\Omega(1)}$.
\end{theorem}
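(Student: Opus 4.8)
The plan is to follow the sketch given just before the theorem statement, but to carry it out carefully for DNFs of width $d$ and size $s$ rather than single $d$-local terms, and to track all the parameter dependencies on $\kappa$. Assume toward a contradiction that $\Delta(\bm{X},\bm{U}_1^n) \le 1-\eta$ for some $\eta = \omega(n^{-c})$ to be fixed. First I would set up the counting of \emph{good} coordinates: call $i$ good if $\Pr[\bm{X} = e_i] \ge 1/n^2$; since $\sum_i \Pr[\bm{X}=e_i] \ge \Pr[\bm{X} \in \mathrm{supp}(\bm{U}_1^n)] \ge \eta - $ (mass on $0^n$ and higher-weight strings), a standard averaging shows $\Omega(\eta n)$ coordinates are good (being a bit careful that $\Pr[\bm X = 0^n]$ can be large, but the mass spread over weight-$1$ strings is still $\ge \eta - o(1)$ times something, so $\Omega(\eta n)$ goods survive once $\eta$ is not too small). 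For each good $i$, pick a term $t_i$ in the DNF for $\bm{X}_i$ — i.e.\ a partial assignment $\alpha_i$ to a set $N_i \subseteq [m]$, $|N_i| \le d$ — maximizing $\Pr[\bm X = e_i \mid \bm Y_{N_i} = \alpha_i]$; since the DNF has $\le s$ terms, this conditional probability is at least $\Pr[\bm X = e_i]/s \ge 1/(s n^2)$, which is still $n^{-O(1)}$ because $s \le \kappa n^\kappa$.

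Next I would do the \emph{restriction-and-sunflower} step. The domains $N_i$ and assignments $\alpha_i$ need not be mutually consistent, so I sample a uniform $\rho \in \{0,1\}^m$ and keep the good coordinates $i$ whose chosen assignment $\alpha_i$ agrees with $\rho$ on $N_i$; since $|N_i| \le d$, each survives with probability $\ge 2^{-d}$, so in expectation $\Omega(\eta n \cdot 2^{-d}) = \Omega(\eta n^{1-o(1)})$ survive, and we fix a $\rho$ achieving this. Let $\mathcal T = \{N_i : i \text{ survives}\}$; these sets are distinct because $N_i = N_j$ with the same induced $\alpha$ forces the conditional probabilities of $\bm X = e_i$ and $\bm X = e_j$ to be simultaneously positive under the same conditioning, impossible since $e_i \ne e_j$ (here one uses that after conditioning on $\bm Y_{N_i}=\alpha_i$ the event $\bm X = e_i$ has positive probability, so $\bm X = e_j$ must have zero). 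With $|\mathcal T| \ge n^{1-o(1)}$ non-empty sets of size $\le d = \tau \log n/\log\log n$, \Cref{cor:improved-sunflower-cor} with $p = 1/4$ and $\epsilon$ inverse-polynomial (say $\epsilon = 1/(2^{d+1} n^2)$) applies: the required bound $|\mathcal T| \ge d\,(B\log(d/\epsilon)/p)^d$ becomes $d \cdot (O(\log n))^{O(\log n/\log\log n)} = n^{O(1)\cdot \frac{\log\log n + O(1)}{\log\log n}} = n^{O(1)}$, and choosing $\tau$ small enough makes this exponent smaller than the $1-o(1)$ in $|\mathcal T| \ge n^{1-o(1)}$. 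So $\mathcal T$ contains a $(1/4,\epsilon)$-robust sunflower $\mathcal S$ with kernel $K$.

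Then comes the contradiction. Fix any petal $N_i \in \mathcal S$; by \Cref{lem:robust-sunflower-petal-removing}, removing it leaves a $(1/2, 2\epsilon)$-robust sunflower with kernel $K$. Let $Z$ be the distribution of $\bm Y$ conditioned on $\bm Y$ agreeing with $\rho$ on $N_i$ (equivalently on $\alpha_i$); this is still a product distribution on the free coordinates. Since $\bm Y_K$ then agrees with $\rho$ on $K \subseteq N_i$, robustness gives that, under $Z$, with probability $\ge 1 - 2\epsilon$ there is some surviving $j \ne i$ with $\bm Y \supseteq \alpha_j$ (the partial assignment $\alpha_j$ being, on $N_j$, exactly the all-filled pattern that makes $t_j$ true under $\rho$) — here I need the mild observation that ``$\rho$ sets $N_j$ to make $t_j$ true'' is a monotone-in-the-right-coordinates event so the robust-sunflower containment statement, which is about a random \emph{set} $\bm R \supseteq K$, translates to this assignment event; this is where the $[2],[3]$-valued auxiliary colorings in the lemma proofs, together with the FKG inequality (\Cref{thm:FKG}) to pass from conditioning on $\bm Y_K$ to conditioning on $\bm Y_{N_i}$, do the work. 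But if such a $j\ne i$ exists then $\bm X_j = 1$, hence $\bm X \ne e_i$. Therefore $\Pr[\bm X = e_i \mid \bm Y_{N_i} = \alpha_i] \le 2\epsilon$, which contradicts $\Pr[\bm X = e_i \mid \bm Y_{N_i}=\alpha_i] \ge 1/(sn^2)$ once $\epsilon < 1/(2sn^2)$ — and with $\epsilon = \Theta(1/(2^d n^2))$ and $s \le \kappa n^\kappa$ this forces $d \ge \log_2(s)/(1+o(1))$, i.e.\ it fails as long as $\tau$ is small; so in fact no such good $i$ can exist, contradicting that there were $\Omega(\eta n)$ good coordinates. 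Tracing the chain, $\eta$ must be $\le n^{-\Omega(1)}$, i.e.\ $\Delta(\bm X, \bm U_1^n) \ge 1 - n^{-\Omega(1)}$.

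The main obstacle I expect is the \emph{bookkeeping of which monotone/assignment event} the robust-sunflower guarantee delivers and making the FKG step rigorous: the robust sunflower is phrased in terms of a random downward-closed-complement set $\bm R$, whereas what I actually have after restriction is a product distribution on a subset of coordinates in which I want ``some surviving petal's term is satisfied''. Reconciling these — possibly by working throughout with the monotone DNF terms (so each $t_j$ corresponds to the set of coordinates it sets to $1$, and ``$\bm R \supseteq$ that set'' is literally ``$t_j$ is satisfied''), which is exactly the \emph{monotonization} idea flagged in the technique section — and then invoking FKG to move the conditioning from $K$ to $N_i$, is the delicate part; the sunflower-size arithmetic and the good-coordinate counting are routine by comparison. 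A secondary point to get right is that the DNF may have terms that are not monotone, so the reduction to monotone terms (replacing each good bit by the single term $t_i$, viewed as a positive term after flipping the input coordinates dictated by $\rho$) must be done before the sunflower is extracted, and one must check the error $\epsilon$ and the parameter $p$ survive the petal-removal and the two auxiliary-coloring union bounds with the stated constants.
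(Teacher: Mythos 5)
Your argument is essentially the paper's proof, step for step: good coordinates with $\Pr[\bm{X}=e_i]\ge 1/n^2$, choosing a heaviest term of each DNF, a random $\rho$ to extract $\Omega(\eta n 2^{-d})$ good bits with pairwise-distinct term supports, \Cref{cor:improved-sunflower-cor} to find a $(1/4,\epsilon)$-robust sunflower among them, \Cref{lem:robust-sunflower-petal-removing} together with the identification of the agreement set $\{k \mid \bm{Y}_k=\rho_k\}$ with a $p=1/2$ random subset, and \Cref{thm:FKG} to move the conditioning from the kernel to the chosen petal --- which is exactly how the paper resolves the ``monotonization/FKG bookkeeping'' you flag as the delicate point, so no new idea is needed there.

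The one step that fails as written is your choice $\epsilon = 1/(2^{d+1}n^2)$. The final contradiction requires $2\epsilon < 1/(sn^2)$ (your own condition $\epsilon < 1/(2sn^2)$), but $s$ may be as large as $\kappa n^{\kappa}$ while $2^{d}=n^{o(1)}$, so $1/(2^{d+1}n^2)$ is far too large; your closing remark that this ``forces $d \ge \log_2 s$'' is the symptom --- that inequality is false for $d=\tau\log n/\log\log n$, so with your $\epsilon$ no contradiction is reached. The repair is immediate and is what the paper does: take $\epsilon = \Theta(1/(sn^2))$, which is still inverse-polynomial because $s \le \kappa n^{\kappa}$, so $\log(d/\epsilon)=O(\log n)$ and the sunflower-threshold arithmetic (and hence the $\eta = n^{-\Omega(1)}$ bound in the case where no sunflower exists) is unaffected. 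Two smaller points to tighten: the distinctness argument should say that if $N_i=N_j$ (hence $\alpha_i=\alpha_j$, both agreeing with $\rho$) then conditioning on the term forces $\bm{X}_i=\bm{X}_j=1$ simultaneously, so $\Pr[\bm{X}=e_i\mid \bm{Y}_{N_i}=\alpha_i]=0$, rather than your claim that the two conditional probabilities ``cannot be simultaneously positive''; and you should verify the chosen supports are non-empty (an empty term makes its output bit constantly $1$, leaving at most one good coordinate), since \Cref{cor:improved-sunflower-cor} is stated for families of non-empty sets.
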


This implies \Cref{thm:U1-main} since the output of a decision tree of depth $d$ can be represented as a
DNF of size at most $2^d$ and width at most $d$. In our case $d = o(\log n)$ and so $2^d \leq n$ (for
large enough $n$). 

\begin{proof}
    We say that an output bit $i \in [n]$ is \emph{good} if $\Pr[\bm{X} = e_i] \ge 1/n^2$. Let $G
    \subseteq [n]$ be the set of all good bits, and let $\overline{G} = [n] \setminus G$. Let us estimate the size of $G$: $\Pr[\bm{X} \in \{e_i
    \mid i \not\in G\}] \le |\overline{G}|/n^2$, but $\Pr[\bm{U}_1^n \in \{e_i \mid i \notin G\}] =
    |\overline{G}|/n)$, so $|\overline{G}| \cdot (1/n - 1/n^2) \le 1-\eta$, which yields $|G| =
    \Omega(\eta n)$. 

    For each $i \in G$, since each bit of $\bm{X}$ is represented as a DNF we have
    \( \bm{X}_i = \bigvee_{j \in [s_i]} [\bm{Y}_{N_i^j} = \alpha_i^j] \),
    where $N_i^1, \dots, N_i^{s_i} \subseteq [m]$ are sets, $\alpha_i^j \in \{0,1\}^{N_i^j}$ are truth assignments, and $s_i \le s$. 
    By the law of total probability we have
    \[\Pr[\bm{X} = e_i] \le \sum_{j \in [s_i]} \Pr[\bm{X} = e_i \land \bm{Y}_{N_i^j} = \alpha_i^j] \le s \Pr[\bm{X} = e_i \land \bm{Y}_{N_i^{\max}} = \alpha_i^{\max}],\]
    where $N_i^{\max}$ and $\alpha_i^{\max}$ correspond to the term in the DNF maximizing the probability $\Pr[\bm{X} = e_i \land \bm{Y}_{N_i^j} = \alpha_i^j]$. 

    Consider the expected number of good output bits such that $\bm{Y}_{N_i^{\max}} = \alpha_i^{\max}$: 
    \[\mathbb{E}\left[\sum_{i \in G} [\bm{Y}_{N_i^{\max}} = \alpha_i^{\max}] \right] \ge |G| 2^{-d}.\]
    Hence there exists an assignment $\rho$ to the input bits such that for at least $|G| 2^{-d}$ good
    output bits, we have $\rho_{N_i^{\max}} = \alpha^{\max}_i$. Let $T \subseteq G$ be the set of those
    output bits. If $i,j \in T$ then $N_i^{\max} \neq N_j^{\max}$, since otherwise $\bm{Y}_{N_i^{\max}} =
    \alpha_i^{\max}$ implies that also $\bm{Y}_{N_j^{\max}} = \alpha_j^{\max}$ and so $\bm{X} \neq e_i$,
    and so $\Pr[\bm{X} = e_i] = 0$, contradicting $i \in G$. Observe moreover that none of the sets $N_i$
    for $i \in T$ is empty, since otherwise $|T|=1$ and we get an immediate contradiction with the size
    of $G$ for any $d = o(\log n)$.

    \paragraph{Case 1.} $|T| < d (4 B \log (d / \epsilon))^d$. In this case, we immediately get the lower
    bound on $\delta$. Indeed, the inequality $|G| 2^{-d} \le |T| < d (4 B \log (d / \epsilon))^d$
    implies $|G| \le d (8 B \log (d/\epsilon))^d$, which together with $|G| = \Omega(\eta n)$ yields
    $\eta \le d (8 B \log (d/\epsilon))^d / n$. If $\epsilon$ is inverse polynomial in $n$, then for
    small enough $\tau$ we get $\eta = n^{-\Omega(1)}$ with $d = \tau \log n/\log \log n$.

    \paragraph{Case 2.} $|T| \ge  d (4 B \log (d/\epsilon))^d$. Then by \Cref{cor:improved-sunflower-cor}
    there exists a $(1/4, \epsilon)$-robust sunflower formed by the sets $N_{t_1}^{\max}, \dots,
    N_{t_k}^{\max}$ for $\{t_1, \dots, t_k\} \subseteq T$ (recall the sets $N_i^{\max}$ for $i \in T$ are
    all distinct, and none of them is empty). Let $K$ denote the kernel of this sunflower. Consider an
    arbitrary petal $t_i$ of this sunflower. By \Cref{lem:robust-sunflower-petal-removing} we have that
    $\{N_i^{\max}\}_{i \in T \setminus \{t_i\}}$ is a $(1/2, 2\epsilon)$-robust sunflower. Let $\bm{U}$
    be the set of indices such that $\bm{Y}_k = \rho_k$. Then
    \begin{align*}
      \Pr[\bm{X}_j = 1 \text{ for some } j \in T \setminus \{t_i\} \mid \bm{Y}_{N_{t_i}^{\max}} =
      \rho_{N_{t_i}^{\max}}] &\geq\\
      \Pr[\bm{U} \supseteq N_j^{\max} \text{ for some } j \in T \setminus \{t_i\} \mid \bm{U} \supseteq
      N_{t_i}^{\max}] &= \\
      \frac{\Pr[\bm{U} \supseteq N_{t_i}^{\max} \text{ and } \bm{U} \supseteq N_j^{\max} \text{ for some
      } j \in T \setminus \{t_i\} \mid \bm{U} \supseteq K]}{\Pr[\bm{U} \supseteq N_{t_i}^{\max} \mid
      \bm{U} \supseteq K]} &\geq & \text{\Cref{thm:FKG}}\\
      \Pr[\bm{U} \supseteq N_j^{\max} \text{ for some } j \in T \setminus \{t_i\} \mid \bm{U} \supseteq
      K] &\geq \\
      1 - 2\epsilon,
    \end{align*}
    where the last inequality is due to the definition of a $(1 / 2, 2 \epsilon)$-robust
    sunflower. Recall that by the choice of $T$, we have $\rho_{N_{t_i}^{\max}} =
    \alpha_{t_i}^{\max}$. Therefore
    \[
        \Pr[\bm{X} \neq e_{t_i} \mid \bm{Y}_{N_{t_i}^{\max}} = \alpha_{t_i}^{\max}] = \Pr[\bm{X}_j = 1
        \text{ for some } j \in T \setminus \{t_i\} \mid \bm{Y}_{N_{t_i}^{\max}} = \rho_{N_{t_i}^{\max}}]
        \ge 1 - 2\epsilon.
    \]
    Thus $\Pr[\bm{X}=e_{t_i} \land \bm{Y}^{N_{t_i}^{\max}} = \alpha^{\max}_{t_i}] \le 2\epsilon$. By the
    choice of $\alpha^{\max}_{t_i}$, $\Pr[\bm{X}=e_{t_i}] \le s \cdot 2\epsilon$.
    Picking $\epsilon < 1 / (2 s n^2)$, which is inverse polynomial in $n$ as required in Case~1, we get
    that $\Pr[\bm{X}=e_{t_i}] < 1/n^2$, so $t_i$ is bad, which contradicts the choice of $T$.
\end{proof}

\subsection{A Generalized Version of \texorpdfstring{\Cref{thm:tildelog-U1-general}}{Theorem
        \ref{thm:tildelog-U1-general}}}
In this section, we generalize \Cref{thm:tildelog-U1-general} so it can be used to prove
\Cref{item:sublinear} of \Cref{thm:main}. The proof follows the same path as the proof of
\Cref{thm:tildelog-U1-general}, we decided to include both proofs for simplicity. 

\begin{theorem}
\label{thm:general-sunflowers}
     Let $\bm{Y} \sim \{0,1\}^m$ be the input bits of the $n$-bit source $\bm{X}$. Suppose that every bit
     of $\bm{X}$ is computed as a DNF of bits of $\bm{Y}$ of size at most $s$ and width at most $d$. Let
     $t \in [n]$ be a parameter, $\alpha(n)$ be a function and let $\bm{F}$ be a distribution over
     $\{0,1\}^n$ with the following properties:
       \begin{itemize}
           \item For every set $T \subseteq [n]$ such that $T \ge n/2$ we have $\Pr[\bm{F}_T = 0^T] \le \alpha(n)$;
           \item $\Pr[|\bm{F}| > t] \le \alpha(n)$.
       \end{itemize}
    If $2 d \cdot t \cdot (40 B t \log n)^d \le n$ then $\Delta(\bm{X}, \bm{F}) \ge 1 - 2\alpha(n) -
    1/2n$. Here $B$ is the constant from \Cref{cor:improved-sunflower-cor}.
\end{theorem}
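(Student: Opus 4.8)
The plan is to follow the proof of \Cref{thm:tildelog-U1-general} almost verbatim, making two substitutions: the single‑coordinate events ``$\bm{X}=e_i$'' become ``$\bm{X}_i=1$ and $|\bm{X}|\le t$'', and the step ``two satisfied petals already contradict weight $1$'' becomes ``$t+1$ satisfied petals contradict weight $\le t$''. Concretely, I would call $i\in[n]$ \emph{good} if $q_i:=\Pr[\bm{X}_i=1 \text{ and } |\bm{X}|\le t]\ge 1/(2n^2)$, let $G$ be the set of good bits, and split on $|G|$. The case $|G|\le n/2$ is the easy one: with $T:=[n]\setminus G$ (so $|T|\ge n/2$) one has $\Pr[\bm{X}_T\ne 0^T \text{ and } |\bm{X}|\le t]\le\sum_{i\in T}q_i<|T|/(2n^2)\le 1/(2n)$, so for the event $E:=\{x:|x|>t\}\cup\{x:x_T=0^T\}$ we get $\Pr[\bm{X}\in E]>1-1/(2n)$, while the two hypotheses on $\bm{F}$ give $\Pr[\bm{F}\in E]\le\Pr[|\bm{F}|>t]+\Pr[\bm{F}_T=0^T]\le 2\alpha(n)$; hence $\Delta(\bm{X},\bm{F})\ge 1-2\alpha(n)-1/(2n)$, as desired. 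All the work is in ruling out $|G|>n/2$, using the hypothesis $2dt(40Bt\log n)^d\le n$.

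So assume $|G|>n/2$. First I would note that at most $t$ output bits can be identically $1$ (otherwise $|\bm{X}|>t$ always and $G=\emptyset$); discarding these and letting $G'\subseteq G$ be the remaining good bits, $|G'|\ge|G|-t=\Omega(n)$ (the hypothesis forces $t=o(n)$), and every DNF term of a bit in $G'$ is nonempty of width in $[1,d]$. For $i\in G'$ pick the DNF term $(N_i^{\max},\alpha_i^{\max})$ maximizing $\Pr[\bm{Y}_{N_i^j}=\alpha_i^j \text{ and } |\bm{X}|\le t]$, so $q_i\le s\cdot\Pr[\bm{Y}_{N_i^{\max}}=\alpha_i^{\max} \text{ and } |\bm{X}|\le t]$; averaging over a uniform $\rho\in\{0,1\}^m$ (and using $\Pr_\rho[\rho_{N_i^{\max}}=\alpha_i^{\max}]\ge 2^{-d}$) I fix $\rho$ so that $T':=\{i\in G':\rho_{N_i^{\max}}=\alpha_i^{\max}\}$ has $|T'|\ge|G'|2^{-d}=\Omega(n2^{-d})$. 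The combinatorial heart is that no set $N$ can occur as $N_i^{\max}$ for more than $t$ bits of $T'$: all such bits carry the same assignment on $N$ (namely $\rho_N$), so if $t+1$ of them did, conditioning on $\bm{Y}_N=\rho_N$ would force those $t+1$ output bits to $1$, giving $|\bm{X}|>t$ and thus $q_i=0$ for each of them, a contradiction. Keeping one representative per distinct set therefore leaves $\ge|T'|/t$ distinct nonempty sets of size $\le d$; a short calculation shows that under the hypothesis this count is at least $d\,(2(t+1)B\log(d/\epsilon))^d$ provided $\epsilon$ is inverse‑polynomial in $n$, so by \Cref{cor:improved-sunflower-cor} with $p=1/(2(t+1))$ these sets contain a $(1/(2(t+1)),\epsilon)$-robust sunflower $\{N_j\}_{j\in J}$ with kernel $K$. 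I would also note that such a sunflower automatically has $|J|\ge(1-\epsilon)\cdot2(t+1)\ge t+1$ petals, since each petal is covered by a $p$-random set with probability at most $p=1/(2(t+1))$, so the covering guarantee in the definition forces many petals.

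The endgame is the ``$t+1$ petals'' analogue of the argument in \Cref{thm:tildelog-U1-general}. Set $\bm{U}:=\{k:\bm{Y}_k=\rho_k\}$, so each coordinate lies in $\bm{U}$ independently with probability $1/2$ and ``$\bm{U}\supseteq N_j$'' is equivalent to ``$\bm{Y}_{N_j}=\alpha_j^{\max}$'', which implies ``$\bm{X}_j=1$''. Let $\mathcal{E}$ be the (monotone increasing) event that at least $t+1$ of the petals $N_j$, $j\in J$, are contained in $\bm{U}$; note $\mathcal{E}$ implies $|\bm{X}|>t$. For any fixed $j_0\in J$, conditioning on $\bm{U}\supseteq N_{j_0}$ (which entails $\bm{U}\supseteq K$) and applying \Cref{thm:FKG} to $\mathcal{E}$ and the monotone event ``$\bm{U}\supseteq N_{j_0}$'' in the product distribution $(\bm{U}\mid\bm{U}\supseteq K)$, I get $\Pr[\mathcal{E}\mid\bm{U}\supseteq N_{j_0}]\ge\Pr[\mathcal{E}\mid\bm{U}\supseteq K]\ge 1-(t+1)\epsilon$, the last step being \Cref{lem:k-petals-in-robust-sunflower} with $k=t+1$. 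Hence $q_{j_0}\le s\cdot\Pr[\bm{Y}_{N_{j_0}}=\alpha_{j_0}^{\max} \text{ and } |\bm{X}|\le t]\le s\cdot\Pr[\overline{\mathcal{E}}\mid\bm{U}\supseteq N_{j_0}]\le s(t+1)\epsilon$, and taking $\epsilon:=1/(4s(t+1)n^2)$ — which is $\le1/2$ and is inverse‑polynomial in $n$ as required above — yields $q_{j_0}<1/(2n^2)$, contradicting $j_0\in G$. So $|G|>n/2$ is impossible and the case $|G|\le n/2$ always applies, completing the proof.

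The main obstacle — and the only genuine difference from \Cref{thm:tildelog-U1-general} — is the bookkeeping when $|G|>n/2$: coping with output bits that share a dominant term (the ``at most $t$ bits per set'' observation, which costs a factor of $t$ and is exactly why the hypothesis carries an extra $t$ relative to the naive $2d(40Bt\log n)^d$), and guaranteeing that the extracted robust sunflower has at least $t+1$ petals (settled by the observation that a $(1/(2(t+1)),\epsilon)$-robust sunflower automatically has $\ge t+1$ petals). Everything else — the choice of ``good'', the averaging that produces $\rho$, and the FKG plus robust‑sunflower finish — transcribes directly, with $\epsilon$ kept inverse‑polynomial throughout (which also tacitly uses $s=n^{O(1)}$, as the $\log n$ in the hypothesis presupposes).
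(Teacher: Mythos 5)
Your proposal is correct and takes essentially the same approach as the paper's proof: the same notion of good bits, the $|G|\le n/2$ versus $|G|>n/2$ dichotomy, choosing the likeliest DNF term, averaging to fix $\rho$, the at-most-$t$-outputs-per-set multiplicity bound, extraction of a robust sunflower via \Cref{cor:improved-sunflower-cor}, and the FKG-plus-\Cref{lem:k-petals-in-robust-sunflower} finish contradicting goodness. The only (harmless) deviation is that you ask for $t+1$ covered petals of the full $(1/(2(t+1)),\epsilon)$-robust sunflower rather than, as the paper does, removing the conditioned-on petal via \Cref{lem:robust-sunflower-petal-removing} and asking for $t$ others; this works just as well, since $t+1$ distinct covered petals already force $|\bm{X}|>t$, and your constant-level bookkeeping is at the same (loose but acceptable) level of rigor as the paper's.
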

\begin{proof}
    We say that an output bit $i \in [n]$ is \emph{good} if $\Pr[\bm{X}_i = 1 \land |\bm{X}| \le t] \ge
    1/n^2$. Let $G$ be the set of good bits and let $\overline{G} \coloneqq [n] \setminus G$. Suppose
    that $|G| \le n/2$. Then by the conditions on $\bm{F}$ we have $\Pr[\bm{F}_{\overline{G}} =
    0^{\overline{G}}] \le \alpha(n)$. On the other hand $\Pr[\bm{X}_{\overline{G}} \neq 0^{\overline{G}}
    \land |\bm{X}| \le t] < |\overline{G}|/n^2 < 1/2n$. Then $\Delta(\bm{X}, \bm{F}) \ge 1-2\alpha(n)
    -1/2n$, as required. In the rest of the proof, we derive a contradiction with $|G| \ge n/2$. 

    As in the proof of \Cref{thm:tildelog-U1-general}, we pick the likeliest term in the DNF
    representation of each of the output bits. For each $i \in G$, since each bit of $\bm{X}$ is
    represented as a DNF, we have 
    \( \bm{X}_i = \bigvee_{j \in [s_i]} [\bm{Y}_{N_i^j} = \alpha_i^j] \),
    where $N_i^1, \dots, N_i^{s_i} \subseteq [m]$ are sets, $\alpha_i^j \in \{0,1\}^{N_i^j}$ are truth assignments, and $s_i \le s$. 
    By the law of total probability, we have
    \[\Pr[\bm{X}_i = 1 \land |\bm{X}| \le t] \le \sum_{j \in [s_i]} \Pr[|\bm{X}| \le t \land
        \bm{Y}_{N_i^j} = \alpha_i^j] \le s \Pr[|\bm{X}| \le t \land \bm{Y}_{N_i^{\max}} =
        \alpha_i^{\max}],\]
    where $N_i^{\max}$ and $\alpha_i^{\max}$ correspond to the term in the DNF maximizing the probability
    $\Pr[|\bm{X}| \le t \land \bm{Y}_{N_i^j} = \alpha_i^j]$. 
    The expected number of good output bits with $\bm{Y}_{N_i^{\max}} = \alpha_i^{\max}$ is at least
    $2^{-d} |G|$, so there exists an assignment $\rho$ to the input bits such that $\rho_{N_i^{\max}} =
    \alpha_i^{\max}$ for at least $|G| 2^{-d}$ good output bits. Let $T \subseteq G$ be the set of these
    output bits.
    
    Let us estimate how many distinct elements are in the set $\mathcal{N} \coloneqq \{N_i^{\max} \mid i
    \in T\}$. Suppose there exist $i_1, \dots, i_{t+1} \in T$ such that $N_{i_1}^{\max} = \dots =
    N_{i_{t+1}}^{\max}$. Then, by the definition of $\rho$, we have $\alpha_{i_1}^{\max} = \dots =
    \alpha_{i_{t+1}}^{\max}$ as well. Thus $\bm{Y}_{N_{i_1}^{\max}} = \alpha_{i_1}^{\max}$ implies that
    for every $j \in [t+1]$ we have $\bm{Y}_{N_{i_j}^{\max}} = \alpha_{i_j}^{\max}$, which in turn
    implies that $|\bm{X}| \ge t+1$, and so $\Pr[\bm{Y}_{N_{i_1}^{\max}} = \alpha_{i_1}^{\max} \land
    |\bm{X}| \le t] = 0$, which contradicts that $i_1$ is good. Hence $|\mathcal{N}| \ge |T|/t \ge |G|
    2^{-d}/t \ge 2^{-d}n/2t$.

    Let $\epsilon > n^{-5}$ be a parameter to be chosen later. By the condition on $n$ we have
    $|\mathcal{N}| \ge 2^{-d}n/2t \ge d \cdot (4 B t \log (d/\epsilon))^d$, and so $\mathcal{N}$ contains
    a $(1/(4t), \epsilon)$-robust sunflower $\mathcal{S}$. Let $K$ denote the kernel of this sunflower.
    
    Fix an arbitrary petal $p \in \mathcal{N}$. Then $\mathcal{N} \setminus \{p\}$ is a
    $(1/(2t),2\epsilon)$-robust sunflower by \Cref{lem:robust-sunflower-petal-removing}. Now by
    \Cref{lem:k-petals-in-robust-sunflower} we have
    \[\Pr_{\bm{R} \sim 2^{[m]}} \left[\text{There are }t\text{ distinct petals of }\mathcal{N}\setminus
          \{p\}\text{ contained in }\bm{R} \mid \bm{R} \supseteq K\right] \ge 1-2t\epsilon.\]
    Let $P \subseteq T$ be the indices of the output bits corresponding to the elements of
    $\mathcal{N}\setminus \{p\}$, let $i$ be the index of the output bit corresponding to the petal $p$,
    and let $\bm{U}$ be the set of indices of input bits such that $\bm{Y}_t = \rho_t$. Then 
    \begin{align*}
      \Pr[|\bm{X}| > t \mid \bm{Y}_{N_i^{\max}} = \rho_{N_i^{\max}}] &=\\
      \Pr[|\bm{X}_{[n]\setminus \{i\}}| \ge t \mid \bm{Y}_{N_i^{\max}} = \rho_{N_i^{\max}}]& \ge \\
      \Pr\left[\sum\limits_{j \in P} [\bm{U} \supseteq N_j^{\max}] \ge t \,\middle|\, \bm{U} \supseteq
        N_i^{\max}\right] &= \\
      \frac{ \Pr[\sum_{j \in P \cup \{i\} } [\bm{U} \supseteq N_j^{\max}] \ge t \mid
          \bm{U} \supseteq K]}{\Pr[\bm{U} \supseteq N_i^{\max} \mid \bm{U} \supseteq K]} &\ge \text{
                                                                                           \Cref{thm:FKG}}\\
      \Pr\left[\sum_{j \in P} [\bm{U} \supseteq N_j^{\max}] \ge t \,\middle|\, \bm{U} \supseteq
      K\right]&\ge \\
      1 - 2t\epsilon.
    \end{align*}
    Recall that by the choice of $T$ we have $\rho_{N_i^{\max}} = \alpha_i^{\max}$, hence
    $\Pr[|\bm{X}| > t \mid \bm{Y}_{N_i^{\max}} = \alpha_i^{\max}] \ge 1 - 2t\epsilon$.
    Thus \[\Pr[|\bm{X}|\le t \land \bm{X}_i = 1] \le s \cdot \Pr[|\bm{X}| \le t \land \bm{Y}_{N_i^{\max}}
        = \alpha_i^{\max}] \le 2st \cdot \epsilon.\] 
    Picking $\epsilon = 1/(4 st n^2) \ge n^{-5}$, we get a contradiction with $i$ being good. 
\end{proof}

\subsection{Subpolynomial Weights}
\label{sec:subpoly}
Although \Cref{thm:general-sunflowers} implies \Cref{item:subpoly} of \Cref{thm:main}, we give a simpler
proof via a reduction from $\bm{U}_1^n$. 
\begin{lemma}
\label{lem:reduction-general}
    Let $S \subseteq \{0,1\}^n$ and let $\bm{S} \sim S$. Suppose that $\bm{S}$ can be sampled with a
    depth-$d$ decision forest with error $\eta$. Assume furthermore that for each $s \in S$ there exists
    a decision tree $T_s$ of depth $k$ that accepts $s$ and does not accept any of $S \setminus
    \{s\}$. Then there exists a decision depth-$kd$ sampler for $\bm{U}^{|S|}_1$ with error $\eta$.
\end{lemma}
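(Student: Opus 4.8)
The plan is to post-compose the given depth-$d$ sampler for $\bm{S}$ with the family of ``certificate'' decision trees $\{T_s\}_{s\in S}$: the new sampler uses the same input bits, and its output coordinate indexed by $s\in S$ computes $T_s$ applied to the output of the $\bm{S}$-sampler. The guiding observation is that the map $g\colon\{0,1\}^n\to\{0,1\}^{|S|}$ (whose $|S|$ coordinates we index by the elements of $S$) defined by $g(x)_s \coloneqq T_s(x)$ sends the uniform distribution on $S$ \emph{exactly} to $\bm{U}^{|S|}_1$. Indeed, the hypothesis on $T_s$ says precisely that for $x=s$ we have $T_s(x)=1$ and $T_{s'}(x)=0$ for every $s'\in S\setminus\{s\}$, i.e.\ $g(s)=e_s$, and these vectors are pairwise distinct; so $g$ is injective on $S$ and pushes the uniform distribution on $S$ forward to the uniform distribution on $\{e_s:s\in S\}$, which is $\bm{U}^{|S|}_1$.

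First I would fix notation: let $C\colon\{0,1\}^m\to\{0,1\}^n$ be the depth-$d$ decision forest with $\Delta(C(\bm{Y}),\bm{S})\le\eta$ for uniform $\bm{Y}\sim\{0,1\}^m$, and let $g$ be as above. Then I would record the pushforward claim $g(\bm{S})=\bm{U}^{|S|}_1$ just argued. Next I would describe the composed sampler: on input $\bm{Y}$, to compute output bit $s$ one runs $T_s$, and each time $T_s$ queries the $i$-th output bit of $C$ one instead evaluates the depth-$d$ decision tree $C_i$ on $\bm{Y}$ and feeds its value back to $T_s$. Since $T_s$ makes at most $k$ queries along any root-to-leaf path and each is replaced by a decision tree of depth at most $d$ over $\bm{Y}$ (re-querying a bit of $\bm{Y}$ already read costs nothing), the composed decision tree for output bit $s$ has depth at most $kd$; hence the new object is a depth-$kd$ decision forest, and its output distribution is by construction $g(C(\bm{Y}))$.

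Finally I would invoke the fact that applying a fixed function cannot increase statistical distance: taking $g$ of both $C(\bm{Y})$ and $\bm{S}$ gives $\Delta\bigl(g(C(\bm{Y})),\bm{U}^{|S|}_1\bigr)=\Delta\bigl(g(C(\bm{Y})),g(\bm{S})\bigr)\le\Delta\bigl(C(\bm{Y}),\bm{S}\bigr)\le\eta$, which is exactly the claimed error bound for the depth-$kd$ sampler of $\bm{U}^{|S|}_1$.

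I do not expect a genuine obstacle here; the only two points needing care are (i) confirming that the certificate property yields $g(s)=e_s$ with the $e_s$ distinct, so that no probability mass is collapsed and $g(\bm{S})$ is honestly uniform over the $1$-slice, and (ii) the depth bookkeeping for the composition $T_s\circ C$, where one must note that repeated queries to the same input bit of $\bm{Y}$ do not add to the depth, so $kd$ is an honest upper bound.
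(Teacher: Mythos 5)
Your proof is correct and follows essentially the same route as the paper: compose each certificate tree $T_s$ with the depth-$d$ decision trees of the sampler for $\bm{S}$, obtaining a depth-$kd$ forest whose output is $g(C(\bm{Y}))$ where $g(x)_s = T_s(x)$ and $g(s)=e_s$ on $S$. The only (harmless) deviation is the error accounting: you conclude $\Delta\bigl(g(C(\bm{Y})),\bm{U}^{|S|}_1\bigr)\le\eta$ via the data-processing inequality, whereas the paper computes the statistical distance directly and writes it as an equality (which, strictly speaking, would additionally require that no string outside $S$ is accepted by exactly one $T_s$); your one-sided bound is all that the application in \Cref{cor:Uk-main} needs, so nothing is lost.
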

\begin{proof}
    Let $\bm{Y}$ be the distribution sampled by the sampler for $\bm{S}$. For each output bit of our
    sampler for $\bm{U}^{|S|}_1$ we take a unique element $s \in S$ and implement each of the queries of
    $T_s$ via the query to the bits of $\bm{Y}$ (which makes at most $d$ queries to the input bits). This
    results in a $kd$-deep decision tree $T'_s$. Let $\bm{X}$ be the sampled distribution. Then
    \begin{align*}
      \Delta(\bm{X}, \bm{U}^{|S|}_1) &=\frac{1}{2}\left(\Pr[w(\bm{X}) \neq 1] + \sum_{s\in S}
                                       \bigl|\Pr[\bm{X} = e_s] - 1/|S|\bigr|\right)\\
                                     &=\frac{1}{2}\left(\Pr[\bm{Y}\not\in S] + \sum_{s\in S}
                                       \bigl|\Pr[\bm{Y} = s] - 1/|S|\bigr|\right)=\Delta(\bm{Y}, \bm{S})
                                       = \eta.
    \end{align*}
\end{proof}

\begin{restatable}{corollary}{Ukmain}
\label{cor:Uk-main}
    For some constant $\tau' > 0$ and every $\epsilon \in (0,1)$ every $(\tau' \log^{\epsilon}
    n)$-decision depth sampler outputs a distribution $(1-n^{-\Omega(k)})$-far from $\bm{U}^n_k$ for $k
    \in [\log n, 2^{\log^{1-\epsilon} n}]$. If $k < \log_2 n$, this holds for every $(\tau'
    \log^{\epsilon} n/\log \log n)$-local sampler.
\end{restatable}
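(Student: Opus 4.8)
The plan is to derive \Cref{cor:Uk-main} directly from \Cref{thm:U1-main} using the reduction in \Cref{lem:reduction-general}. The key observation is that the $k$-slice of $\{0,1\}^n$ admits cheap ``isolating'' decision trees: for any fixed $s$ with $w(s)=k$, there is a decision tree of depth exactly $k$ that accepts $s$ and rejects every other weight-$k$ string — namely, query the $k$ coordinates where $s$ is $1$ and accept iff all of them are $1$. (Any other weight-$k$ string differs from $s$ in at least one such coordinate, hence is rejected.) So if $\bm{U}_k^n$ is samplable with decision depth $d$ and error $\eta$, then by \Cref{lem:reduction-general} we get a decision-depth-$kd$ sampler for $\bm{U}_1^{\binom{n}{k}}$ with the same error $\eta$.

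Next I would invoke \Cref{thm:U1-main}: there is a constant $\tau$ so that any distribution sampled with decision depth $\tau \log N / \log\log N$ is $(1-N^{-\Omega(1)})$-far from $\bm{U}_1^N$, where here $N = \binom{n}{k}$. So if $kd \le \tau \log \binom{n}{k} / \log\log\binom{n}{k}$, we conclude $\eta \ge 1 - \binom{n}{k}^{-\Omega(1)} = 1 - n^{-\Omega(k)}$, using $\binom{n}{k} \ge (n/k)^k$ and $k \le 2^{\log^{1-\epsilon} n} = n^{o(1)}$, so $\log\binom{n}{k} = \Theta(k\log(n/k)) = \Theta(k \log n)$ and $\log\log\binom{n}{k} = \Theta(\log\log n)$ (the lower order terms from $k$ being absorbed since $\log k \le \log^{1-\epsilon} n \cdot \ln 2 = o(\log n)$). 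Therefore the condition $kd \le \tau \log\binom{n}{k}/\log\log\binom{n}{k}$ reduces to $d \le \tau' \log n / \log\log n$ for a suitable constant $\tau'$ — which certainly holds when $d \le \tau' \log^\epsilon n$ for all large $n$. This gives the first statement. For $k < \log_2 n$, the same computation gives $\log\binom{n}{k} = \Theta(k\log n)$ still, but now $\log\log\binom{n}{k} = \Theta(\log(k\log n)) = \Theta(\log\log n)$, and the locality/depth budget per output bit of the reduced sampler is $kd$; matching against $\tau \log\binom{n}{k}/\log\log\binom{n}{k} = \Theta(k\log n/\log\log n)$ forces $d \le \tau' \log n/(\log\log n)^2$ in the worst case $k=\Theta(1)$, which is why the statement is phrased with the extra $\log\log n$ factor and as a \emph{locality} bound (so one can also appeal to the locality version of the warm-up argument if a cleaner constant is wanted); I would just track the constants carefully here.

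The routine part is the arithmetic bounding $\log\binom{n}{k}$ and $\log\log\binom{n}{k}$ in the two regimes $k \in [\log n, 2^{\log^{1-\epsilon}n}]$ and $k \in [1,\log_2 n)$, and checking that $\tau\log^\epsilon n \cdot k \le \tau \log\binom{n}{k}/\log\log\binom{n}{k}$ for an appropriate relabeling of the constant; this is where $\epsilon$ enters — we need $k \cdot \log^\epsilon n = O(\log\binom{n}{k}/\log\log\binom{n}{k}) = O(k\log n/\log\log n)$, i.e.\ $\log^\epsilon n = O(\log n/\log\log n)$, which holds for any fixed $\epsilon < 1$ and all large $n$. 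The one genuine subtlety — and the step I expect to require the most care — is verifying that \Cref{lem:reduction-general} applies verbatim: we need the sampler for $\bm{U}_k^n$ to be a decision forest (not merely local), and we need the isolating trees $T_s$ to have depth at most $k$, both of which hold, but one should double-check that composing a depth-$k$ decision tree with depth-$d$ decision trees at its leaves indeed yields a depth-$kd$ decision tree (it does, by substituting each query of $T_s$ with the corresponding depth-$\le d$ tree computing that output bit of $\bm{Y}$), and that the resulting distribution has exactly the same statistical distance to $\bm{U}_1^{\binom{n}{k}}$ as the original has to $\bm{U}_k^n$, which is precisely the content of \Cref{lem:reduction-general}.
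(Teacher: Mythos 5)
Your reduction is exactly the paper's: isolate each weight-$k$ string by the depth-$k$ tree that queries its $k$ ones, apply \Cref{lem:reduction-general} to get a depth-$kd$ sampler for $\bm{U}_1^{\binom{n}{k}}$ with the same error, and invoke \Cref{thm:U1-main} under the condition $kd \le \tau \log\binom{n}{k}/\log\log\binom{n}{k}$. The structure is sound and matches the paper; the problem is in your arithmetic for that condition. You claim $\log\log\binom{n}{k} = \Theta(\log\log n)$ because ``$\log k = o(\log n)$,'' but $\log\log\binom{n}{k} = \Theta(\log k + \log\log(n/k))$, and at the top of the range, $k = 2^{\log^{1-\epsilon} n}$, this is $\Theta(\log^{1-\epsilon} n) \gg \log\log n$. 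Consequently the condition does \emph{not} reduce to $d \le \tau'\log n/\log\log n$; the correct budget is $d = \Omega\bigl(\log(n/k)/(\log k + \log\log n)\bigr)$, which for large $k$ in the range is $\Theta(\log^{\epsilon} n)$. This is precisely where the exponent $\epsilon$ in the statement comes from --- not from the inequality $\log^{\epsilon} n = O(\log n/\log\log n)$ that you cite, which would wrongly suggest the corollary could be stated with depth $\log n/\log\log n$ for all such $k$.

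The conclusion nevertheless survives once the estimate is fixed: $\log\binom{n}{k} \ge k\log(n/k) \ge k\log n/2$ and $\log\log\binom{n}{k} \le \log k + \log\log n + O(1) \le 2\log^{1-\epsilon} n$, so $\log\binom{n}{k}/\log\log\binom{n}{k} \ge \tfrac{1}{4} k\log^{\epsilon} n$, and $d \le \tau'\log^{\epsilon} n$ with $\tau' \le \tau/4$ satisfies the hypothesis of \Cref{thm:U1-main}, giving error $1-\binom{n}{k}^{-\Omega(1)} = 1-n^{-\Omega(k)}$. Your small-$k$ discussion is also off: for $k < \log_2 n$ the condition $kd \le \tau\log\binom{n}{k}/\log\log\binom{n}{k} = \Theta(k\log n/\log\log n)$ has the $k$ cancel, giving budget $\Theta(\log n/\log\log n)$, not $\log n/(\log\log n)^2$; the statement's extra $1/\log\log n$ factor is simply a weaker claim that fits under this budget (mirroring the theorem's phrasing), not a consequence of a tighter constraint at $k=\Theta(1)$. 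So: same route as the paper, with a fixable but real error in the central quantitative step and in the explanation of where $\epsilon$ enters.
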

\begin{proof}
    The decision tree that queries all the elements of a $k$-size set and accepts iff all of them are $1$
    satisfies the condition in \Cref{lem:reduction-general}. If we had a $(1-\delta)$-error sampler for
    $\bm{U}^n_k$, we would get a $(1-\delta)$-error sampler for $\bm{U}^{\binom{n}{k}}_1$ with decision
    depth $kd$, which by \Cref{thm:U1-main} yields that $\delta=\binom{n}{k}^{-\Omega(1)}=n^{-\Omega(k)}$
    whenever $kd \le \tau \log \binom{n}{k} / \log \log \binom{n}{k}$. Since $\log \binom{n}{k} =
    \Theta(k \log (n/k))$, we get $d = \Omega(\log (n/k) / (\log k + \log \log n))$.
\end{proof}

\subsection{Sublinear Weights}
\label{sec:sublinear}
In this section, we prove \Cref{item:sublinear} of \Cref{thm:main}.
\begin{lemma}
    Suppose $\bm{X}$ is sampled with decision depth $d$. Then for every small enough $\epsilon > 0$ there
    exists a constant $\tau$ such that if $d \le \tau \log(n/k) / \log\log (n/k)$ then $\Delta(\bm{X},
    \bm{U}_k^n) \ge 1 - 2\epsilon - \frac{1}{2n}$.
\end{lemma}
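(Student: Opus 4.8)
The plan is to deduce this from \Cref{thm:general-sunflowers} by passing to a short prefix of the output. Fix a small $\epsilon>0$, put $c:=\lceil 2\ln(2/\epsilon)\rceil$ and $t:=\lceil (c+1)/\epsilon\rceil$, and let $\tau$ be a sufficiently small constant depending only on $\epsilon$, to be constrained as we go. We may assume that $n/k$ exceeds any constant (depending on $\epsilon$) that the argument needs: if $n/k=O_\epsilon(1)$ then $d\le\tau\log(n/k)/\log\log(n/k)$ forces $d=0$ for small $\tau$, so $\bm X$ is a fixed string and $\Delta(\bm X,\bm U_k^n)\ge 1-1/\binom nk\ge 1-\epsilon$. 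We may also assume $k\ge 2c$: if $k$ is bounded then $\binom nk=\mathrm{poly}(n)$, and assigning to each of $\binom nk$ new output bits the conjunction of the corresponding distinct $k$-subset of the old outputs (the reduction of \Cref{lem:reduction-general}, as in the proof of \Cref{cor:Uk-main}) turns a depth-$d$ sampler for $\bm U_k^n$ into a depth-$kd$ sampler for $\bm U_1^{\binom nk}$ at the same statistical distance; since $kd=O_\epsilon(\log n/\log\log n)$, for $\tau$ small enough \Cref{thm:U1-main} shows this distance is $1-n^{-\Omega(1)}\ge 1-\epsilon$ for large $n$.

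So assume $k\ge 2c$ and $n/k$ large, and set $m:=\lceil cn/k\rceil$; then $cn/k\le m\le n$, $m\to\infty$, $\log m=(1+o(1))\log(n/k)$, and $\log\log m=(1+o(1))\log\log(n/k)$. Since statistical distance does not increase under the coordinate-projection map $x\mapsto x_{[m]}$, it suffices to lower-bound $\Delta(\bm X_{[m]},\bm F)$, where $\bm F:=(\bm U_k^n)_{[m]}$ is the marginal of $\bm U_k^n$ on its first $m$ coordinates. The source $\bm X_{[m]}$ still has decision depth $\le d$, so each of its $m$ output bits is a DNF of width $\le d$ and size $\le 2^d=m^{o(1)}$, which is well within the regime where \Cref{thm:general-sunflowers} applies. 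I would invoke that theorem with $\alpha(m):=\epsilon$ and the parameter $t$ above, and check its two hypotheses on $\bm F$. For $T\subseteq[m]$ with $|T|\ge m/2$ (and $|T|\le n-k$, since otherwise the probability is $0$), $\Pr[\bm F_T=0^T]=\binom{n-|T|}{k}\big/\binom nk\le(1-|T|/n)^k\le e^{-km/(2n)}\le e^{-c/2}<\epsilon$. And $|\bm F|$ is hypergeometric with mean $km/n\in[c,c+1]$, so Markov's inequality gives $\Pr[|\bm F|>t]\le(c+1)/t\le\epsilon$.

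It remains to verify the quantitative hypothesis $2dt\,(40Bt\log m)^d\le m$. From $d\le\tau\log(n/k)/\log\log(n/k)\le 2\tau\log m/\log\log m$ (valid for $n/k$ large) we get $\log\bigl((40Bt\log m)^d\bigr)=d\bigl(\log\log m+\log(40Bt)\bigr)\le 2\tau\log m\,(1+o(1))\le 3\tau\log m$, so $(40Bt\log m)^d\le m^{3\tau}$, while $2dt\le 4\tau t\log m\le\tfrac12\log m$ once $\tau\le 1/(8t)$; hence the left side is at most $m^{3\tau+o(1)}<m$ for $\tau<1/4$ and $n/k$ large. With $\tau$ chosen this small (and also small enough for the bounded-$k$ reduction above), \Cref{thm:general-sunflowers} gives $\Delta(\bm X_{[m]},\bm F)\ge 1-2\epsilon-\tfrac1{2m}$, hence $\Delta(\bm X,\bm U_k^n)\ge 1-2\epsilon-\tfrac1{2m}$, which is the asserted bound, the $\tfrac1{2n}$ of the statement being the error term $\tfrac1{2m}=o(1)$ for the $\Theta(n/k)$-bit prefix.

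The technical heart is this last calibration: choose $c$ large enough that $\bm F$ is almost surely nonzero on every large coordinate set, $t$ large enough that $|\bm F|\le t$ almost surely, and then $\tau$ small enough \emph{relative to} $t$ so that the robust-sunflower existence bound $2dt\,(40Bt\log m)^d\le m$ still holds with $d$ as large as $\tau\log(n/k)/\log\log(n/k)$; this trade-off is precisely why $\tau$ must shrink with $\epsilon$. Reducing to $\bm U_1^n$ for bounded $k$, and disposing of bounded $n/k$, are routine.
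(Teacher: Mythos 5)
Your proof is correct and follows essentially the same route as the paper's: project onto a $\Theta_\epsilon(n/k)$-bit prefix of the output, verify the two hypotheses of \Cref{thm:general-sunflowers} for the corresponding marginal of $\bm{U}_k^n$ (a hypergeometric tail bound for the all-zeros events and Markov for $\Pr[|\bm{F}|>t]$), check the robust-sunflower size condition for $d \le \tau\log(n/k)/\log\log(n/k)$ with $\tau$ small depending on $\epsilon$, and finish by monotonicity of statistical distance under taking marginals. The differences are only presentational: you spell out the quantitative choice of $\tau(\epsilon)$ that the paper leaves as ``small enough,'' and you explicitly dispose of the small-$k$ and bounded-$n/k$ regimes (where the prefix length would exceed $n$) via \Cref{lem:reduction-general} and \Cref{thm:U1-main}, an edge case the paper's proof glosses over.
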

\begin{proof}
    Consider the first $\ell \coloneqq \epsilon^{-1} \cdot n/k$ bits of the sampler: $\bm{X}_{\le \ell}
    \coloneqq \bm{X}_1,\dots,\bm{X}_{\ell}$. Let $\bm{Y}$ be the first $\ell$ bits of the distribution
    $\bm{U}_k^n$. We show that $\bm{Y}$ satisfies the conditions of \Cref{thm:general-sunflowers} for
    $t=\epsilon^{-2}$ and $\alpha(n)=\epsilon$. First, we have
    \[\Pr[|\bm{Y}| > t] = \Pr\left[|\bm{Y}| > \epsilon^{-1} \E[|\bm{Y}|]\right] < \epsilon.\]
    Now let $T$ be any subset of $[\ell]$ of size at least $\ell/2$. Then
    \[
        \Pr[\bm{Y}_T = 0^T] = \binom{n - \ell / 2}{k} \binom{n}{k}^{-1} = \prod_{i = 0}^{k - 1}
        \frac{n - i - \ell / 2}{n - i} \le \left(1 - \frac{\ell}{2n} \right)^k =
        \left( 1 - \frac{1}{2\epsilon k} \right)^k \le
        e^{-2\epsilon^{-1}} < \epsilon.\]
    Here we assumed $k < n/2$, since otherwise the lemma is trivially true.

    Applying \Cref{thm:general-sunflowers}, for small enough $\tau$ (which depends on $\epsilon$) we have
    $\Delta(\bm{X}_{\le \ell}, \bm{Y}) = 1 - 2\epsilon - 1/2n$. To finish the proof, observe that
    $\Delta(\bm{X}, \bm{U}_k^n) \ge \Delta(\bm{X}_{\le \ell}, \bm{Y})$, since the random variables on the
    RHS are the marginals of the variables on the LHS. 
\end{proof}

\subsection{Unions of Slices}
\label{sec:moreover-of-thm-main}
In this section, we prove the ``moreover'' part of \Cref{thm:main} by observing that the distribution $\bm{U}_S^n$ is close to $\bm{U}_{\max_{x \in S} x}^n$ as long as $\max_{x \in S} x = o(n)$. 
\begin{proposition}
    Let $k=o(n)$ and suppose that $S \subseteq \{0,1,\dots, k\}$ with $k \in S$. Then we have $\Delta(\bm{U}_k^n,\bm{U}_S^n) = o(1)$.
\end{proposition}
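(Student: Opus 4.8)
The plan is to show that $\bm{U}^n_k$ dominates the statistical distance picture: since $k = o(n)$, the uniform distribution over the $k$-slice already captures almost all of the mass of $\bm{U}^n_S$, so the two are close. Concretely, write $\bm{U}^n_S$ as a convex combination $\sum_{j \in S} p_j \, \bm{U}^n_j$, where $p_j = \binom{n}{j} / \sum_{i \in S} \binom{n}{i}$ is the fraction of weight-$j$ strings among all strings with weight in $S$. Since $k \in S$, the triangle inequality gives $\Delta(\bm{U}^n_k, \bm{U}^n_S) \le \sum_{j \in S, j \ne k} p_j \, \Delta(\bm{U}^n_k, \bm{U}^n_j) \le \sum_{j \in S, j \ne k} p_j = 1 - p_k$, using that the statistical distance between any two distributions is at most $1$. (In fact $\bm{U}^n_k$ and $\bm{U}^n_j$ have disjoint supports for $j \ne k$, so each such distance is exactly $1$, and $\Delta(\bm{U}^n_k, \bm{U}^n_S) = 1 - p_k$ exactly — but the upper bound is all we need.) So it suffices to show $p_k = 1 - o(1)$, equivalently $\sum_{j \in S, j \ne k} \binom{n}{j} = o\!\left(\binom{n}{k}\right)$.

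The key estimate is a ratio bound on binomial coefficients: for $0 \le j < k \le n/2$ we have $\binom{n}{j} / \binom{n}{k} = \prod_{i=j+1}^{k}\frac{i}{\,n-i+1\,} \le \left(\frac{k}{n-k+1}\right)^{k-j}$. Since $k = o(n)$, the base $k/(n-k+1)$ is at most some $q = q(n) = o(1)$ for all large $n$. Summing over $j \in \{0,1,\dots,k-1\}$ (which contains $S \setminus \{k\}$) gives $\sum_{j=0}^{k-1} \binom{n}{j}/\binom{n}{k} \le \sum_{r=1}^{k} q^r \le \frac{q}{1-q} = o(1)$. Hence $\sum_{j \in S, j \ne k}\binom{n}{j} \le \frac{q}{1-q}\binom{n}{k}$, so $p_k \ge \frac{1}{1 + q/(1-q)} = 1 - q = 1 - o(1)$, and therefore $\Delta(\bm{U}^n_k, \bm{U}^n_S) \le 1 - p_k = o(1)$, as claimed. (We may assume $k \le n/2$, since $k = o(n)$; otherwise the hypothesis $k=o(n)$ fails.)

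I do not expect any real obstacle here: the whole argument is the elementary geometric-series bound on the binomial tail below its heaviest term in the range $k = o(n)$, combined with the trivial fact that statistical distance is bounded by $1$. The only point requiring a sentence of care is that $S$ is an arbitrary subset of $\{0,1,\dots,k\}$ containing $k$, so I bound the sum over $S \setminus \{k\}$ by the full sum over $\{0,\dots,k-1\}$, which is harmless. This proposition is exactly what lets the ``moreover'' part of \Cref{thm:main} reduce to the single-slice statement, since $\Delta(\bm{X}, \bm{U}^n_S) \ge \Delta(\bm{X}, \bm{U}^n_k) - \Delta(\bm{U}^n_k, \bm{U}^n_S) = \Delta(\bm{X}, \bm{U}^n_k) - o(1)$ for any sampled distribution $\bm{X}$, and symmetrically for the case $\min_{x \in S} x = n - k$ by complementing all bits.
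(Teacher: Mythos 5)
Your proof is correct and follows essentially the same route as the paper: both reduce the claim to showing $\sum_{j \in S\setminus\{k\}}\binom{n}{j} = o\bigl(\binom{n}{k}\bigr)$ via the geometric-series bound on ratios of consecutive binomial coefficients, the only cosmetic difference being that you reach the exact identity $\Delta(\bm{U}_k^n,\bm{U}_S^n) = 1 - \binom{n}{k}/\sum_{i\in S}\binom{n}{i}$ through the mixture decomposition and disjointness of supports, whereas the paper computes it directly from the definition of statistical distance. No gaps; your version even states the quantitative bound $O(k/n)$ matching the paper's $\Theta(k/n)$.
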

\begin{proof}
    We use the notation $\binom{n}{S} \coloneqq \sum_{i\in S} \binom{n}{i}$.
    \begin{align*}
        \Delta(\bm{U}_k^n,\bm{U}_{S}^n) &= \frac12 \binom{n}{S \setminus \{k\}}\binom{n}{S}^{-1} + \frac12 \left(1 - \binom{n}{k} \binom{n}{S}^{-1}\right)\\
        &= \binom{n}{S \setminus \{k\}} \binom{n}{S}^{-1}\\
        &\le \binom{n}{k}^{-1} \sum_{i \in S \setminus \{k\}} \binom{n}{i}\\
        &\le \sum_{i = 0}^{k-1} \left(\frac{k}{n-i}\right)^{i-k} = \Theta(k/n). \qedhere
    \end{align*}  
\end{proof}

The case of $S$ with $n - \min_{x \in S} x = o(n)$ reduced to the case where $\max_{x \in S} x = o(n)$ by observing that flipping all output bits can be done with no increase in the decision depth of the sampler.

\section{Local Certificates}

In this section, we explore the power of local proof systems. \Cref{sec:ECC} gives an example of a language that requires locality $\Omega(n)$, which is inspired by a similar lower bound in the context of sampling~\cite{LV12}. \Cref{sec:majority} then gives our main result, a lower bound on the locality of proof systems for $\Maj^{-1}(1)$.

\smallskip

 We say that an input bit $i \in [m]$ \emph{affects} an output bit $j\in [n]$, or equivalently that the output bit $j$ \emph{depends} on the input bit $i$, if there exist inputs $x,x' \in \{0,1\}^m$, differing only in the $i$th bit, such that $f(x)_j \neq f(x')_j$.  A function $f\colon\{0,1\}^m \to \{0,1\}^n$ is \emph{$d$-local} if each of its output bits depends on at most $d$ input bits. 

\subsection{Error-correcting codes} \label{sec:ECC}
In this section, we show that a good error-correcting code requires a proof system of a linear locality. This showcases the simple counting technique that we also use for our majority lower bound.  

\begin{proposition}
    Let $C \subseteq \{0,1\}^n$ be a good code, that is, $|C| \ge 2^{\alpha n}$ and for every $x\neq y \in C$, the Hamming distance between $x$ and $y$ is at least $\beta n$, where $\alpha$ and $\beta$ are constants in $(0,1)$.
    
    If $f\colon \{0,1\}^m \to \{0,1\}^n$ is a $d$-local function and $f(\{0,1\}^m) = C$ then $d \ge \alpha \beta n$
\end{proposition}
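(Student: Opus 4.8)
The plan is to use a simple counting argument over the bipartite "affects" relation between the $m$ input bits and the $n$ output bits. Since $f$ is $d$-local, each output bit depends on at most $d$ input bits, so the total number of (input, output) affecting pairs is at most $dn$. The goal is to lower-bound this count by $\alpha\beta n^2$, which forces $d \ge \alpha\beta n$. To get the lower bound I would argue that the image $C = f(\{0,1\}^m)$ being a good code means $f$ cannot be "too degenerate": in particular, for each output coordinate $j$, if $j$ depended on no input bit at all, then $f(x)_j$ would be constant over all $x$, so every codeword would share that bit value in position $j$. That alone is fine for one coordinate, but the distance property will rule out having many such frozen coordinates simultaneously.

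First I would make the observation precise: let $F \subseteq [n]$ be the set of output coordinates that depend on \emph{no} input bit (equivalently, are constant on the image). Then all $2^{\alpha n}$ codewords agree on the coordinates in $F$, so they all lie in a single affine subcube of dimension $n - |F|$; hence $|C| \le 2^{n-|F|}$, giving $|F| \le n - \alpha n = (1-\alpha)n$, so at least $\alpha n$ output coordinates depend on at least one input bit. That is not yet enough — I need each of many output bits to depend on \emph{many} input bits, or alternatively a global count. So the cleaner route is: count pairs $(i,j)$ where input $i$ affects output $j$, call this number $P$; locality gives $P \le dn$. For the lower bound, fix any input $i$ and consider flipping it; the set of output coordinates it affects, summed over all $i$, must be large because together the input bits "generate" all the distance in the code. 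Concretely, take two codewords $x = f(u)$, $y = f(v)$ at distance $\ge \beta n$; walking from $u$ to $v$ by flipping one input bit at a time, each output coordinate where $x,y$ differ must get flipped at some step, and a flip of input bit $i$ changing output $j$ witnesses that $i$ affects $j$. This shows that for \emph{each} pair of codewords there is a set of input bits whose affected-outputs cover the $\ge\beta n$ differing coordinates — but to turn this into a bound on $P$ I want it to hold for a rich family of codeword pairs.

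The argument I expect to actually work: for each input bit $i \in [m]$, let $A_i \subseteq [n]$ be the set of output bits it affects, so $\sum_i |A_i| = P \le dn$. I claim $\bigcup_{i \in [m]} A_i$ has size at least $\alpha n$ (this is the $F$ observation above, rephrased: coordinates outside the union are frozen). More is needed. Instead, for each \emph{codeword} $c = f(u)$ and each coordinate $j$ with $c_j \ne$ (the value of some fixed reference codeword), there must be an input bit flipped along a path realizing the change, so $j \in A_i$ for some $i$. Summing the distances: $\sum_{c \in C}\mathrm{dist}(c, c_0) \ge (|C|-1)\beta n$ roughly, and each such "needed flip" in coordinate $j$ is charged to some $A_i$; but a single $A_i$ can be charged many times across codewords, so I must bound the multiplicity. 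The honest tight version (and the one I'd commit to) is this global one: since $|C| \ge 2^{\alpha n}$, the function $f$ restricted to its image is onto a set spanning $\ge \alpha n$ "active" coordinates, and by the distance property, for \emph{every} coordinate $j$ in the active set, flipping a suitable single input bit changes $f$ in a way that includes $j$ — more carefully, one shows $|A_i \setminus (\text{already covered})|$ arguments give that the sets $A_i$ must jointly cover, with total size $\ge \alpha\beta n^2$, by relating the code's rate-times-distance to the number of affecting pairs. The main obstacle is exactly this combinatorial bookkeeping — converting "good code" (rate $\alpha$, distance $\beta$) into the quantitative lower bound $P \ge \alpha\beta n^2$ on affecting pairs — and I expect the author's proof closes it by fixing a maximal "independent" set of output coordinates generated by a single well-chosen input path and combining the rate bound (which gives $\ge \alpha n$ nonconstant coordinates) with the distance bound (which forces each input bit on a connecting path to touch $\ge \beta n$ of the relevant coordinates when the two endpoints are far).
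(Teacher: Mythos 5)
There is a genuine gap. Your counting framework (affecting pairs $P \le dn$, want $P \ge \alpha\beta n^2$) matches the paper's, but you never find the step that actually produces the lower bound, and you say so yourself (``the main obstacle is exactly this combinatorial bookkeeping''). The missing observation is much simpler than the path-walking and charging schemes you sketch: if an input bit $i$ affects \emph{any} output bit $j$, take $x,x'$ differing only in coordinate $i$ with $f(x)_j \ne f(x')_j$. Then $f(x)$ and $f(x')$ are two \emph{distinct} codewords, so they differ in at least $\beta n$ output coordinates; and since $x$ and $x'$ differ in the single input bit $i$, every one of those $\ge \beta n$ coordinates is affected by $i$. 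So each (non-dummy) input bit affects at least $\beta n$ outputs — no path between two far-apart inputs, and no multiplicity accounting, is needed, because a single flip already forces a jump of $\beta n$ in the output.

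Your use of the rate is also aimed at the wrong side of the bipartite graph. You lower-bound the number of \emph{non-frozen output coordinates} by $\alpha n$, but what the count needs is a lower bound on the number of \emph{input bits}: since $f(\{0,1\}^m) = C$ and $|C| \ge 2^{\alpha n}$, we must have $2^m \ge 2^{\alpha n}$, i.e.\ $m \ge \alpha n$ input bits (each of which, after discarding dummies, affects $\ge \beta n$ outputs by the above). Multiplying gives $P \ge \alpha\beta n^2$, and $P \le dn$ finishes the proof. Without the single-flip observation, the distance property only tells you that the \emph{union} of affected sets along a path covers the differing coordinates, which — as you correctly worry — does not control how the $\ge \beta n$ differences are distributed among the individual input bits, and the argument does not close.
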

\begin{proof}
    We may assume w.l.o.g.\ that all input bits of $f$ affect some output bits. 
    Take an arbitrary input bit $i \in [m]$ and an output bit $j \in [n]$ that depends on $i$. Then take $x,x' \in \{0,1\}^m$ that differ only in the $i$th coordinate and such that $f(x)_j \neq f(x')_j$. Since $f(x) \neq f(x') \in C$, they must be at Hamming distance at least $\beta n$, hence $i$ must affect at least $\beta n$ output bits, as $x$ and $x'$ only differ in $i$. Thus every bit affects at least $\beta n$ output bits.
    
    There are at least $\alpha n$ input bits since $|C| = 2^{\alpha n}$. Therefore there are $\alpha \beta n^2$ input-output pairs in which the input bit affects the output bit. On the other hand, there are at most $dn$ such pairs, hence $d \ge \alpha \beta n$.
\end{proof}

\subsection{Majority} \label{sec:majority}
Let $\Maj_n$ be the the set $\{x\in\{0,1\}^n \mid |x| \ge n/2\}$. First, let us give a simple upper bound on locality which is implicit in \cite{KLMS16}.

\begin{proposition}[essentially Theorem~3.9 and Corollary~3.10 in \cite{KLMS16}]
There exists an $O(\log^2 n)$-local function $f\colon \{0,1\}^m \to \{0,1\}^n$ such that $f(\{0,1\}^m) =\Maj_n^{-1}(1)$.
\end{proposition}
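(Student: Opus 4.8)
The plan is to exhibit an explicit $O(\log^2 n)$-local function whose image is exactly $\Maj_n^{-1}(1)$, building it out of two familiar ingredients. First I would observe that it suffices to have, for the target language $\Maj_n^{-1}(1) = \{x : |x| \ge n/2\}$, a circuit that takes auxiliary bits and produces a string of weight at least $n/2$, hitting every such string. The natural strategy is to write the output as a coordinatewise OR of a string sampled from a ``sorting-network-shuffled'' base vector $1^{\lceil n/2\rceil}0^{\lfloor n/2\rfloor}$ together with a freely chosen extra string: formally, $f(\pi\text{-bits}, z) = \sigma_{\pi}(1^{\lceil n/2\rceil}0^{\lfloor n/2\rfloor}) \vee z'$, where $z'$ is obtained from the free input $z$ in a way that can only add $1$s on coordinates already forced, so that weight never drops below $n/2$ but every weight-$\ge n/2$ string is reachable.

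The key step is the permutation part. Here I would invoke (a simplified, deterministic version of) the switching-network / sorting-network construction: a sorting network on $n$ wires of depth $O(\log^2 n)$ (e.g.\ AKS, or for a self-contained argument Batcher's bitonic sorter, which already gives $O(\log^2 n)$) yields a $d = O(\log^2 n)$-local function from ``comparator-swap'' control bits to permutations of a fixed input, because each output wire is touched by at most $d$ comparators and hence depends on at most $2^d$... — no, more carefully: each output coordinate of the network depends only on the $O(\log^2 n)$ control bits along the (at most) $O(\log^2 n)$ comparators on paths feeding into it, and on the $O(\log^2 n)$ input wires reachable, all of which are fixed constants except the controls. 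So each output bit is a function of $O(\log^2 n)$ control bits: the map is $O(\log^2 n)$-local. Feeding the fixed vector $1^{\lceil n/2 \rceil} 0^{\lfloor n/2 \rfloor}$ through the network and letting the control bits range over $\{0,1\}^{\text{(number of comparators)}}$, we realize enough permuted copies that every weight-exactly-$\lceil n/2\rceil$ string is in the image; then ORing with a string $z'$ whose $i$th bit is an AND of $z_i$ with the network output bit (keeping locality $O(\log^2 n)$ and never decreasing weight) lets us push the weight up to any value $\ge n/2$ while keeping every such string reachable.

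I would then verify the image is exactly $\Maj_n^{-1}(1)$: every output has weight $\ge n/2$ since the sorting-network output of $1^{\lceil n/2\rceil}0^{\lfloor n/2\rfloor}$ always has weight $\lceil n/2\rceil$ and ORing only increases weight; conversely, given any $x$ with $|x| \ge n/2$, choose a subset $A$ of the support of $x$ of size exactly $\lceil n/2 \rceil$, set the control bits so the network maps the block of $1$s onto $A$ (possible because a sorting network realizes every permutation of a generic input, in particular can route the $1$-wires to any $\lceil n/2\rceil$-subset of outputs), and set $z$ to $1$ on $x$'s support and $0$ elsewhere, so the OR is exactly $x$. Locality is $O(\log^2 n) + O(1) = O(\log^2 n)$.

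The main obstacle is making the ``a sorting network of depth $O(\log^2 n)$ gives an $O(\log^2 n)$-local permutation-realizing map'' claim airtight: one must check both that locality is genuinely $O(\log^2 n)$ (count comparators on input-to-output paths, not naively $2^{O(\log^2 n)}$) and that by choosing comparator directions one can realize an arbitrary permutation of a $0/1$ input — equivalently, route any $\lceil n/2\rceil$ of the input wires to any $\lceil n/2\rceil$ of the output positions. Since this is stated as "essentially Theorem~3.9 and Corollary~3.10 in \cite{KLMS16}", I would cite that construction for the permutation gadget and only spell out the OR-with-free-bits wrapper that boosts coverage from the exact slice $|x| = \lceil n/2\rceil$ to all of $\Maj_n^{-1}(1)$.
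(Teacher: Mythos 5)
There is a genuine gap, and it is in the step you yourself flagged as the main obstacle: the locality count for the routing-network gadget is wrong. In a depth-$d$ network of controlled swaps (or comparators turned into controlled swaps), an output coordinate does not depend only on ``the $O(\log^2 n)$ comparators on a path feeding into it'' --- there is no single path. The backward cone of an output position branches at every switch, so after $t$ layers it can touch $\min(2^t,n)$ wires, and the output bit genuinely depends on every control bit in that cone. For $d=\Theta(\log^2 n)$ the cone saturates all $n$ wires after the first $\log n$ layers, so each output bit depends on $\Theta(n\log^2 n)$ control bits, not $O(\log^2 n)$. What the trace-back argument actually gives is \emph{decision depth} $d$ (adaptive queries: query the top-layer switch, then the relevant switch one layer down, etc.), which is strictly weaker than $d$-locality; the paper itself is careful about exactly this distinction in \Cref{lem:switching-network-to-generator}, where a depth-$d$ switching network yields a decision-depth-$d$ sampler in which a single input bit may affect $2^d$ outputs. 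Since the proposition asserts an $O(\log^2 n)$-\emph{local} function, your construction does not prove it. (A smaller slip: as written, $z'_i = z_i \wedge y_i$ makes $y \vee z' = y$, so the wrapper adds nothing and the image would be only the exact slice; you want simply $y_i \vee z_i$, which does handle the weight-boosting correctly.)

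The paper's proof avoids routing altogether. It builds a binary tree of intervals, lets the input specify a purported weight $w(\ell,r)$ for each interval using $O(\log n)$ bits, and defines output $i$ to be $w(i,i)$ if every non-leaf ancestor of leaf $i$ is consistent ($w(\ell,r)=w(\ell,c)+w(c+1,r)$, and $w(1,n)\ge (n+1)/2$ at the root), and $1$ otherwise. Each output bit reads only the labels on its root-to-leaf path and their children --- $O(\log n)$ labels of $O(\log n)$ bits each --- so locality is genuinely $O(\log^2 n)$ non-adaptively; completeness is by setting $w(\ell,r)=x_\ell+\cdots+x_r$, and soundness follows because pruning at inconsistent nodes forces every pruned subtree's outputs to $1$, so the output weight is at least $w(1,n)$. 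If you want to salvage a routing-style argument you would need a rearrangeable network (e.g.\ Beneš, not a sorting network, since comparators carry no control bits) \emph{and} a way to cap the backward cone, which is precisely what the counter-tree construction accomplishes.
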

\begin{proof}
   For simplicity, suppose that $n$ is odd. Construct a binary tree whose root is the interval $[1,n]$, whose leaves are the singletons $\{1\},\ldots,\{n\}$, and in which each internal node $[\ell,r]$ has two children $[\ell,c],[c+1,r]$, where $c = \lfloor(\ell+r)/2\rfloor$. We can construct such a tree whose depth is $O(\log n)$.
For each interval $[\ell,r]$ in the tree we will have a label $w(\ell,r)$ whose value ranges from $0$ to $r-\ell+1$, which is supposed to indicate $x_\ell+\cdots+x_r$ (where $x_1,\ldots,x_n$ is the output). We implement the variables using $O(\log n)$ input bits.

An internal node $[\ell,r]$ with children $[\ell,c],[c+1,r]$ is \emph{consistent} if $w(\ell,r) = w(\ell,c) + w(c+1,r)$. In addition, if the internal node is the root $[1,n]$, we require $w(1,n) \geq (n+1)/2$.
Each position $i \in \{1,\ldots,n\}$ corresponds to the leaf $w(i,i)$, which has $O(\log n)$ ancestors. We say that position $i$ is \emph{good} if all its non-leaf ancestors are consistent. The $i$'th output is $w(i,i)$ if $i$ is good, and $1$ otherwise. Since each $w(\ell,r)$ is encoded using $O(\log n)$ bits, this system has locality $O(\log^2 n)$.

Every vector $x_1,\ldots,x_n$ of weight at least $(n+1)/2$ can be generated using this system by taking $w(\ell,r) = x_\ell + \cdots + x_r$. In the other direction, consider any assignment of weights to the tree. If the root is inconsistent, then the output is $1,\ldots,1$, so we can assume that the root is consistent. Prune the tree by removing all children of inconsistent nodes. If $[\ell,r]$ is any node in the pruned tree then either $\ell = r$ and $x_\ell = w(\ell,r)$, or $\ell < r$ and $x_\ell + \cdots + x_r = r-\ell+1 \geq w(\ell,r)$. It follows that $x_1 + \cdots + x_n \geq w(1,n) \geq (n+1)/2$.
\end{proof}

The $\Omega(\log^* n)$ locality lower bound in \cite{BDKMSSTV13} is inspired by the following observation:

\begin{proposition}
\label{prop:input-output-locality}
    Let $f\colon \{0,1\}^m \to \{0,1\}^n$ be such that every output bit is a function of at most $c$ input bits, and every input bit affects at most $d$ output bits. Suppose that $cd \le (n+1)/2$. Then $f(\{0,1\}^m) \neq \Maj^{-1}(1)$.
\end{proposition}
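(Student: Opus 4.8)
The plan is to argue by contradiction: assume $f(\{0,1\}^m) = \Maj^{-1}(1)$ and produce a single input whose image under $f$ has Hamming weight below $\lceil n/2 \rceil$, which is impossible since every output of $f$ then lies in $\Maj^{-1}(1)$. Write $t \coloneqq \lceil n/2 \rceil$, so that $\Maj^{-1}(1) = \{z \in \{0,1\}^n : w(z) \ge t\}$; since $cd$ is an integer with $cd \le (n+1)/2$, we have $cd \le t$.

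First I would fix an arbitrary output bit $j$ and describe its local neighbourhood. Let $S_j \subseteq [m]$ be the input bits that output bit $j$ depends on ($|S_j| \le c$), let $D_i \subseteq [n]$ be the output bits affected by input bit $i$ ($|D_i| \le d$), and let $N_j \coloneqq \bigcup_{i \in S_j} D_i$, so that $|N_j| \le |S_j| \cdot d \le cd \le t$. If $S_j = \emptyset$ then output bit $j$ is constant, which is impossible because $\Maj^{-1}(1)$ --- being the image --- contains strings with a $0$ in coordinate $j$ and strings with a $1$ there (both exist once $n \ge 2$). Hence $S_j \ne \emptyset$, so $j \in N_j$; moreover output bit $j$ attains the value $0$ somewhere, and since it depends only on $S_j$ there is an assignment $\sigma \in \{0,1\}^{S_j}$ that forces it to $0$.

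Next I would choose a convenient minimum-weight target and perturb one of its preimages. Pick a string $y \in \{0,1\}^n$ of weight exactly $t$ with $y_\ell = 1$ for every $\ell \in N_j$ --- possible since $|N_j| \le t \le n$. As $w(y) = t$, we have $y \in \Maj^{-1}(1)$, so $y = f(x_0)$ for some $x_0$. Let $x_0'$ agree with $x_0$ outside $S_j$ and equal $\sigma$ on $S_j$. Every output bit $\ell \notin N_j$ is independent of $S_j$, so $f(x_0')_\ell = y_\ell$, whence $w\bigl(f(x_0')|_{[n] \setminus N_j}\bigr) = t - |N_j|$; inside $N_j$ all we use is that $f(x_0')_j = 0$ with $j \in N_j$, so $w\bigl(f(x_0')|_{N_j}\bigr) \le |N_j| - 1$. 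Summing, $w(f(x_0')) \le t - 1 < t$, so $f(x_0') \notin \Maj^{-1}(1)$, contradicting $f(\{0,1\}^m) = \Maj^{-1}(1)$.

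I do not expect a real obstacle here; the whole argument hinges on the inequality $|N_j| \le cd \le t$, which says that the entire influence region of one output bit fits within the support of some minimum-weight string of $\Maj^{-1}(1)$. Once that holds, turning all of $N_j$ ``on'' in $y$ and then knocking coordinate $j$ back to $0$ forces the weight strictly below the threshold $t$. The only points requiring a little care are the tiny cases (the interesting regime is $n$ large, and the argument runs verbatim for every $n \ge 2$) and making sure the two appeals to $f(\{0,1\}^m) = \Maj^{-1}(1)$ --- non-constancy of output bit $j$, and membership of the crafted string $y$ in the image --- are both justified.
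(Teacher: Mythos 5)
Your proof is correct and follows essentially the same argument as the paper: take the neighbourhood $N_j$ of an output bit (all outputs sharing an input with it, of size at most $cd$), realize a minimum-weight string of $\Maj^{-1}(1)$ that is all ones on $N_j$, then re-set the inputs of bit $j$ to force it to $0$, dropping the weight below the threshold while leaving everything outside $N_j$ unchanged. Your write-up is slightly more careful than the paper's (explicit handling of the case $S_j=\emptyset$ and of the parity of $n$ via $t=\lceil n/2\rceil$), but the idea is identical.
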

\begin{proof}
    Let $i \in [n]$ be an arbitrary output bit. There are at most $cd$ many output bits in the ``neighborhood'' $N(i)$ of $i$, which is the set of outputs that share an input bit with $i$. If $|N(i)| \leq (n+1)/2$ then we can find an output $y$ of weight $(n+1)/2$ such that $y_{N(i)} = 1^{N(i)}$. Suppose that $y$ is generated by the input $x$. There must be some setting to the inputs of $i$ which sets it to zero (since there is a valid output $z$ with $z_i = 0$). If we modify $x$ using this setting then the new output $z$ agrees with $y$ outside of $N(i)$, and furthermore $z_i = 0$. Since $y_{N(i)} = 1^{N(i)}$, it follows that $|z| < |y|$, which is impossible, since $y$ had the smallest possible weight. 
\end{proof}
One of the steps in our proof (namely, \Cref{lem:S-is-small}) is essentially an adaptation of the proof of \Cref{prop:input-output-locality} for the sources with influential input bits. 

We give a simplified exposition of their proof in \Cref{sec:majority-log-star}. Our own lower bound is contained in the following theorem.

\begin{theorem}
    Let  $f \colon\{0,1\}^m\to\{0,1\}^n $ be a $d$-local function such that $f(\{0,1\}^m) =\Maj_n^{-1}(1)$, where $n$ is odd. Then $d=\Omega(\sqrt{\log n})$.
\end{theorem}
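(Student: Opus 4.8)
The strategy is to combine the counting technique used for error-correcting codes and in \Cref{prop:input-output-locality} with a ``hitting set versus many-independent-inputs'' dichotomy, so as to remove the hypothesis that input bits have bounded influence. First I would normalize: assume without loss of generality that every input bit of $f$ actually affects some output bit, and that $m$ is minimal. Call an input bit $i$ \emph{heavy} if it affects more than some threshold $D$ output bits, and \emph{light} otherwise; here $D$ will be chosen of order $2^{\Theta(\sqrt{\log n})}$ (roughly $n/d$ up to the relevant factors). The key structural claim, in the spirit of \Cref{lem:S-is-small} as advertised in the text, is that the number of heavy input bits is small --- small enough that one can fix all of them (by a worst-case assignment) and restrict attention to the residual proof system.

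The heart of the argument is the following. Suppose toward a contradiction that $d = o(\sqrt{\log n})$. After fixing the heavy input bits to a well-chosen assignment, I would argue the residual function $f'$ still has image $\Maj_n^{-1}(1)$ restricted appropriately, or at least retains the property that it can produce a minimum-weight string $1$-ing any prescribed small set of coordinates while being liftable back. Now every remaining (light) input bit affects at most $D$ output bits, so for a fixed output bit $i$ the ``neighborhood'' $N(i)$ of coordinates sharing a light input bit with $i$ has size at most $dD$. If $dD \le (n+1)/2$, then exactly as in the proof of \Cref{prop:input-output-locality} we can pick an output $y$ of weight $(n+1)/2$ with $y_{N(i)} = 1^{N(i)}$, flip the inputs of $i$ to zero it, and contradict minimality of $|y|$. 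So the whole tension reduces to: how large can $D$ be forced to be, and how few heavy bits can there be? Counting input-output incidences: there are at most $dn$ incidences total, so at most $dn/D$ heavy input bits; choosing $D$ just above $2^d \cdot \mathrm{poly}$ makes the heavy bits number $o(n/2^d)$, few enough that fixing them all still leaves every output bit computable, and small enough that no output bit's weight can be forced by them. With $d = o(\sqrt{\log n})$ one gets $d D = d \cdot 2^{O(d)} = 2^{o(\sqrt{\log n})} \cdot o(\sqrt{\log n}) = o(n)$, so $dD \le (n+1)/2$ for large $n$, and the contradiction fires.

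A secondary ingredient is handling the heavy bits cleanly: one needs that after fixing them, the residual system still ``covers enough'' of $\Maj_n^{-1}(1)$ to run the swapping argument --- concretely, that there is still a minimum-weight codeword with ones on $N(i)$ in the residual image, and that it is reachable. This is where I expect the main obstacle to lie: it is not automatic that fixing heavy input bits preserves the image, so one must instead argue more carefully, perhaps by a covering/greedy argument showing that a single global assignment to the heavy bits is compatible with producing, for \emph{every} output bit $i$ simultaneously (or at least for one bad $i$), the needed minimum-weight witness. An alternative route, which may be cleaner, is to avoid fixing heavy bits entirely and instead do a double count: bound the number of output bits adjacent (through any input bit) to a fixed output bit $i$ by separating the contribution of light and heavy input bits, show the heavy contribution is controlled because heavy bits are rare, and feed the resulting $|N(i)| \le (n+1)/2$ bound directly into the \Cref{prop:input-output-locality}-style swapping argument. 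Either way, the final inequality chain is $d \cdot 2^{\Theta(d)} = O(n)$, forcing $2^{\Omega(d)} = O(n/d)$ and hence $d = O(\log n)$ trivially but, with the sharper bookkeeping on how the robust-sunflower-free regime or the incidence count degrades, $d = \Omega(\sqrt{\log n})$; tracking the exponents through the two-sided locality reduction is the delicate part.
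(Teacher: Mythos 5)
There is a genuine gap, and it sits exactly where you flagged your uncertainty. Your plan fixes all ``heavy'' input bits (those affecting more than $D$ output bits) and then runs the swapping argument of \Cref{prop:input-output-locality} on the residual light-only system. But fixing the heavy bits is not a harmless normalization: an assignment to them can force many output bits to $1$ --- including the very bit $i$ you want to flip to $0$ --- and can destroy the minimum-weight witnesses you need, so neither ``the residual image is still $\Maj_n^{-1}(1)$'' nor ``no output bit's weight can be forced by them'' holds. Your fallback (don't fix heavy bits, just argue their contribution to $N(i)$ is small because heavy bits are globally rare) also fails: a single heavy input bit adjacent to $i$ can put up to $n$ output bits into $N(i)$, and global rarity says nothing about the neighborhood of the particular $i$ you picked. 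The paper's proof gets around exactly this by \emph{not} trying to neutralize the influential bits; instead, for the set $I$ of $k$-influential bits it proves two complementary facts: for any output set $S$ with $|S| \le n/(4kd)$ there \emph{exists} an assignment to $I$ fixing all of $S$ to $1$ while fixing at most $(n+1)/2$ outputs to $1$ (\Cref{lem:S-is-small}, built from a full weight-$(n+1)/2$ assignment that is $1$ on all outputs touched by the non-influential inputs of $S$), and, via a greedy fooling-set argument, any small $I$ admits a set $S$ with $|S| = O(|I|)$ that no such assignment can entirely fix to $1$ (\Cref{lem:S-construction}). The clash forces $|I| = \Omega(n/(dk))$.

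The second, quantitative problem is that a single heavy/light threshold cannot produce the $\sqrt{\log n}$ bound even if the heavy bits were handled. Your final chain ``$d\cdot 2^{\Theta(d)} = O(n)$, hence with sharper bookkeeping $d = \Omega(\sqrt{\log n})$'' has no mechanism generating a $\log n$ factor (and an inequality of that form only yields upper bounds on $d$); indeed, if your dichotomy worked as stated it would prove a bound near $\log n/\log\log n$, far beyond what is known. The source of $\sqrt{\log n}$ in the paper is a harmonic sum over \emph{all} thresholds: the bound $|I_k| = \Omega(n/(dk))$ for every $k$ gives at least $\sum_{k\in[n]} \Omega(n/(dk)) = \Omega(n\log n/d)$ input-output incidences, against the trivial upper bound $dn$, whence $d = \Omega(\sqrt{\log n})$. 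To repair your proposal you would need both the two-lemma replacement for ``fix the heavy bits'' and the summation over all influence scales rather than one threshold $D$.
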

\begin{proof}
    We say that an input bit $i \in [m]$ is \emph{$k$-influential} if at least $k$ output bits of $f$ depend on it. We are going to show that there are $\Omega(n/(dk))$ many $k$-influential bits for every $k$. Then the number of input-output bit pairs where the output depends on the input is at least 
    \( \sum_{k \in [n]} cn/(dk) = \Omega(n \log n / d)\).
    On the other hand, there are at most $nd$ such pairs since $f$ is $d$-local. Therefore $d = \Omega(\sqrt{\log n})$.

    It remains to show the lower bound on the number of $k$-influential bits. This is done by combining
    the following two lemmas.

    \begin{lemma}
        \label{lem:S-is-small}
        Let $I$ be the set of all $k$-influential input bits. Then for every set of output bits $S$ of
        size at most $n/(4 k d)$ there exists an assignment $\rho$ to $I$ such that
        \begin{enumerate}
            \item All bits in $S$ are fixed to $1$ by $\rho$, i.e.\ for every total extension $\rho'$ of
                $\rho$ we have $f(\rho')_S = 1^S$.
            \item $\rho$ fixes to $1$ at most $(n+1)/2$ output bits.
        \end{enumerate}
    \end{lemma}
    
    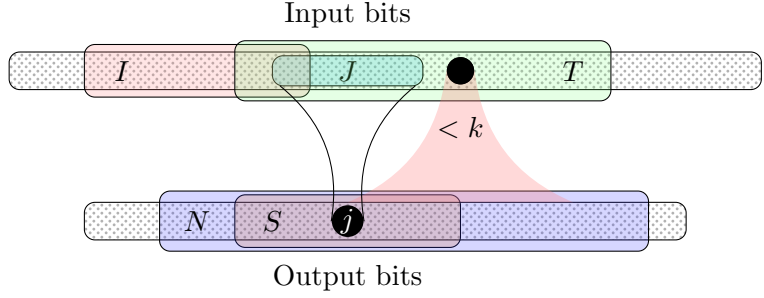
\begin{figure}
        \centering
        \tikzfading[
    name = fade out,
    top color = transparent!0,
    bottom color = transparent!100]

\begin{tikzpicture}
    \node at (4.5, 1) {Input bits};
    \node at (4.5, -2.5) {Output bits};
    \draw[pattern = {crosshatch dots}, pattern color = black!30, rounded corners] (0, 0) rectangle (10, 0.5);
    \draw[pattern = {crosshatch dots}, pattern color = black!30, rounded corners] (1, -2) rectangle (9, -1.5);
    \draw[fill = orange, fill opacity = 0.2, rounded corners] (3, -2.1) rectangle (6, -1.4);
    \draw[fill = blue!80, fill opacity = 0.2, rounded corners] (2, -2.15) rectangle (8.5, -1.35);
    \draw[fill = cyan, fill opacity = 0.2, rounded corners] (3.5, 0.05) rectangle (5.5, 0.45);
    \node at (4.5, 0.25) {$J$};
    \node at (3.5, -1.75) {$S$};
    \node at (2.5, -1.75) {$N$};
    \draw[fill = red!50, fill opacity = 0.2, rounded corners] (1,-0.1) rectangle (4, 0.6);
    \node at (1.5, 0.25) {$I$};
    \draw[fill = green!50, fill opacity = 0.2, rounded corners] (3, -0.15) rectangle (8, 0.65);
    \node at (7.5, 0.25) {$T$};
    \node[circle, fill = black, text = white, radius = 0.25, inner sep = 0.5] at (4.5, -1.75) (a) {$j$};
    \node[circle, fill = black] (b) at (6, 0.25) {};
    \fill[fill = red!50, path fading = fade out, opacity=0.3] (b.west) to[bend left] (4.5, -1.5) -- (7.5, -1.5)
        to[bend left] (b.east);
    \node[circle, fill = black] at (b) {};
    \node at (6, -0.5) {$<k$};
    \draw (a.west) to[bend right] (3.6, 0.05);
    \draw (a.east) to[bend left] (5.4, 0.05);
\end{tikzpicture}
        \caption{$I$ is the set of $k$-influential inputs bits, and $S$ is the given set. The set $T$
            consists of all inputs bits affecting $S$, and the set $N$ consists of all bits influenced by
            $T \setminus I$}
        \label{fig:S-is-small}
    \end{figure}

    \begin{proof}
        Fix a set $S$ of size at most $n/(4kd)$.

        Let $T$ be the set of input bits that affect $S$, so $|T| \le d |S|$. Let $N$ be the set of all
        bits influenced by $T \setminus I$. See \Cref{fig:S-is-small} for a pictorial representation of
        these definitions.

        Since $I$ contains all $k$-influential input bits, $|N| \le kd |S|$. Let $\rho'$ be a total
        assignment such that $f(\rho')_N = 1^N$ and  $|f(\rho')| = (n+1)/2$. This is possible since $|N|
        \le n/4$ by the statement of the lemma. Let $\rho \coloneqq \rho'_I$.

        Since $|f(\rho')| = (n+1)/2$, in particular $\rho$ fixes to $1$ at most $(n+1)/2$ bits. We claim
        that $\rho$ fixes all bits in $S$ to $1$. Suppose for the sake of contradiction that it doesn't
        fix to $1$ the bit $j \in S$. Let $J$ be the set of input bits affecting $j$. Let $\rho''$ be a
        total assignment consistent with $\rho'$ everywhere except $J \setminus I$ such that $f(\rho'')_j
        = 0$. Observe that $f(\rho'')_{[n] \setminus N} = f(\rho')_{[n] \setminus N}$, hence $f(\rho'')
        \le f(\rho')$ coordinate-wise. Since $f(\rho'')_j = 0$ and $f(\rho')_j = 1$, we have $|f(\rho'')|
        < n/2$, which contradicts the fact that the image of $f$ is $\Maj_n^{-1}(1)$.
    \end{proof}

    \begin{lemma}
        \label{lem:S-construction}
        Let $I$ be an arbitrary set of input bits of size at most $n/20$. Then there exists a set $S$ of
        output bits such that $|S| = O(|I|)$ and for every assignment $\rho$ to $I$ that fixes to $1$ at
        most $(n+1)/2$ output bits, there exists a bit in $S$ that is not fixed to $1$ by $\rho$.
    \end{lemma}

    \begin{proof}
        Let $\rho_1, \dots, \rho_K$ be all assignments to $I$ that fix at most $(n+1)/2$ output
        bits. Denote by $U_1, \dots, U_K \subseteq [n]$ the sets of bits that are not fixed by $\rho_1,
        \dots, \rho_K$, respectively, so that $|U_1|, \dots, |U_K| \ge (n-1)/2 \ge n/3$. Then
        \[
            K \frac{n}{3} \le \sum_{i = 1}^K |U_i| = \sum_{j \in [n]} |\{i \in [K] \mid U_i \ni j\}|.
        \]
        Hence there exists $j$ such that $|\{i \in [K] \mid U_i \ni j\}| \ge  K/3$. Let $S_1 \coloneqq
        \{j\}$, and continue this process for the set of bits $[n] \setminus \{j\}$ and the set of
        assignments $\{\rho_i \colon U_i \not\ni j\}$. Suppose the previous iteration yields a set $S_k
        \subseteq [n]$ of size $k$ and a set of indices $T_k \subseteq [K]$. Then let $U'_i \coloneqq U_i
        \setminus S_k$ for $i \in T_k$. Then $|U'_1|, \dots, |U'_K| \ge (n-1)/2 - k \ge n/3$, where the
        last inequality is true if $k < n/6$. As before, there exists $j \not\in S_k$ such that $|\{i \in
        T_k \mid U'_i \ni j\}| \ge |T_k|/3$. We then let $S_{k+1} \coloneqq S_k \cup \{j\}$ and $T_{k+1}
        \coloneqq \{i \in T_k \mid U'_i \not\ni j\}$.

        Clearly $|T_k| \le K \cdot \left(2/3\right)^{k-1}$, hence in $\tau = \lceil\log_{3/2} K\rceil \le
        2 \log_2 K \le 2 |I| \le n/10$ steps we eliminate all assignments from the set, i.e.\ $S_\tau$
        satisfies that for every assignment $\rho$ to $I$ that fixes at most $(n+1)/2$ output bits, there
        exists $j \in S_\tau$ that is not fixed by $\rho$. (The bound $2|I| \leq n/10$ guarantees that
        the condition $k < n/6$ holds).
    \end{proof}

    Let $I$ be the set of all $k$-influential bits. If $|I| \ge n/20$ we get the desired lower bound
    immediately, so assume otherwise. Then let $S$ be the set given by \Cref{lem:S-construction}, $|S| =
    O(|I|)$. By \Cref{lem:S-is-small} we get that $|S| = \Omega(n/(dk))$, so $|I| = \Omega(n/(dk))$ as well.
\end{proof}

\section{Switching Networks for Sampling Linear Slices}
\label{sec:switching}
In this section, we discuss the limitations of switching networks for constructing decision tree samplers. 

\begin{definition}
    A switching network of depth $d$ is a sequence of $d$ matchings $M_1, \dots, M_d$. Each $M_i$ is a set of $n/2$ disjoint pairs of elements from $[n]$. 
    The distribution generated by a switching network over a slice $\binom{[n]}{\ell}$ is defined as follows:
    \begin{itemize}
        \item Initialize the string as $1^\ell 0^{n-\ell}$;
        \item For each $i \in [d]$: for every pair in $(a,b) \in M_i$, we toss a fair coin, and if it comes up heads, we swap the $a$th and the $b$th bits of the current sequence.
    \end{itemize}
\end{definition}

\begin{lemma}[A variation of \cite{Viola12}]
\label{lem:switching-network-to-generator}
    Suppose there exists a switching network of depth $d$ that generates the variable $\bm{X}$ over $\binom{[n]}{\ell}$. Then there exists a decision depth $d$ sampler for $\bm{X}$ such that each input bit affects at most $2^d$ output bits. Moreover, the support of $\bm{X}$ is a subset of $\binom{[n]}{\ell}$.
\end{lemma}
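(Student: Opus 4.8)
The plan is to use the $nd/2$ coin tosses of the switching network directly as the uniform input bits of the decision forest, and to recover each output bit by tracing backwards through the layers. Index the layers $1,\dots,d$, and let \emph{time} $i$ denote the state of the string after layer $i$ has been processed, so that time $0$ is the initial string $1^\ell 0^{n-\ell}$ and time $d$ is the output $\bm{X}$. For a fixed assignment to the coins, each layer acts as a permutation of $[n]$ (a product of transpositions taken from its matching), and hence the position occupied at time $d$ by a given coordinate of the string came from a uniquely determined position at time $i$.

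First I would build the decision tree for output bit $j$. Initialize a ``current position'' $p \coloneqq j$ and process the layers in the order $d, d-1, \dots, 1$. At layer $i$: if $p$ is unmatched in $M_i$, leave $p$ unchanged and make no query; if $p$ is matched to $p'$ in $M_i$, query the coin of the edge $\{p,p'\}$ and set $p \coloneqq p'$ if it is heads, keeping $p$ otherwise. After layer $1$, output $1$ if $p \le \ell$ and $0$ otherwise. Each path through this tree queries at most one coin per layer, and coins of distinct layers are distinct input bits, so no coin is queried twice on a path; thus the tree has depth at most $d$. By induction on the processed layers, the tree computes exactly the $j$-th coordinate of the switching network's output, so running the $n$ trees on the uniform coins reproduces $\bm X$ exactly.

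Next I would bound the number of output bits affected by one coin. Let $c$ be the coin of an edge $\{a,b\}$ in layer $i$, and fix any assignment to the coins of layers $i+1,\dots,d$; these layers compose to a fixed permutation $\pi$ of $[n]$. Flipping $c$ swaps the entries at positions $a$ and $b$ of the time-$i$ string and changes nothing else, so the two resulting time-$d$ strings can differ only at $\pi(a)$ and $\pi(b)$. As the downstream coin assignment varies, a single position at time $i$ can be carried by the remaining $d-i$ layers to at most $2$ positions after one layer and, inductively, to at most $2^{\,d-i}$ positions after all $d-i$ of them. Hence the set of output coordinates that flipping $c$ can ever change has size at most $2\cdot 2^{\,d-i} = 2^{\,d-i+1} \le 2^{d}$, so each input bit affects at most $2^d$ output bits.

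Finally, the ``moreover'' part is immediate: every elementary step of the switching network swaps two coordinates of the current string, which preserves Hamming weight, so every output string has weight exactly $\ell$ and the support of $\bm X$ lies in $\binom{[n]}{\ell}$. I do not expect a substantial obstacle here; the only point requiring care is a clean statement of the backward-tracing invariant (that the ``current position'' is a well-defined function of the coins queried so far, since its update at each layer depends only on the fixed matching of that layer and on one coin), together with the routine induction showing that this invariant makes the tree compute $\bm X_j$.
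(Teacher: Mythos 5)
Your proposal is correct and is essentially the paper's own proof: the coins serve as the uniform input bits, each output bit is computed by adaptively tracing its position backwards through $M_d,\dots,M_1$ (one query per layer, hence depth $d$), and the output is $1$ iff the traced initial position lies in $[\ell]$. Your extra accounting — that flipping a layer-$i$ coin perturbs only two time-$i$ positions, each reaching at most $2^{d-i}$ final positions, giving the $2^d$ bound, and that swaps preserve Hamming weight for the support claim — just fills in the details the paper dismisses as ``easy to see.''
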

\begin{proof}
    The input bits correspond to the coin tosses in the switching network. Each output bit is computed by tracing back its initial position: first, we query the coin corresponding to the pair in $M_d$ containing the bit, then we query the coin corresponding to the pair in $M_{d-1}$ and so on until we compute the location of the bit in the initial sequence. Then if the location is in $[\ell]$ we output $1$ and otherwise output $0$. It is easy to see that the described sampler has the required properties.
\end{proof}

\begin{lemma}
\label{lem:switching-network-lb}
    Let $\alpha \in (0,1)$ be a constant. Suppose $\bm{X}$ is samplable with a $d$-local sampler such that each input bit affects at most $c$ output bits, the support of $\bm{X}$ is within $\binom{[n]}{\alpha n}$, and $n/(cd)^2 = \omega(2^{cd})$. Then $\Delta(\bm{X}, \bm{U}_{\alpha n}^n) = 1-o(1)$.
\end{lemma}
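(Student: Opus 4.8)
The plan is to show the stronger statement that $\bm X$ is concentrated (up to $o(1)$ of its mass) on a set of only $o\!\left(\binom{n}{\alpha n}\right)$ strings. Since $\bm U_{\alpha n}^n$ is uniform over the $\binom{n}{\alpha n}$-element slice and $\operatorname{supp}(\bm X)\subseteq\binom{[n]}{\alpha n}$, this gives the conclusion: if $\Pr[\bm X\in G]\ge 1-\epsilon$ with $|G|\le\epsilon\binom{n}{\alpha n}$, then the event $[\,\bm x\notin G\,]$ separates the two distributions by at least $1-2\epsilon$, so $\Delta(\bm X,\bm U_{\alpha n}^n)=1-o(1)$. (This is the same ``the source cannot fill the slice'' idea that underlies the error-correcting-code bound in \Cref{sec:ECC}, now with a one-sided-error hypothesis replacing the image hypothesis.)

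First I would dispose of degenerate output bits: if $\omega(1)$ output bits are identically $0$, then with probability $1-o(1)$ a draw of $\bm U_{\alpha n}^n$ has a $1$ in one of those coordinates while $\bm X$ never does, so $\Delta=1-o(1)$ and we are done (symmetrically for identically-$1$ bits). Hence we may assume all but $O(1)$ output bits are non-constant. Writing $\bm X=f(\bm Y)$ with $\bm Y\sim\{0,1\}^m$, let $N_i\subseteq[m]$ be the $\le d$ inputs that $\bm X_i$ depends on and let $M(i):=\{\,j: N_j\cap N_i\ne\emptyset\,\}$ be the \emph{window} of $i$, so $i\in M(i)$ and $|M(i)|\le cd$. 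In the ``window-intersection'' graph on $[n]$ ($i\sim i'$ iff $M(i)\cap M(i')\ne\emptyset$) the maximum degree is at most $(cd)^2$, so greedily it contains an independent set $Q$ of non-constant bits with $|Q|\ge\Omega\!\left(n/(cd)^2\right)$, which by hypothesis is $\omega(2^{cd})$. Since the $M(i)$ ($i\in Q$) are pairwise disjoint and $i\in M(i)$, the input sets $N_i$ ($i\in Q$) are pairwise disjoint as well, so $\{\bm X_i\}_{i\in Q}$ are mutually independent and each has $\Pr[\bm X_i=1]\in[2^{-d},1-2^{-d}]$; in particular (using $c\ge 2$, the case $c=1$ forcing $\bm X$ deterministic being trivial) $W:=\sum_{i\in Q}\bm X_i$ has variance $\omega(1)$.

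Next I would extract the structural consequence of one-sidedness, in the spirit of the proof of \Cref{prop:input-output-locality}. Because $|\bm X|=\alpha n$ is deterministic and resetting $\bm Y_{N_i}$ only changes bits inside $M(i)$, the weight $|\bm X_{M(i)}|$ does not depend on $\bm Y_{N_i}$. More usefully, condition on an arbitrary assignment $\rho$ to the inputs \emph{outside} $J:=\bigcup_{i\in Q}N_i$. For $j\in M(i)$ one checks $N_j\cap J\subseteq N_i$ (a partner of $i$ in $M(i)$ cannot meet $N_{i'}$ for another $i'\in Q$, else $i\in M(i)\cap M(i')$), so conditioned on $\rho$ the block $\bm X_{M(i)}$ is a function of $\bm Y_{N_i}$ alone, and distinct $i$'s use disjoint blocks. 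Thus $\bm X\mid\rho$, restricted to $\bigcup_{i\in Q}\bigl(\{i\}\cup B_i(\rho)\bigr)$ where $B_i(\rho)\subseteq M(i)\setminus\{i\}$ is the set of still-live partners, is a product of independent ``gadgets'' whose weights sum to the constant $\alpha n$ minus the $\rho$-fixed part; a sum of independent nonnegative integer variables that is a.s.\ constant forces each summand constant, so each gadget weight $k_i(\rho)$ is pinned, and since $\bm X_i\mid\rho$ is still non-constant we get $B_i(\rho)\ne\emptyset$ and $1\le k_i(\rho)\le|B_i(\rho)|$. Hence, conditioned on $\rho$, each gadget is supported on a \emph{proper} slice of a cube on at most $cd$ coordinates, and the gadgets are independent across $i\in Q$.

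The punchline — which I expect to be the main obstacle — is to convert ``$\omega(2^{cd})$ independent, entropy-deficient gadgets (each a union of few slices of a small subcube rather than a full subcube)'' into the claimed concentration on $o\!\left(\binom{n}{\alpha n}\right)$ strings. The intended route is a counting/AEP estimate analogous to the pair-counting step of the majority lower bound: writing $D=\bigcup_{i\in Q}M(i)$, one has $\operatorname{supp}(\bm X)\subseteq\prod_{i\in Q}\operatorname{supp}(\bm X_{M(i)})\times\{0,1\}^{[n]\setminus D}$, each $\operatorname{supp}(\bm X_{M(i)})$ is a union of at most $cd+1$ slices of $\{0,1\}^{M(i)}$ whereas a uniform slice of $\{0,1\}^n$ projects onto all of $\{0,1\}^{M(i)}$, and (by Chernoff over the independent gadgets) $\bm X\mid\rho$ is further concentrated on a $(1-o(1))$-fraction of that product; pushing the entropy bookkeeping through the $\omega(2^{cd})$ disjoint windows should give a $G$ of size $o\!\left(\binom{n}{\alpha n}\right)$ carrying $1-o(1)$ of the mass. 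Making this tight — in particular handling windows that are too small for the ``few slices vs.\ full subcube'' gain to be individually visible, which is exactly where the quantitative form $n/(cd)^2=\omega(2^{cd})$ must be used rather than a weaker bound — is the technical heart, and parallels the counting arguments in \Cref{sec:ECC} and \Cref{sec:majority}.
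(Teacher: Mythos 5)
Your setup is sound and overlaps the paper's: you greedily extract $\Omega(n/(cd)^2)=\omega(2^{cd})$ output bits with pairwise disjoint windows, and your observation that, conditioned on any assignment $\rho$ to the inputs outside $\bigcup_{i\in Q}N_i$, each block weight $|\bm{X}_{M(i)}|$ is pinned to some $k_i(\rho)$ is a correct (and even stronger-than-needed) refinement of the one-sidedness trick from \Cref{prop:input-output-locality}. But the proof is not complete: the decisive step --- turning ``$\omega(2^{cd})$ entropy-deficient gadgets'' into mass concentration on $o\bigl(\binom{n}{\alpha n}\bigr)$ strings --- is only conjectured (``should give''), and you yourself flag it as the main obstacle. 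As sketched it does not go through, because everything you established is \emph{per-$\rho$}: the pinned weights $k_i(\rho)$ and live sets $B_i(\rho)$ change with $\rho$, and $\bm{X}$ is a mixture over an unbounded number of outside assignments. Per-$\rho$ typical sets therefore cannot simply be unioned, pointwise likelihood lower bounds degrade by the factor $\Pr[\rho]$, and unconditionally the blocks need not carry the bulk of $H(\bm{X})$ (the bits outside $D$ are fixed only after conditioning), so the entropy bookkeeping against $\log\binom{n}{\alpha n}$ is exactly the hard part you leave open. The independence/variance remark about $W=\sum_{i\in Q}\bm{X}_i$ is never used.

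The paper finishes much more cheaply by extracting $\rho$-\emph{independent}, probability-one consequences of one-sidedness and exhibiting a single distinguishing event, rather than proving concentration. For each selected bit $i$: if both $\Pr[\bm{X}_i=0]>0$ and $\Pr[\bm{X}_{N(i)}=1^{N(i)}]>0$, then (as in \Cref{prop:input-output-locality}) one can take a witness with $N(i)$ all ones and resample the inputs of $i$ to flip it to $0$, producing an output of weight below $\alpha n$, contradicting the support assumption. Hence every selected bit satisfies either $\bm{X}_i\equiv 1$ or $\Pr[\bm{X}_{N(i)}=1^{N(i)}]=0$. If at least half the selected bits are of the first kind ($I$), the event $\bm{X}_I=1^I$ has probability $1$ under $\bm{X}$ but probability $2^{-\Omega(\ell)}$ under $\bm{U}_{\alpha n}^n$; otherwise, the event ``no $N(i)$, $i\in J$, is all ones'' has probability $1$ under $\bm{X}$ but probability at most $(1-2^{-\Omega(cd)})^{\ell/2}=o(1)$ under the slice, since the $\ell/2=\omega(2^{cd})$ neighborhoods are disjoint and each is all ones with probability $2^{-O(cd)}$. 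Either way $\Delta(\bm{X},\bm{U}_{\alpha n}^n)=1-o(1)$. If you want to salvage your write-up, replace the concentration/AEP plan by this dichotomy and distinguishing event; your pinned-weight structure contains it as a special case but is more than the lemma requires.
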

\begin{proof}
    For each output bit $i \in [n]$ of $\bm{X}$, let $N(i) \subseteq [n]$, the \emph{neighborhood} of $i$, be the set of output bits that share an affecting input bit with $i$. By assumption, $|N(i)| \le cd$. Let us greedily choose a set of output bits with disjoint neighborhoods: $t_1 = 1$, and for $j > 1$, $t_j \in [n]$ is a bit such that $N(t_j) \cap (N(t_1) \cup \dots \cup N(t_{j-1})) = \emptyset$. For each output bit $i$ there are at most $(cd)^2$ output bits $j$ for which $N(i) \cap N(j) \neq \emptyset$, so the greedy process yields $\ell \ge n/(cd)^2$ bits.
    
    Suppose that there is a bit $i \in \{t_1, \dots, t_\ell\}$ such that $\Pr[\bm{X}_i = 0] > 0$ and $\Pr[\bm{X}_{N(i)} = 1^{N(i)}] > 0$. Then we use the approach from \Cref{prop:input-output-locality}: let $x \in \binom{[n]}{\alpha n}$ be a string in the support of $\bm{X}$ such that $x_{N(i)} = 1^{N(i)}$, and let $\rho$ be the input bits that yield $x$. Since $\Pr[\bm{X}_i = 0] > 0$, we can change the bits of $\rho$ affecting the $i$th output bit such that its value switches to $0$. Denote the resulting input by $x'$. Observe then that $x_{[n] \setminus N(i)} = x'_{[n] \setminus N(i)}$, since $N(i)$ is the set of output bits that are affected by the input bits affecting the $i$th bit. Then $|x'| < |x|$, hence $x' \not\in \binom{[n]}{\alpha n}$, so it does not lie in the support of $\bm{X}$, which is a contradiction.
    
    Now let us analyze the case when there are no bits satisfying the condition.
    Let $I \subseteq \{t_1,\dots,t_\ell\}$ consist of those output bits for which $\Pr[\bm{X}_i = 0] = 0$. If $|I| \ge \ell/2$ then
    \begin{multline*}        
     \Delta(\bm{X}, \bm{U}_{\alpha n}^n) \ge \left|\Pr[\bm{X}_I = 1^I] - \Pr[(\bm{U}_{\alpha n}^n)_I = 1^I]\right| = 1 - \binom{n - |I|}{\alpha n} \binom{n}{\alpha n}^{-1}\\
     = 1 - \prod_{j=0}^{\alpha n - 1} \frac{n - |I| - j}{n - j} \ge 1 - \left(1-\frac{|I|}{n-\alpha n  + 1}\right)^{\alpha n} \ge 1 - 2^{-\Omega(\frac{\alpha}{1-\alpha} |I|)} = 1-2^{-\Omega(\ell)}.
     \end{multline*}
     Since $\ell = \Omega(n/(cd)^2) = \omega(1)$, in this case $\Delta(\bm{X}, \bm{U}_{\alpha n}^n) = 1 - o(1)$.

    Now suppose that $|I| \leq \ell/2$, and let $J = \{t_1,\ldots,t_\ell\} \setminus I$. Assume that all bits in $J$ satisfy $\Pr[\bm{X}_{N(i)} \neq 1^{N(i)}] = 1$. Let us compute the probability of this event for $\bm{U}_{\alpha n}^n$:
    \begin{multline*}
        \Pr[(\bm{U}_{\alpha n}^n)_{N(i)} \neq 1^{N(i)}] = 1 - \binom{n - |N(i)|}{\alpha n}\binom{n}{\alpha n}^{-1} = 1 - \prod_{j=0}^{\alpha n - 1} \frac{n - |N(i)| - j}{n - j}\\
        \le 1 - \left(1 - \frac{|N(i)|}{n}\right)^{\alpha n} \le 1 - 2^{-\Omega(\alpha |N(i)|)} = 1 - 2^{-\Omega(cd)}.
    \end{multline*} 
    Therefore
    \[ \Pr[(\bm{U}_{\alpha n}^n)_{N(i)} \neq 1^{N(i)} \text{ for all } i \in J] = \prod_{i \in J} (1-2^{-\Omega(cd)}) \le (1-2^{-\Omega(cd)})^{\ell / 2}.\]
    Since $\ell/2 = n/(2(cd)^2) = \omega(2^{cd})$, we get the desired lower bound on the statistical distance in this case as well. 
\end{proof}

\begin{corollary}
\label{cor:switching-network-lb}
    Let $\alpha \in (0,1)$ be a constant.
    Any switching network that generates a distribution that is $1-\Omega(1)$ close to $\bm{U}_{\alpha n}^n$ has depth $\Omega(\log \log n)$.
\end{corollary}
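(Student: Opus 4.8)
The plan is to obtain \Cref{cor:switching-network-lb} as a direct consequence of \Cref{lem:switching-network-to-generator} and \Cref{lem:switching-network-lb}. Fix the constant $\alpha$, and let $\epsilon > 0$ be the constant for which the switching network of depth $d$ produces a distribution $\bm{X}$ over $\binom{[n]}{\alpha n}$ with $\Delta(\bm{X},\bm{U}_{\alpha n}^n) \le 1 - \epsilon$. By \Cref{lem:switching-network-to-generator}, $\bm{X}$ is sampled by a decision depth $d$ forest in which every input bit affects at most $2^d$ output bits, and whose support is contained in $\binom{[n]}{\alpha n}$. Since a depth-$d$ decision tree queries at most $2^d - 1$ distinct variables, this sampler is $2^d$-local. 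We are therefore in a position to invoke \Cref{lem:switching-network-lb} with both the locality bound and the fan-out bound equal to $2^d$.

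Next I would verify the hypothesis of \Cref{lem:switching-network-lb}, which with these parameters reads $n / (2^d \cdot 2^d)^2 = \omega(2^{2^d \cdot 2^d})$, i.e.\ $n / 16^{d} = \omega(2^{4^{d}})$. Arguing by contraposition, suppose $d \le \delta \log_2 \log_2 n$ for a sufficiently small absolute constant $\delta$. Then $4^{d} \le (\log_2 n)^{2\delta}$, so $2^{4^{d}} \le 2^{(\log_2 n)^{2\delta}} = n^{o(1)}$, and likewise $16^{d} \le (\log_2 n)^{4\delta} = n^{o(1)}$; hence $n / 16^{d} = n^{1 - o(1)} = \omega(n^{o(1)}) = \omega(2^{4^d})$, so the hypothesis holds. \Cref{lem:switching-network-lb} then yields $\Delta(\bm{X}, \bm{U}_{\alpha n}^n) = 1 - o(1)$, so for all large enough $n$ this distance exceeds $1 - \epsilon$, contradicting the assumption. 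Consequently $d = \Omega(\log \log n)$.

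I do not anticipate a genuine obstacle: the argument is a black-box combination of the two preceding lemmas together with a short asymptotic check. The only point requiring care is the exponential gap between decision depth and locality --- a depth-$d$ decision tree can depend on $\Theta(2^d)$ variables, and dually a single coin toss in a depth-$d$ switching network can influence $\Theta(2^d)$ output positions --- which is precisely what caps the lower bound at $\log\log n$; a polynomial rather than exponential relationship here would give a much stronger bound from the same computation.
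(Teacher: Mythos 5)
Your proposal is correct and follows essentially the same route as the paper: the paper's proof of \Cref{cor:switching-network-lb} is likewise a black-box combination of \Cref{lem:switching-network-to-generator} and \Cref{lem:switching-network-lb}, with the asymptotic verification of the hypothesis $n/(cd)^2 = \omega(2^{cd})$ left implicit. Your explicit conversion of decision depth $d$ to locality $2^d$ (the paper applies the lemma with $c = 2^d$ and is terser about which locality parameter is used) only changes constants in the exponent and yields the same $\Omega(\log\log n)$ bound, so it is a faithful, if slightly more careful, rendering of the paper's argument.
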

\begin{proof}
Consider a switching network of depth $d$ that generates a distribution that is $1-\Omega(1)$ close to $\bm{U}_{\alpha n}^n$. \Cref{lem:switching-network-to-generator} translates it to a decision depth $d$ sampler for $\bm{U}_{\alpha n}^n$ such that each input bit affects at most $c = 2^d$ output bits. The sampler is $1-\Omega(1)$ close to $\bm{U}_{\alpha n}^n$, and supported within $\binom{[n]}{\alpha n}$. The result now follows from \Cref{lem:switching-network-lb}.
\end{proof}

\ifanonymous
\else
\paragraph{Acknowledgements.} This project has received funding from the European Union's Horizon 2020
research and innovation programme under grant agreement No~802020-ERC-HARMONIC. 

This work was supported by the Swiss State Secretariat for Education, Research and Innovation (SERI)
under contract number MB22.00026.

We thank Mika G\"o\"os, Aleksandr Smal, and Anastasia Sofronova for fruitful discussions and suggestions.
\fi

\printbibliography

\appendix

\section{Exposition of the \texorpdfstring{$\Omega(\log^* n)$}{log star n} lower bound for Majority}
\label{sec:majority-log-star}
\emph{The following is an exposition of the proof of~\cite[Theorem 5.1]{BDKMSSTV13}.}

Suppose that $n$ is odd, and consider a locality~$c$ proof system for the vectors containing more $1$s than $0$s, that is, having Hamming weight at least $(n+1)/2$. We can assume that $c \geq 2$.\footnote{Alternatively, change $B(0)$ or redefine $d$-influential as having \emph{more than} $d$ output bits depending on it.}

The proof system has \emph{inputs} and \emph{outputs}. The number of outputs is $n$, and each one depends on at most $c$ input bits. An input bit is \emph{$d$-influential} if at least $d$ output bits depend on it. There are at most $cn/d$ many $d$-influential input bits.

Let $1 = B(0) < B(1) < \cdots < B(c+1)$ be a sequence of constants (depending on $c$ but not on $n$), and let $d(\ell) = cn/B(\ell)$, so that $cn = d(0) > d(1) > \cdots > d(c+1) = \Omega(n)$. Thus there are at most $B(\ell)$ many input bits which are $d(\ell)$-influential.

We will construct a sequence of sets $\emptyset = R_0 \subseteq R_1 \subseteq \cdots \subseteq R_c \subseteq [n]$ with the following property: \emph{If $\rho$ is a truth assignment to the $d(\ell)$-influential variables which extends to a complete truth assignment setting all coordinates in $R_\ell$ to zero, then for each coordinate $i \notin R_\ell$, $\rho$ also extends to a complete truth assignment setting coordinate $i$ to zero.}\footnote{In the paper, $\rho$ extends to a complete truth assignment setting both $R_\ell$ and $i$ to zero.}

Given $R_{\ell - 1}$, here is how we construct $R_\ell$. We start with $R \coloneqq R_{\ell - 1}$, and will potentially add more output coordinates to $R$. At any point, suppose that there is a truth assignment $\rho$ to the $d(\ell)$-influential variables which (i) extends to a complete truth assignment setting all coordinates in $R$ to zero, and (ii) for some coordinate $i \notin R$, any complete truth assignment extending $\rho$ sets coordinate $i$ to one. If that happens, then we add $i$ to $R$. Henceforth, $\rho$ will not come up again, since no complete truth assignment extending $\rho$ sets $i$ to one, and $i$ belongs to $R$. Eventually, there is no such ``bad'' truth assignment, and we set $R_\ell \coloneqq R$. 

If $i \in R_\ell$ then there exists some $r \leq \ell$, some $R \subseteq R_r$, and some truth assignment $\rho_i$ to the $d(r)$-influential variables, such that $\rho_i$ extends to a complete truth assignment setting all coordinates in $R$ to zero, and any complete truth assignment extending $\rho_i$ sets $i$ to one.
If $j \in R_\ell$ was added after $i$ then the truth assignment $\rho_j$ extends to a complete truth assignment which sets all coordinates in $R \cup \{i\}$ to zero. In particular, $\rho_j$ doesn't extend $\rho_i$ (as a special case, $\rho_j \neq \rho_i$). It follows that if we extend each $\rho_i$ arbitrarily to a truth assignment to the $d(\ell)$-influential variables, then the resulting assignments will all be different. Consequently,
\[
 |R_\ell| \leq 2^{B(\ell)}.
\]
For large enough $n$, this will be at most $\frac{n-1}{2}$.

We show below that for a proper choice of parameters, there is an output coordinate $i$ which satisfies the following, for all $\ell \in \{0,\ldots,c\}$: \emph{$i \notin R_\ell$, and all inputs to $i$ which are also inputs to $R_\ell$ are $d(\ell+1)$-influential.} We will show that for each $\ell \in \{0,\ldots,c\}$, $i$ has an input which is $d(\ell+1)$-influential but not $d(\ell)$-influential. For different $\ell$ these inputs are different (since an input which is not $d(\ell)$-influential is also not $d(r)$-influential for all $r < \ell$), and so $i$ depends on $c+1$ inputs, which is impossible.

Let $\ell \in \{0,\ldots,c\}$. Let us show that $i$ has an input which is $d(\ell+1)$-influential but not $d(\ell)$-influential. We do this by contradiction: suppose that all $d(\ell+1)$-influential inputs of $i$ are $d(\ell)$-influential. By assumption, $i \notin R_\ell$ and all inputs to $i$ which are also inputs to $R_\ell$ are $d(\ell)$-influential. Let $N(i)$ consist of $i$ together with all other output bits which share some non-$d(\ell)$-influential bit with $i$. All of these shared input bits are in fact non-$d(\ell+1)$-influential, and so
\[
 |N(i)| \leq 1 + cd(\ell+1) \leq 1 + cd(1) = 1 + \frac{c^2n}{B(1)}.
\]
If $B(1) > 2c^2$ then $|N(i)| < 1 + \frac{n}{2}$ and so $|N(i)| \leq (n+1)/2$. Since $|R_c| \leq \frac{n-1}{2}$, the proof system generates some vector $v$ of weight $(n+1)/2$ in which all coordinates of $R_c$ are zero and all coordinates of $N(i)$ are one. Consider an arbitrary complete truth assignment $\alpha$ which generates $v$, and let $\rho$ be its restriction to the $d(\ell)$-influential coordinates. Since $i \notin R_\ell$, by construction, we know that $\rho$ extends to some complete truth assignment $\beta$ which sets $i$ to zero. Now consider the following complete truth assignment:
\[
 \gamma(j) =
 \begin{cases}
 \rho(j) & \text{if $j$ is $d(\ell)$-influential}, \\
 \beta(j) & \text{if $j$ is not $d(\ell)$-influential and influences $N(i)$}, \\
 \alpha(j) & \text{if $j$ is not $d(\ell)$-influential and doesn't influence $N(i)$}.
 \end{cases}
\]
Since $\alpha,\beta$ both extend $\rho$, $\gamma$ agrees with them on the $d(\ell)$-influential variables. Therefore the output generated by $\gamma$ agrees with that generated by $\alpha$ except for the coordinates in $N(i)$, which could change from one to zero. Moreover, the $i$'th output of $\gamma$ agrees with the $i$'th output of $\beta$, namely, it is zero. Therefore the output generated by $\gamma$ has Hamming weight strictly less than $(n+1)/2$, which is impossible.

It remains to show that there exists an output bit $i$ such that $i \notin R_c$, and for all $\ell \in \{0,\ldots,c\}$, all inputs to $i$ which are also inputs to $R_\ell$ are $d(\ell+1)$-influential. We do this by giving an upper bound on the number of bad output bits. An output bit is bad if it either belongs to $R_c$, or for some $\ell \in \{0,\ldots,c\}$, there is a joint input of $i$ and $R_\ell$ which is not $d(\ell+1)$-influential. If $i$ is bad due to some $\ell$, then there must be some non-$d(\ell+1)$-influential input of $R_\ell$ which is an input of $i$. Therefore the number of bad inputs is at most
\[
 |R_c| + \sum_{\ell = 0}^c |R_\ell| \cdot c \cdot d(\ell+1) \leq 2^{B(c)} + n \cdot c^2 \sum_{\ell = 0}^c \frac{2^{B(\ell)}}{B(\ell+1)}.
\]
For a judicious choice of the sequence $B(\ell)$, the coefficient of $n$ will be strictly less than~$1$, and so for large enough~$n$, there are fewer than $n$ bad inputs.

One choice for the sequence $B(\ell)$ is
\[
 B(\ell+1) = 2^{B(\ell)} \cdot 2c^2(c+1).
\]
In particular, $B(1) = 4c^2(c+1) > 2c^2$, which was needed above.
For this choice of $B(0),\ldots,B(c)$, the number of bad inputs is at most
\[
 2^{B(c)} + \frac{n}{2},
\]
which less than $n$ if $2^{B(c)} \leq \frac{n-1}{2}$, a condition which was required at a different step of this proof.

Roughly speaking, $B(\ell+1) \approx 2c^3 \cdot 2^{B(\ell)}$, and so $B(c) \approx 2 \uparrow \uparrow c$. Therefore $2^{B(c)} \leq \frac{n-1}{2}$, and so the argument works, for $c \leq \kappa \log^* n$, for an appropriate constant $\kappa$.
\end{document}